\newif\ifproofs
\pgfplotsset{compat=1.11}
\newtheorem{theorem}{Theorem}[section]
\newtheorem{proposition}{Proposition}[section]
\newtheorem{lemma}{Lemma}[section]
\newtheorem{assumption}{Assumption}
\theoremstyle{definition}
\newtheorem{definition}{Definition}[section]
\newtheorem*{notation}{Notations}
\newtheorem{remark}{Remark}[section]
\Crefname{corollary}{Corollary}{Corollaries}
\Crefname{equation}{Eq.}{Eqs.}
\Crefname{figure}{Figure}{Figures}
\Crefname{tabular}{Tab.}{Tabs.}
\Crefname{table}{Tab.}{Tabs.}
\Crefname{lemma}{Lemma}{Lemmas}
\Crefname{theorem}{Theorem}{Theorems}
\Crefname{definition}{Definition}{Definitions}
\Crefname{section}{Section}{Sections}
\Crefname{proposition}{Proposition}{Propsitions}
\Crefname{assumption}{Assumption}{Assumptions}
\Crefname{example}{Example}{Examples}
\newcommand{\norm}[1]{\left\Vert #1\right\Vert}
\newcommand{\tnorm}[1]{\textstyle\left\Vert #1\right\Vert}
\newcommand{\txt}{\textstyle}
\newcommand{\xx}{\bm{x}}
\newcommand{\yy}{\bm{y}}
\newcommand{\zz}{\bm{z}}
\newcommand{\rr}{\bm{r}}
\newcommand{\hf}{\hat{f}}
\newcommand{\g}{\mathbf{g}}
\newcommand{\h}{\mathbf{h}}
\newcommand{\xag}{X}
\newcommand{\xxag}{\bm{\xag}}
\newcommand{\yag}{Y}
\newcommand{\yyag}{\bm{\yag}}
\newcommand{\zag}{Z}
\newcommand{\zzag}{\bm{\zag}}
\newcommand{\rit}{\mathbb{R}}
\newcommand{\nit}{\mathbb{N}}
\newcommand{\X}{\mathcal{X}}
\newcommand{\FX}{\widetilde{\X}} 
\newcommand{\Sxag}{\overline{\X}} 
\newcommand{\Y}{\mathcal{Y}}
\newcommand{\T}{\mathcal{T}}
\newcommand{\I}{\mathcal{I}}
\newcommand{\G}{\mathcal{G}}
\newcommand{\GA}{\mathcal{G}(A)}
\newcommand{\M}{\mathcal{M}}
\newcommand{\NE}{\mathrm{NE}}
\newcommand{\GNE}{\mathrm{GNE}}
\newcommand{\Gna}{G} 
\newcommand{\GnaA}{G(A)}
\newcommand{\ww}{\bm{w}} 
\newcommand{\wwag}{\bm{W}} 
\newcommand{\cc}{\bm{c}} 
\newcommand{\diamX}{M} 
\newcommand{\Bdg}{B_\mathbf{g}}
\newcommand{\Bdf}{B_\mathbf{f}}
\renewcommand{\th}{\theta} 
\renewcommand{\t}{t}
\renewcommand{\i}{i}
\newcommand{\iti}{_{\i,\t}}
\newcommand{\thti}{_{\th,\t}}
\def\ub{\underline{b}}
\def\ob{\overline{b}}
\newcommand{\hxx}{\hat{\xx}}
\newcommand{\hxxag}{\hat{\xxag}}
\newcommand{\hyy}{\hat{\yy}}
\newcommand{\hyyag}{\hat{\yyag}}
\newcommand{\hz}{\hat{\zz}}
\newcommand{\hzz}{\bm{\hat{z}}}
\newcommand{\sxx}{\xx^*}  
\newcommand{\sxxag}{\xxag^*}
\newcommand{\syy}{{\yy^*}} 
\newcommand{\sz}{{\zz^*}}
\newcommand{\eqd}{\triangleq}
\newcommand{\dth}{\,\mathrm{d}\th}
\newcommand{\nn}{\bm{n}}
\newcommand{\snu}{\nu} 
\newcommand{\esnu}{^{\snu}}
\newcommand{\dset}{\delta} 
\newcommand{\mdset}{\overline{\delta}} 
\newcommand{\duti}{d} 
\newcommand{\mduti}{\overline{d}} 
\newcommand{\mmu}{\overline{\mu}} 
\newcommand{\stgccvut}{{\alpha}} 
\newcommand{\disth}{\sigma} 
\newcommand{\cutp}{\upsilon} 
\renewcommand{\ss}{\bm{s}}
\newcommand{\A}{\bm{A}}
\newcommand{\bb}{\bm{b}}
\newcommand{\us}{\underline{s}}
\newcommand{\os}{\overline{s}}
\newcommand{\dimp}{K} 
\newcommand{\mld}{\overline{\lambda}}
\newcommand{\ld}{\lambda}
\newcommand{\rlt}{\text{ri}\,}
\newcommand{\rbd}{\text{rbd}\,}
\newcommand{\spa}{\text{span}\,}
\newcommand{\aff}{\text{aff}\,}
\newcommand{\supk}{^{(k)}}
\newcommand{\bpsi}{\overline{\psi}}
\title{
Nonsmooth Aggregative Games with Coupling  Constraints\\and Infinitely Many Classes of Players 
}
\author{{Paulin Jacquot\thanks{Paulin Jacquot is with  EDF Lab, Inria and  Ecole polytechnique, CNRS. {\tt\small paulin.jacquot@polytechnique.edu} } ~and Cheng Wan \thanks{Cheng Wan is with LMO, Universit\'e Paris-Sud; Inria Paris and RIIS, SUFE. {\tt\small cheng.wan.2005@polytechnique.org } (Corresponding author). ~~~{This work was supported in part by the PGMO foundation.}        }
}
%
}
\begin{document}

\maketitle
\begin{abstract}
%
After defining a pure-action profile in a nonatomic aggregative game, where players have specific continuous pure-action sets and nonsmooth cost functions, as a square-integrable function, we characterize a Wardrop equilibrium (WE) as a solution to an infinite-dimensional generalized variational inequality. 
We show the existence of variational WE in monotone nonatomic aggregative games with coupling constraints. Its uniqueness is proved for strictly or aggregatively strictly monotone cases. We then show that, for a sequence of finite-player aggregative games with aggregative constraints, if the players' pure-action sets converge to those of a strongly (resp. aggregatively strongly) monotone nonatomic game, and the aggregative constraints in the finite-player games converge to that of the nonatomic game, then a sequence of variational Nash equilibria in these finite-player games converge to the variational WE in pure-action profile (resp. aggregate-action profile). Finally, we show how to construct auxiliary finite-player games for two general classes of nonatomic games.
\end{abstract}

\textbf{Keywords.} nonatomic aggregative game, coupling aggregative constraints, generalized variational inequality, monotone game, nonsmooth cost function, variational equilibrium

\section{Introduction}
This paper studies firstly the existence and uniqueness of variational Wardrop equilibrium in nonatomic aggregative games with coupling aggregative constraints, where a continuum of players have heterogeneous compact convex pure-action sets and cost functions. It then examines the convergence of a sequence of variational Nash equilibrium in auxiliary finite-player games to the variational Wardrop equilibrium.
\medskip

\noindent \textbf{Background.} 
Aggregative games form a large class of non-cooperative games. In such a game, a player's payoff is determined by her own action and the aggregate of all the players' actions \cite{Corchon1994}. The setting of aggregative games is particularly relevant to the study of nonatomic games \cite{schmeidler1973equilibrium}, games with a continuum of players. There, a player has an interaction with the other players only via an aggregate-level profile of their actions, for example, the distribution of certain actions, while she has no interest or no way to know the behavior of any particular player or the identity of the player making a certain choice.  

Nonatomic games are readily adapted to many situations in industrial engineering or public sectors where a huge number of users, such as traffic commuters and electricity consumers, are involved. 
Those users have no direct interaction with each other except through the aggregate congestion or consumption to which they are contributing simultaneously.
These situations can often be modeled as a congestion game, a special class of aggregative games, both in nonatomic version and finite-player version.
The latter, called atomic congestion game, was formally formulated by Rosenthal in 1973 \cite{rosenthal1973network}, while related research work in transportation and traffic analysis, mostly in the nonatomic version, appeared much earlier \cite{wardrop1952some,Bec56}.
 The theory of congestion games has also found numerous applications in telecommunications \cite{orda1993competitive}, distributed computing \cite{altman2002nash}, energy management \cite{atzeni2013demand}, and so on. 

The concept of equilibrium in nonatomic games is captured by the so called Wardrop equilibrium (WE) \cite{wardrop1952some}. A nonatomic player neglects the impact of her deviation on the aggregate profile of the whole population's actions, in contrast to a finite player. 
For the computation of WE, existing results are limited to particular classes of nonatomic games, such as population games \cite{MaynardSmith82,HofSig98,San11}, where finite types of players are considered, each type sharing the same finite number of pure actions and the same payoff function. 
Convergence of some dynamical systems describing the evolution of pure-action distribution in the population has been established for some particular equilibria in some particular classes such as linear games \cite{TayJon78}, potential games \cite{Bec56,San00} and stable games \cite{smith84stability,HofSan09}. 
Algorithms corresponding to discretized versions of such dynamical systems for the computation of WE have been studied, in particular for congestion games \cite{Friesz1994daytoday,ZhangNa1997}. 
\medskip

\noindent \textbf{Motivation.} 
This paper is mainly motivated by two gaps in the literature on nonatomic games.

Firstly, in engineering applications of nonatomic games such as the management of traffic flow or energy consumption, individual commuters or consumers often have specific choice sets due to individual constraints, and specific payoff functions due to personal preferences. 
Also, unlike for a transportation user who usually choose a single path, an electricity consumer faces to a resource allocation problem where she has to divide the consumption of a certain quantity of energy onto different time periods. Hence, her pure-action set is no longer a finite, discrete set as a commuter but a compact convex set in $\rit^T$ where $T$ is the total number of time periods. 
 Fewer results exist for the computation of WE for the case where players have infinitely many different types (i.e. action sets and payoff functions) or where they have continuous action sets. For example, most of the works in line with Schmeidler  \cite{MASCOLELL1984,rath1992direct,khan1997continuum,Carmona2009large} use fixed-point theorems to prove the existence of WE, though in fairly general settings. Besides, most of the existing work assumes smooth cost functions of players which is somewhat strong.

Secondly, in the above-mentioned applications of aggregative games, coupling  constraints, especially those at aggregative level, commonly exist \cite{gramm2017}. 
 Examples are  capacity constraints of the network or power-grid, and ramping constraints on the variation of total energy consumption between time periods. 
  In this regard, there are even fewer studies in game-theoretical modeling of nonatomic game. However, the presence of coupling constraints adds non trivial difficulties in the analysis of WE and their computation. Indeed, an appropriate definition of equilibrium is already not obvious. 
  An analog to the so-called generalized Nash equilibrium \cite{harker1991gne} for finite-player games does not exist for nonatomic games because a nonatomic player's behavior has no impact on the aggregative profile. 
  Moreover, dynamical systems and algorithms used to compute Wardrop equilibria in population games cannot be straightforwardly extended to this case. 
  Indeed, in these dynamics and algorithms, players adapt their strategies unilaterally in their respective strategy spaces, which can well lead to a new strategy profile violating the coupling constraint.

In view of these two gaps, the main objective of this paper is to provide a model of nonatomic aggregative games with \emph{infinitely many player-specific, compact convex pure-action sets and infinitely many player-specific nonsmooth payoff functions}, 
then introduce a general form of coupling aggregative constraints into these games, choose an appropriate equilibrium notion, study their properties such as existence and uniqueness and, finally, their computation. 
%
\medskip

\noindent \textbf{Main results.}  
%
After defining a pure-action profile in a nonatomic aggregative game where players have specific compact convex pure-action sets lying in $\rit^T$, and specific cost functions, convex  in their own action variable but nonsmooth, 
 \Cref{thm:agg_wardrop} characterizes a Wardrop equilibrium (WE) as a solution to an infinite-dimensional generalized variational inequality (IDGVI). 

\Cref{prop:exist_ve_inf} proves the existence of WE and variational Wardrop equilibrium (VWE), equilibrium notion in the presence of coupling constraints defined by a similar IDGVI, in monotone nonatomic games by showing the existence of solutions to the characteristic IDGVI. 
Then, \Cref{prop:unique_vwe} shows the uniqueness of WE and VWE in case of strictly monotone or aggregatively strictly monotone games. The definition of monotone games is an extension of the stable games \cite{HofSan09}, also called dissipative games \cite{sorin2015finite}, in population games with a finite types of nonatomic players to the case with infinitely many types.

\Cref{thm:converge_with_u} is the main result of this paper. 
It shows that, for a sequence of finite-player aggregative games, if the players' pure-action sets converge to those of a strongly monotone (resp.  aggregatively strongly monotone) nonatomic aggregative game, and if the aggregative constraints in these finite-player games converge to the aggregative constraint in the nonatomic game, then a sequence of so-called variational Nash equilibria (VNE) in these finite-player games converge, in pure-action profile (resp. in aggregate-action profile), to the VWE.
We provide an upper bound on the distance between the VNE and VWE, specified as a function of the parameters of the finite-player and nonatomic games.

This result allows the construction of an auxiliary sequence of finite-player games with finite-dimensional VNE so as to approximate the infinite-dimensional VWE in the special class of strongly or aggregatively strongly monotone nonatomic aggregative games, with or without aggregative constraints. Since there are much more results \cite{facchinei2007finite} on the resolution of finite-dimensional variational inequalities characterizing  VNE, we can therefore obtain an approximation of the VWE with arbitrary precision.  

Finally, we show how to construct an AAS for two general classes of nonatomic games.
\medskip

\noindent \textbf{Related work.} Extensive research has been conducted on Wardrop equilibria in nonatomic congestion games via their formulation with variational inequalities \cite{MarPat2007}, while the similar characterization of Nash equilibria in atomic splittable games, where players have continuous action sets in contrast to unsplittable games where players have finite action sets, has been less studied \cite{haurie1985relationship,orda1993competitive}. 
In addition to their existence and uniqueness, the computational and dynamical aspects of equilibria as solutions to variational inequalities have also been studied  \cite{Smith1984a,ZhuMarcotte1994,ZhangNa1997,CPP2002}. 
However, in most cases, the variational inequalities involved have finite dimensions, as opposed to the case of Wardrop equilibrium in this paper. 
 Marcotte and Zhu \cite{marcotte1997equilibria} consider nonatomic players with continuous types (leading to a characterization of the Wardrop equilibrium as an infinite-dimensional variational inequality) and studied the equilibrium in an aggregative game with nonatomic players differentiated through a linear parameter in their cost function.

Some results have already been given to quantify the relationship	 between Nash and Wardrop equilibria. 
Haurie and Marcotte \cite{haurie1985relationship} show that in a sequence of atomic splittable games where atomic splittable players are replaced by smaller and smaller equal-size players with constant total weight, Nash equilibria converge to the Wardrop equilibrium of a nonatomic game. 
Their proof is based on the convergence of variational inequalities corresponding to the sequence of Nash equilibria, a technique similar to the one used in this paper. Wan \cite{wan2012coalition} generalizes this result to composite games where nonatomic players and atomic splittable players coexist, by allowing the atomic players to replace themselves by players with heterogeneous sizes.

Gentile et al. \cite{gentile2017nash} consider a specific class of finite-player aggregative games with linear coupling constraints. They use the variational inequality formulations for the unique generalized Nash equilibrium and the unique generalized Wardrop-type equilibrium (which consists in letting each finite player act as if she was nonatomic) of the same finite-player game to show that, when the number of players grows, the former can be approximated by the latter. There are several differences between our model and theirs. Firstly, we consider nonatomic games with players of infinitely-many different types instead of finite-player games only. Secondly, we consider variational Nash and Wardrop equilibria instead of generalized equilibria (which does not exist in nonatomic games). In contrast to generalized equilibria, a variational equilibrium is not characterized by a best reply condition for each of the finite or nonatomic players, which makes the study of its properties much more difficult, as shown in \Cref{sec:approx_games}. Thirdly, we allow for nonsmooth cost functions and general form of coupling constraints while they consider differentiable cost functions and linear coupling constraints.
 
Milchtaich \cite{milchtaich2000generic} studies finite and nonatomic crowding games (similar to aggregative games), where players have finitely many pure actions, and shows that, if each player in an $n$-person game is replaced by $m$ identical replicas with constant total weight, pure Nash equilibria generically converge to the unique equilibrium of the limit nonatomic game as $m$ goes to infinity. His proof is not based on a variational inequality formulation.
\medskip

\noindent \textbf{Structure.} The remaining of the paper is organized as follows. \Cref{sec:atomicNonatomicDefs} recalls the definition of finite-player aggregative games with and without aggregative constraints, the notion of equilibrium in these cases and their properties. \Cref{sec:nonatomic} is dedicated to nonatomic aggregative games with and without aggregative constraints. After defining Wardrop equilibrium and variational Wardrop equilibrium, we concentrate on the special class of monotone games and show the existence and uniqueness of equilibria there via generalized infinite dimensional variational inequalities. 
In \Cref{sec:approx_games}, we give the definition of an approximating sequence of finite-player games for a nonatomic aggregative game with or without coupling constraints, and present the main theorem of the paper on the convergence of the sequence of (variational) Nash equilibria of the approximating finite-player games to the (variational) Wardrop equilibrium of the nonatomic game. 
The construction of such a sequence of approximating finite-player games is shown for two important classes of nonatomic games. 
\medskip
 
\noindent \textbf{Notations.} Vectors are denoted by a bold font (e.g. $\xx$) as opposed to scalars (e.g. $x$). 

The transpose of vector $\xx$ is denoted by  
 $\xx^\tau$. 

The closed unit ball in a metric space, centered at $\xx$ and of radius $\eta$, is denoted by $N_\eta(\xx)$. 

For a nonempty convex set $C$ in a Hilbert space $\mathcal{H}$ (over $\rit$), 
\begin{itemize}
\item  $T_C(\xx)=\{\yy\in \mathcal{H}: \yy=0 \text{ or } \exists (\xx_k)_k \text{ in } C \text{ s.t. } \xx_k\not\equiv \xx, \xx_k\rightarrow \xx, \frac{\xx_k - \xx}{\|\xx_k-\xx\|}\rightarrow \frac{\yy}{\|\yy\|}\}$ is the \emph{tangent cone} of $C$ at $\xx\in C$; 
\item $\spa C=\{\sum_{i=1}^k \alpha_i \xx_i : k\in \nit, \alpha_i\in\rit, \xx_i\in C\}$ is the \emph{linear span} of $C$;
\item  $\aff C=\{\sum_{i=1}^k \alpha_i \xx_i : k\in \nit, \alpha_i\in\rit, \sum_i \alpha_i = 1, \xx_i\in C\}$ is the  affine hull of $C$;
\item $\rlt C=\{\xx\in C: \exists \eta>0 \text{ s.t. } N_\eta(\xx) \cap \aff C \subset C\}$ is the \emph{relative interior} of $C$;
\item $\rbd C$ is the \emph{relative boundary} of $C$ in $\mathcal{H}$, i.e. the boundary of $C$ in $\spa C$.
\end{itemize}

The \emph{inner product} of two points $\xx$ and $\yy$ in any Euclidean space $\rit^T$ is denoted by $\langle \xx, \yy \rangle= \sum_{i=1}^T x_i y_i$. The $l^2$-\emph{norm} of $\xx$ is denoted by $\norm{\xx} \eqd \langle \xx, \xx \rangle^{1/2}$. 

We denote by $L^2([0,1], \rit^T)$ the Hilbert space of  measurable functions from $[0,1]$  (equipped with the Lebesgue measure $\mu$) to $\rit^T$ that are square integrable with respect to the Lebesgue measure . The \emph{inner product} of two vector functions $F$ and $G$ is denoted by $\langle F, G \rangle_2 = \int_{0}^1 \langle F(\th), G(\th) \rangle \dth$. The Hilbert space $L^2([0,1], \rit^T)$ is endowed with \emph{$L^2$-norm}: $\|F\|_{2}=\langle F, F \rangle_{2}^{1/2}$.

The distance between a point $\xx$ and a set $A$ is denoted by $d_m(\xx,A)\eqd\inf_{\yy \in A}\norm{\xx-\yy}_m$, where $m$ is omitted or is equal to $2$, depending on whether we consider an Euclidean space or  $L^2([0,1], \rit^T)$. 

Similarly, the \emph{Hausdorff distance} between two sets $A$ and $B$ is denoted by $ d_{H,m} (A, B )$, which is  defined as $\max \{\sup_{\xx\in A}d_m(\xx,B), \sup_{\yy \in B}d_m(\yy,B)\}$. 
Later, we will define new metrics indexed by $\snu$ in Euclidean spaces. The point-set distances and Hausdorff distances are defined similarly and denoted with index $\snu$.


The \emph{subdifferential}, i.e. set of \emph{subgradients} of a convex function $f$ at $\xx\in \rit^T$ in its domain $C$, which is a convex set in $\rit^T$, is denoted by $\partial f(\xx)$. Recall that if vector $\g\in \rit^{T}$ is a subgradient of $f$ at $\xx$, then for all $\zz\in C$, $f(\zz) \geq f(\xx) + \langle \g, \zz-\xx\rangle$.

For a function $(\xx,\xxag)\mapsto f(\xx,\xxag)$ of two explicit variables, convex in $\xx$, we denote by $\partial_1 f(\xx, \xxag)$ the (nonempty) subdifferential of function $f(\cdot,\xxag)$ for any fixed $\xxag$.

\section{Finite-player aggregative games} 
\label{sec:atomicNonatomicDefs}
This section recalls the definition of finite-player aggregative games with and without coupling aggregative constraints, and some notions of equilibrium in these games as well as their characterization by generalized variational inequalities. 


\begin{definition}[Finite-player aggregative game]\label{def:atomicGame} 
A \emph{finite-player aggregative game} is a non-cooperative game specified by:\\
(i) a finite set of players $\I=\{1,\dots ,I\}$,\\
(ii) a set of feasible pure actions  $\X_i \subset \rit^T$  for each player $i$, where $T\in \nit^*$ a constant,  with a typical pure action $\xx_\i=(x\iti)_{t=1}^T\in \X_\i$,\\
(iii) a cost function $\X_i\times \rit^T \rightarrow \rit: f_i(\xx_i, \sum_{j\in I}\xx_j)$ for each player $i$, so that a player's cost is determined by her own action and the aggregate action profile.
\smallskip

Denote  $\FX\eqd \X_1\times\dots \times \X_I$. The pure-action profile of the players $\xx\eqd (\xx_\i)_{\i\in\I}$ induces an aggregate-action profile load attributed to arc $t\in \T$, which is denoted by $\xxag=(\xag_t)_{t=1}^T$ where $\xag_t=\sum_{\i\in\I}x\iti$. 
Denote the set of feasible aggregate-action profiles by $\Sxag\eqd \{ \xxag \in \rit^T  : \exists \xx \in \FX \text{ s.t. }\txt\sum_{\i\in\I} \xx_\i= \xxag  \}$.

Denote the game by $\G=(\I, \FX, (f_i)_{i\in \I})$.
\end{definition}
  The following assumptions and notations are adopted in this paper. 
\begin{assumption}[Convex costs] \label{assp_convex_costs}
For each $\i\in\I$, the function $\hat{f}_i(\xx_i,\xxag_{-i}) \eqd f_i(\xx_i, \xxag)$, where  $\xxag_{-\i}=\sum_{j\in \I, j\neq i}\xx_j$, is continuous in $\xx_\i$ and in $\xxag_{-i}$, and is convex in $\xx_i$ for all $\xx_{-\i}=(\xx_j)_{j\in \I, j\neq i}\in \prod_{j\in \I, j\neq i}\X_j$. 
\end{assumption}
\begin{assumption}[Convex and compact strategy sets] \label{assp_compactness}
For each $\i\in\I$, the set $\X_\i$ is a nonempty, convex and compact subset of $\rit^T$.
\end{assumption}


Recall the definition of Nash Equilibrium in finite-player non-cooperative games. 
\begin{definition}[Nash Equilibrium ($\NE$) \cite{nash1950equilibrium} ] \label{def:NE}
A (pure) \emph{Nash equilibrium} of $\G$ is a profile of pure actions $\hxx \in \FX$ such that $\hf_\i(\hxx_\i, \xxag_{-\i}) \leq \hf_\i(\xx_\i,\xxag_{-\i})$ for all $\xx_i \in \X_i$ and all $\i\in \I$.
\end{definition}

Define a correspondence $H:\FX\rightrightarrows \rit^{IT}$ by 
\begin{equation*}
\forall \xx\in \FX, \; H(\xx) \eqd\{(\g_i)_{i\in \I}\in \rit^{IT}: \g_i \in \partial_1 \hf_\i(\xx_\i, \xxag_{-\i}) , \  \forall i\in \I\} = \prod_{i\in\I} \partial_1 \hf_\i(\xx_\i, \xxag_{-\i}) \ .
\end{equation*}
Since the cost functions are convex in players' own strategies, NE can be characterized as solutions to generalized variational inequalities (GVI) \cite{FangPeterson1982gvi}. 
\begin{proposition}[GVI formulation of NE]\label{prop:GNNash} 
Under \Cref{assp_compactness,assp_convex_costs}, $\hxx\in\FX$ is an $\NE$ of $\G$ if and only if either of the following two equivalent conditions holds:
\begin{subequations} 
\begin{align}
& \forall i\in \I, \exists\, \g_i \in \partial_1 \hf_\i(\hxx_\i, \hxxag_{-\i}) \text{ s.t. } \big\langle \g_\i, \xx_\i- \hxx_\i\big\rangle \geq 0, \; \forall \xx_\i \in\X_\i, \label{cond:ind_opt_N}\\
&\exists\, \g \in H(\hxx) \text{ s.t. } \big\langle \g , \xx - \hxx \big\rangle \geq 0,\; \forall \xx \in\FX\ .\label{cond:ind_opt_Nbis}
\end{align}
 \end{subequations}
\end{proposition}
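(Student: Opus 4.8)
The plan is to prove the chain ``$\hxx$ is an $\NE$'' $\iff$ \eqref{cond:ind_opt_N} $\iff$ \eqref{cond:ind_opt_Nbis}, reducing everything to the first-order optimality condition for minimizing a convex function over a convex set. First I would rewrite the equilibrium condition player by player: by \Cref{def:NE}, $\hxx$ is an $\NE$ exactly when, for every $i\in\I$, the point $\hxx_i$ minimizes the function $\varphi_i\eqd\hf_i(\cdot,\hxxag_{-i})$ over $\X_i$, with the aggregate of the others $\hxxag_{-i}=\sum_{j\neq i}\hxx_j$ frozen. Under \Cref{assp_convex_costs} each $\varphi_i$ is convex and continuous, and under \Cref{assp_compactness} the set $\X_i$ is convex, so each player's problem is a convex program to which I can apply its subdifferential optimality characterization.

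The core step is the equivalence, for each $i$: $\hxx_i$ minimizes $\varphi_i$ over $\X_i$ if and only if there exists $\g_i\in\partial_1\hf_i(\hxx_i,\hxxag_{-i})$ with $\langle\g_i,\xx_i-\hxx_i\rangle\geq 0$ for all $\xx_i\in\X_i$, which is precisely \eqref{cond:ind_opt_N}. The direction ``$\Leftarrow$'' is immediate from the subgradient inequality: for any $\xx_i\in\X_i$ one has $\varphi_i(\xx_i)\geq\varphi_i(\hxx_i)+\langle\g_i,\xx_i-\hxx_i\rangle\geq\varphi_i(\hxx_i)$. The direction ``$\Rightarrow$'' is the part requiring care and is the main obstacle: from optimality one gets $0\in\partial(\varphi_i+\iota_{\X_i})(\hxx_i)$, and to split this into a subgradient of $\varphi_i$ plus a normal-cone term I invoke the Moreau--Rockafellar sum rule, which applies here without any constraint qualification because $\varphi_i$ is finite and continuous. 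The delicate point is the quantifier order: one must produce a \emph{single} $\g_i$, independent of the test point $\xx_i$, for which the inequality holds for all $\xx_i$; this uniform subgradient is exactly what the sum rule (or, equivalently, a separation argument) furnishes, and the resulting inequality $\langle\g_i,\xx_i-\hxx_i\rangle\geq 0$ says that $-\g_i$ lies in the normal cone of $\X_i$ at $\hxx_i$, i.e.\ in the polar of $T_{\X_i}(\hxx_i)$. Combining with the player-wise reformulation yields ``$\hxx$ is an $\NE$'' $\iff$ \eqref{cond:ind_opt_N}.

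Finally I would pass between the player-wise form \eqref{cond:ind_opt_N} and the stacked form \eqref{cond:ind_opt_Nbis} using only the product structure $\FX=\prod_{i}\X_i$, $H(\hxx)=\prod_i\partial_1\hf_i(\hxx_i,\hxxag_{-i})$, and the bilinear identity $\langle\g,\xx-\hxx\rangle=\sum_i\langle\g_i,\xx_i-\hxx_i\rangle$. For ``\eqref{cond:ind_opt_N} $\Rightarrow$ \eqref{cond:ind_opt_Nbis}'' I assemble $\g\eqd(\g_i)_i\in H(\hxx)$ and sum the nonnegative terms. For the converse I test \eqref{cond:ind_opt_Nbis} against the single-coordinate deviations $\xx=(\hxx_1,\dots,\xx_i,\dots,\hxx_I)\in\FX$, which annihilates every term except the $i$-th and recovers \eqref{cond:ind_opt_N}. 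This last passage is routine bilinearity; essentially all of the genuine work sits in the ``$\Rightarrow$'' direction of the optimality characterization above.
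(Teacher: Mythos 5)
Your proposal is correct and follows essentially the same route as the paper: the paper likewise reduces the NE condition to the player-wise convex program and invokes the subdifferential optimality characterization (citing \cite[Proposition 27.8]{combettes2011monotone}, whose proof is exactly the Moreau--Rockafellar/normal-cone argument you spell out), then treats the passage between \eqref{cond:ind_opt_N} and \eqref{cond:ind_opt_Nbis} as immediate from the product structure. You merely fill in the details the paper delegates to the reference and to the word ``obvious.''
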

\begin{proof}
\Cref{cond:ind_opt_N} is a necessary and sufficient condition for $\hxx_\i$ to minimize the convex function $\hf_\i (., \hxxag_{-\i})$ on $ \X_\i$ (\cite[Proposition 27.8]{combettes2011monotone}). The equivalence between \eqref{cond:ind_opt_N} and \eqref{cond:ind_opt_Nbis} is obvious.
\end{proof}
\begin{remark}[Generalized VI and Generalized NE are different things]
The variational inequalities (VI) are of \emph{generalized} type here because the subdifferentials of cost functions are not necessarily singled valued. In the case that cost functions are differentiable with respect to the players' own actions, the GVI are reduced to a VI. Do not confuse with \emph{generalized NE} in \emph{generalized games} (cf. \Cref{def:general-atomicGame} ) which are characterized by \emph{(generalized-)quasi-VI}.
\end{remark}
The existence of an NE is obtained by a classical result in game theory for finite-player continuous game, since the players have convex continuous cost functions and convex compact pure-action sets. No differentiability condition is needed.
\begin{proposition}[Existence of NE, \cite{Debreu1952,Glicksberg1952,Fan1952}]
Under \Cref{assp_compactness,assp_convex_costs},  $\G$ admits an NE.
\end{proposition}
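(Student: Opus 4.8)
The plan is to prove existence through a fixed-point argument on the best-reply correspondence, which is the classical Kakutani-based route underlying the cited results of Debreu, Glicksberg and Fan. The set-valued formulation is essential here precisely because the cost functions are not assumed differentiable, so a player's set of best replies need not be a singleton; consequently one cannot use Brouwer directly and must work with correspondences.

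First I would, for each player $i$ and each rival profile $\xx_{-i}\in\prod_{j\neq i}\X_j$, define the best-reply set
\[
BR_i(\xx_{-i}) \eqd \argmin{\xx_i\in\X_i}\ \hf_i(\xx_i,\xxag_{-i}),\qquad \xxag_{-i}=\txt\sum_{j\neq i}\xx_j.
\]
By \Cref{assp_compactness} the set $\X_i$ is nonempty, convex and compact, and by \Cref{assp_convex_costs} the map $\hf_i(\cdot,\xxag_{-i})$ is continuous and convex. Weierstrass' theorem then makes $BR_i(\xx_{-i})$ nonempty; convexity of the objective makes it convex; and as the minimizer set of a continuous function over a compact domain it is closed, hence compact. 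So each $BR_i$ is a nonempty, convex, compact-valued correspondence.

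Next I would establish upper hemicontinuity. Since $\hf_i$ is jointly continuous in $(\xx_i,\xxag_{-i})$ and $\xxag_{-i}$ depends continuously (indeed linearly) on $\xx_{-i}$, Berge's maximum theorem shows that $\xx_{-i}\mapsto BR_i(\xx_{-i})$ has closed graph and is upper hemicontinuous with compact values. Forming the product correspondence $BR \eqd \prod_{i\in\I} BR_i:\FX\rightrightarrows\FX$, it inherits nonempty convex compact values and a closed graph, while $\FX=\X_1\times\dots\times\X_I$ is a nonempty compact convex subset of $\rit^{IT}$.

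Finally I would invoke Kakutani's fixed-point theorem to obtain $\hxx\in BR(\hxx)$, i.e.\ $\hxx_i\in BR_i(\hxx_{-i})$ for every $i$, which is exactly the defining condition of an NE in \Cref{def:NE}. The only genuinely technical point is the upper-hemicontinuity/closed-graph verification via Berge; everything else is a direct consequence of convexity and compactness, and no differentiability is required. An essentially equivalent alternative would skip the maximum theorem and instead apply a set-valued existence theorem for the GVI of \Cref{prop:GNNash} directly, using that the correspondence $H$ is upper hemicontinuous with nonempty convex compact values on the compact convex set $\FX$.
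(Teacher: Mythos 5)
Your proof is correct and is essentially the argument the paper relies on: the paper establishes this proposition simply by citing the classical Debreu--Glicksberg--Fan existence theorem, whose standard proof is exactly your Kakutani/Berge best-reply argument (and your closing remark about solving the GVI of \Cref{prop:GNNash} directly mirrors the route the paper actually takes later for VNE in \Cref{prop:exist_ve}). The only point worth noting is that you read \Cref{assp_convex_costs} as giving \emph{joint} continuity of $\hf_i$ in $(\xx_i,\xxag_{-i})$, which is the reading the classical theorem (via Berge) requires and the one the paper clearly intends.
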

\begin{remark}[NE is a unilateral level stability condition]\label{rm:unilateral_deviation}
The NE condition ensures stability not only in terms of a single player's behavior but also in terms of their collective welfare. Indeed, on the one hand,  condition \eqref{cond:ind_opt_N} is equivalent to $\langle \g_\i, \yy_\i \rangle \geq 0$ for all $\yy_\i\in T_{\X_\i}(\hxx_\i)$ for each $\i$, i.e. a unilaterally feasible deviation of player $i$ increases her cost; on the other hand, condition \eqref{cond:ind_opt_Nbis} is equivalent to $\langle \g, \yy \rangle \geq 0$ for all $\yy \in T_{\FX}(\hxx)$, i.e. a collectively feasible deviation of all the players increases their total costs. The two conditions are equivalent because the players have independent pure-action spaces so that $T_{\FX}(\hxx)=\prod_{\i\in \I}T_{\X_\i}(\hxx_\i)$, i.e. any collectively feasible deviation can be decomposed into unilaterally feasible deviations. This remark is important because it is no longer the case when one introduces a coupling constraint in the game.
\end{remark}

%
%


The coupling aggregative constraint considered in this paper is of the following general form: There is a nonempty, convex and compact subset $A$ of $\rit^T$, whose intersection with $\Sxag$ is not empty, such that the aggregate-action profile $\xxag \in A$. An example is $A=\{\xxag\in \rit^T_+: \underline{M}_t\leq \xag_t\leq \overline{M}_t, \forall t\in \T; a_t\leq \xag_{t+1}-\xag_t \leq b_t, \forall t\in \T\setminus\{T\}\}$.

\begin{definition}[Finite-player aggregative game with aggregative constraints]\label{def:general-atomicGame} 
 Its only difference from the game defined in \Cref{def:atomicGame}  is that, for each player $\i \in \I$, given the profile of pure actions of the others players $\yy_{-i}\in \prod_{j\in \I\setminus\{i\}}\X_j$, her feasible pure-action set becomes $\X_\i(\yy_{-\i})\eqd\{\xx_\i \in \X_i : (\xx_\i, \yy_{-i}) \in \FX(A)\}$,
where $\FX(A)$ is a  subset of $\FX$ defined by
\begin{equation*}
\FX(A)=\left\{\xx \in \FX: \xxag=\txt\sum_{\i\in\I}\xx_i \in A\right\} \ .
\end{equation*}
This game is denoted by $\GA=( \I,  \FX , A, (f_i)_{\i\in\I})$ or simply $\GA$.
\end{definition}

Finite-player non-cooperative games with coupling constraints are called generalized Nash games \cite{harker1991gne}. The extension from games to generalized games is not trivial. 
In the case with no coupling constraint, the pure-action spaces of the players are independent so that any collectively feasible deviation can be decomposed into unilaterally feasible deviation. This property does not always hold with a coupling constraint. 
To see this, we recall the following notion of generalized equilibrium in generalized games.

\begin{definition}[Generalized Nash Equilibrium (GNE), \cite{harker1991gne}]\label{def:gne}
A profile of pure actions $\hxx \in \X$ is a \emph{generalized Nash equilibrium} of $\GA$ if
\begin{equation*}
\hxx_\i \in \X_i(\hxx_{-\i}) \ \text{ and } \
\hf_\i(\hxx_\i,\hxxag_{-\i}) \leq \hf_\i(\xx_\i,\hxxag_{-\i}), \quad \forall \xx_i \in \X_i(\hxx_{-i}), \quad \forall \i\in\I\ . 
\end{equation*}
\end{definition}

 Its characterization by generalized quasi-variational inequalities (GQVI) \cite{chanpang1982gqvip} can be proved as for \Cref{prop:GNNash} .
\begin{proposition}[GQVI formulation of GNE]\label{prop:QVI-GNE} 
Under \Cref{assp_compactness,assp_convex_costs}, $\hxx\in\FX$ is a $\GNE$ of $\GA$ if and only if one of the following two equivalent conditions holds:
\begin{subequations}
\begin{align}
& \forall\i \in \I: \hxx_i \in \FX(\hxx_{-i}) \text{ and } \exists\, \g_i \in \partial_1 \hf_\i(\hxx_\i, \hxxag_{-\i}) \text{ s.t. } \big\langle \g_i, \xx_\i- \hxx_\i\big\rangle   \geq 0,  \; \forall \xx_\i \in\X_i(\hxx_{-i}) \label{cond:ind_opt_N_G}\\
 &\hxx\in \FX(\hxx)\eqd \prod_{\i\in\I}\X_\i(\xx_{-\i})  \text{ and } \exists\, \g \in H(\hxx) \text{ s.t. }   \txt \big\langle \g, \xx- \hxx\big\rangle\geq 0,\; \forall \xx \in \FX(\hxx) \ .\label{cond:ind_opt_Nbis_G} 
 \end{align}
\end{subequations}
\end{proposition}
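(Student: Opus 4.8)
The plan is to follow the same two-step template used in the proof of \Cref{prop:GNNash}, adapting each step to the now rival-dependent feasible sets. The structural novelty is only that each player's feasible set depends on the others' actions, so I would first record that $\X_\i(\hxx_{-\i})$ remains nonempty, convex and compact, and then reuse the convex-optimality characterization essentially verbatim.

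For the first equivalence (\Cref{def:gne} $\Leftrightarrow$ \eqref{cond:ind_opt_N_G}), I would fix the rival profile $\hxx_{-\i}$ and observe that $\X_\i(\hxx_{-\i})=\{\xx_\i\in\X_\i:\xx_\i+\hxxag_{-\i}\in A\}$ is the intersection of the convex compact $\X_\i$ with the preimage of the convex compact $A$ under the affine map $\xx_\i\mapsto\xx_\i+\hxxag_{-\i}$, hence convex and compact; at a candidate GNE it is nonempty because $\hxx_\i$ lies in it. Since $\hf_\i(\cdot,\hxxag_{-\i})$ is convex by \Cref{assp_convex_costs}, I would apply \cite[Proposition 27.8]{combettes2011monotone} exactly as before: $\hxx_\i$ minimizes $\hf_\i(\cdot,\hxxag_{-\i})$ over $\X_\i(\hxx_{-\i})$ if and only if some $\g_\i\in\partial_1\hf_\i(\hxx_\i,\hxxag_{-\i})$ satisfies $\langle\g_\i,\xx_\i-\hxx_\i\rangle\ge0$ for all $\xx_\i\in\X_\i(\hxx_{-\i})$. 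Requiring this together with feasibility $\hxx_\i\in\X_\i(\hxx_{-\i})$ for every $\i$ is precisely \eqref{cond:ind_opt_N_G}.

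For the second equivalence (\eqref{cond:ind_opt_N_G} $\Leftrightarrow$ \eqref{cond:ind_opt_Nbis_G}), the key observation I would exploit is that, once $\hxx$ is frozen, the aggregated set $\FX(\hxx)=\prod_{\i\in\I}\X_\i(\hxx_{-\i})$ is a genuine Cartesian product and $H(\hxx)=\prod_\i\partial_1\hf_\i(\hxx_\i,\hxxag_{-\i})$ factorizes correspondingly, so that $\langle\g,\xx-\hxx\rangle=\sum_\i\langle\g_\i,\xx_\i-\hxx_\i\rangle$ for every $\g\in H(\hxx)$. Then the forward direction is obtained by summing the $I$ scalar inequalities of \eqref{cond:ind_opt_N_G}, and the reverse direction by testing \eqref{cond:ind_opt_Nbis_G} with single-coordinate deviations ($\xx_j=\hxx_j$ for $j\ne\i$, and $\xx_\i\in\X_\i(\hxx_{-\i})$ arbitrary), which collapses the sum to the $\i$-th inequality.

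I do not anticipate a genuine obstacle, since the argument duplicates that of \Cref{prop:GNNash}. The single point requiring care---and precisely the reason the characterization is stated with quasi-variational rather than variational inequalities---is that the product decomposition in the second step is valid only at the fixed point $\hxx$: the domain $\FX(\hxx)$ varies with $\hxx$, so the equivalence cannot be promoted to a GVI over a point-independent domain, and single-player feasibility $\X_\i(\hxx_{-\i})$ cannot be replaced by a fixed set. I would flag this dependence explicitly in the write-up, as it is exactly the coupling phenomenon anticipated in \Cref{rm:unilateral_deviation}.
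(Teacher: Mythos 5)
Your proposal is correct and follows essentially the same route as the paper: the paper's proof of \Cref{prop:QVI-GNE} is simply the observation that it ``can be proved as for \Cref{prop:GNNash}'', i.e., apply the convex-optimality characterization \cite[Proposition 27.8]{combettes2011monotone} player by player on the (convex, compact) rival-dependent feasible set $\X_\i(\hxx_{-\i})$, and then use the product structure of $\FX(\hxx)$ and $H(\hxx)$ at the fixed point $\hxx$ to pass between the unilateral and collective conditions. Your write-up merely makes explicit the details the paper leaves implicit (including the correct identification $\X_\i(\hxx_{-\i})=\{\xx_\i\in\X_\i:\xx_\i+\hxxag_{-\i}\in A\}$ and the reason the characterization is a quasi-VI rather than a VI), so no gap remains.
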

%
\begin{remark}[Problematics of GNE]\label{rm:collective_deviation}
The notion of GNE can be problematic. Given a pure-action profile $\xx\in \FX(A)$, firstly simultaneous unilateral deviations can lead to a new profile out of $\FX(A)$. Secondly, if $\FX(\xx)\cap \FX(A)$ is a proper set of $\FX(A)$, profiles in $\FX(A)\setminus \FX(\xx)$ are not feasible. 
Indeed, the unilateral stability condition \eqref{cond:ind_opt_N_G} is equivalent to collective stability only among those collective deviations composed by unilaterally feasible deviations, i.e. condition \eqref{cond:ind_opt_Nbis_G}, because $ T_{\X(\hxx)}(\hxx)=\prod_{\i\in \I}T_{\FX_\i(\hxx_{-i})}(\hxx_\i) $. 
Collective deviations towards $\FX(A)\setminus \FX(\xx)$ are not composed by unilaterally feasible deviations while they may effectively decrease the total cost or even each player's cost (cf. \cite{harker1991gne} for an example). 
A GNE can thus lose its stability when such collective deviations are allowed. To answer to this issue, we consider the stronger notion of equilibrium defined below:
\end{remark}

\begin{definition}[Variational Nash Equilibrium (VNE), \cite{harker1991gne}]\label{def:ve-finite}
A solution to the following GVI problem: 
 \begin{equation}\label{cond:ind_opt_ve}
\text{Find } \hxx\in \FX(A) \text{  s.t. } \exists\, \g\in H(\hxx)  \text{ s.t. }   \txt \big\langle \g, \xx- \hxx\big\rangle\geq 0,\; \forall \xx \in \FX(A).
 \end{equation}
 is called a variational Nash equilibrium of  $\GA$. In particular, if  $\Sxag\subset A$,  a VNE is a NE.
\end{definition}

A pure-action profile $\hxx\in \FX(A)$ is a VNE if and only if  a collective deviation to any profile in $\FX(A)$ is not collectively beneficial. Indeed, VNE is also unilaterally stable, because it is a GNE \cite[Theorem 3]{harker1991gne}. 

%
\begin{proposition}[\cite{harker1991gne}]\label{prop:ve_is_gne}
In $\GA$, under \Cref{assp_compactness}, any VNE is a GNE.
\end{proposition}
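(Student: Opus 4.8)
The plan is to start from the variational characterization of a VNE and recover, block by block, the best-reply inequalities that define a GNE in \Cref{def:gne}. Let $\hxx\in\FX(A)$ be a VNE, and let $\g=(\g_\i)_{\i\in\I}\in H(\hxx)$ be a subgradient profile certifying \eqref{cond:ind_opt_ve}, so that $\g_\i\in\partial_1\hf_\i(\hxx_\i,\hxxag_{-\i})$ for every $\i$ and $\langle\g,\xx-\hxx\rangle\ge 0$ for all $\xx\in\FX(A)$.

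First I would record the feasibility part of \Cref{def:gne}: since $\hxx\in\FX(A)$, the very definition of $\X_\i(\hxx_{-\i})$ (the set of $\xx_\i\in\X_\i$ with $(\xx_\i,\hxx_{-\i})\in\FX(A)$) gives $\hxx_\i\in\X_\i(\hxx_{-\i})$ for each $\i\in\I$. The key step is then the decoupling of the global variational inequality onto a single player. I would fix $\i\in\I$ and an arbitrary $\xx_\i\in\X_\i(\hxx_{-\i})$: the unilateral deviation $\xx'\eqd(\xx_\i,\hxx_{-\i})$ is a feasible collective profile, i.e. $\xx'\in\FX(A)$, precisely because $\xx_\i\in\X_\i(\hxx_{-\i})$. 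Plugging $\xx=\xx'$ into \eqref{cond:ind_opt_ve} and noting that $\xx'$ and $\hxx$ coincide in every block except the $\i$-th, the inner product collapses to that single block, yielding $\langle\g_\i,\xx_\i-\hxx_\i\rangle\ge 0$. Finally, since $\g_\i$ is a subgradient of $\hf_\i(\cdot,\hxxag_{-\i})$ at $\hxx_\i$, the subgradient inequality gives $\hf_\i(\xx_\i,\hxxag_{-\i})\ge\hf_\i(\hxx_\i,\hxxag_{-\i})+\langle\g_\i,\xx_\i-\hxx_\i\rangle\ge\hf_\i(\hxx_\i,\hxxag_{-\i})$. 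As $\xx_\i$ and $\i$ are arbitrary, these are exactly the best-reply inequalities of \Cref{def:gne}, so $\hxx$ is a GNE.

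I expect no serious obstacle: the whole content is the observation that a feasible unilateral deviation is a special feasible collective deviation, on which the single multiplier $\g$ supplied by the VNE simultaneously certifies each player's local optimality. The one point deserving care is that the \emph{same} subgradient profile $\g$ that validates the global inequality \eqref{cond:ind_opt_ve} is reused, unchanged, in each per-player inequality; \Cref{assp_compactness}, together with the convexity underlying $\partial_1\hf_\i$, is what guarantees these subdifferentials are nonempty so that such a $\g$ exists. Alternatively one could conclude via the GQVI characterization \Cref{prop:QVI-GNE}, obtaining $\langle\g_\i,\xx_\i-\hxx_\i\rangle\ge 0$ on $\X_\i(\hxx_{-\i})$ in the same way, but the direct route through \Cref{def:gne} is the most economical.
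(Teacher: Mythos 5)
Your proof is correct, and it is essentially the standard argument for this result: the paper itself gives no proof (it cites Harker's Theorem~3), and Harker's proof proceeds exactly as yours does --- a feasible unilateral deviation $(\xx_\i,\hxx_{-\i})$ lies in $\FX(A)$, so the global inequality \eqref{cond:ind_opt_ve} decouples block by block into \eqref{cond:ind_opt_N_G}, and the subgradient inequality then yields the best-reply conditions of \Cref{def:gne}. One small correction to your closing remark: the existence of the certificate $\g$ is part of the definition of a VNE rather than something to be guaranteed, and the subgradient inequality rests on convexity of the costs (\Cref{assp_convex_costs}, implicit in the paper's notation $\partial_1$), not on \Cref{assp_compactness} --- compactness in fact plays no role in your argument.
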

%
%

\begin{remark}[VNE refines GNE]\label{rm:refine}
VNE can be seen as a refinement of GNE \cite{Kulkarni2012vne}. With a small perturbation of $A$, a GNE which is not a VNE can no longer be an equilibrium.  Harker \cite{harker1991gne} gives an example of a GNE which is not a VNE.

Hence, VNE is adopted in this paper as the equilibrium notion in the presence of aggregative constraints. Moreover, in \Cref{subsec:GWE} it is argued that the notion of generalized equilibrium cannot even be established in nonatomic games.
\end{remark}

\begin{proposition}[Existence of VNE]\label{prop:exist_ve}
  Under \Cref{assp_compactness,assp_convex_costs}, $\GA$  admits a VNE. 
\end{proposition}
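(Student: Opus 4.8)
The plan is to recognise \eqref{cond:ind_opt_ve} as a generalized variational inequality over a compact convex set and to invoke a classical existence result for such problems (see e.g. \cite{facchinei2007finite}, ultimately a consequence of Kakutani's fixed-point theorem): if $K\subset\rit^n$ is nonempty, convex and compact and $H:K\rightrightarrows\rit^n$ is upper semicontinuous with nonempty, convex and compact values, then there exist $\hxx\in K$ and $\g\in H(\hxx)$ with $\langle\g,\xx-\hxx\rangle\geq 0$ for all $\xx\in K$. By \Cref{def:ve-finite}, with $K=\FX(A)$ and $H$ as defined before \Cref{prop:GNNash}, such a solution is exactly a VNE of $\GA$. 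It thus suffices to verify the hypotheses of this theorem.

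First I would check that $\FX(A)$ is nonempty, convex and compact. Nonemptiness follows from the standing assumption that $A\cap\Sxag\neq\emptyset$: choosing $\xxag$ in this intersection and any preimage $\xx\in\FX$ with $\sum_{\i\in\I}\xx_\i=\xxag$ produces a point of $\FX(A)$. Convexity holds because $\FX=\prod_{\i\in\I}\X_\i$ is convex by \Cref{assp_compactness}, while $\{\xx\in\FX:\sum_{\i\in\I}\xx_\i\in A\}$ is the intersection of $\FX$ with the preimage of the convex set $A$ under the linear map $\xx\mapsto\sum_{\i\in\I}\xx_\i$, hence convex. Compactness follows since that preimage is closed ($A$ being closed and the map continuous), so $\FX(A)$ is a closed subset of the compact set $\FX$.

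The main work is to show that $H(\xx)=\prod_{\i\in\I}\partial_1\hf_\i(\xx_\i,\xxag_{-\i})$ is a Kakutani map. Nonemptiness, convexity and compactness of each factor are the standard properties of the subdifferential of a finite convex function: nonemptiness was recalled in the Notations, while convexity and boundedness hold because $\hf_\i(\cdot,\xxag_{-\i})$ is finite and continuous, so its subgradients stay in a fixed bounded set as $(\xx_\i,\xxag_{-\i})$ ranges over the compact domain. Taking products preserves these properties. For upper semicontinuity I would prove that the graph of each map $(\xx_\i,\xxag_{-\i})\mapsto\partial_1\hf_\i(\xx_\i,\xxag_{-\i})$ is closed: given $\g^k\in\partial_1\hf_\i(\xx_\i^k,\xxag_{-\i}^k)$ with $(\xx_\i^k,\xxag_{-\i}^k)\to(\xx_\i,\xxag_{-\i})$ and $\g^k\to\g$, passing to the limit in the subgradient inequality $\hf_\i(\zz,\xxag_{-\i}^k)\geq\hf_\i(\xx_\i^k,\xxag_{-\i}^k)+\langle\g^k,\zz-\xx_\i^k\rangle$ via the continuity of $\hf_\i$ from \Cref{assp_convex_costs} yields $\g\in\partial_1\hf_\i(\xx_\i,\xxag_{-\i})$. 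Since $\xxag_{-\i}$ depends continuously on $\xx$ and the subgradients remain in a fixed compact set, $H$ has closed graph with values in a common compact set, which is equivalent to upper semicontinuity with compact values. The existence theorem then applies and delivers the VNE.

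I expect the closed-graph and uniform-boundedness verification for the parametrized subdifferential $H$ to be the only delicate point: one must ensure that subgradients do not blow up near the relative boundary of $\X_\i$ and that joint continuity of $\hf_\i$ (needed to pass to the limit when both $\xx_\i^k$ and $\xxag_{-\i}^k$ move simultaneously) genuinely follows from the continuity assumed in \Cref{assp_convex_costs}. Once these regularity facts are secured, the remainder of the argument is routine.
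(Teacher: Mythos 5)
Your proposal is correct and takes essentially the same approach as the paper: the paper's proof likewise deduces from the convexity of each $\hf_\i$ that $H$ is a nonempty, convex, compact valued, upper hemicontinuous correspondence, and then invokes a Kakutani-type existence theorem for generalized variational inequalities on the finite-dimensional compact convex set $\FX(A)$ (citing Corollary 3.1 of Chan and Pang). Your explicit verification of the hypotheses --- nonemptiness, convexity and compactness of $\FX(A)$, and the closed-graph argument for $H$ --- merely fills in what the paper leaves implicit, and the delicate points you flag (boundary behavior of subgradients, joint continuity) are glossed over in the paper as well, which simply declares $\partial_1 \hf_\i$ nonempty in its notational conventions.
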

\begin{proof}
%
From the convexity of each $\hf_i$, we deduce that $H$ is a nonempty, convex, compact valued, upper hemicontinuous correspondence. Then \cite[Corollary 3.1]{chanpang1982gqvip} shows that the GVI problem \eqref{cond:ind_opt_ve} admits a solution  on the finite dimensional convex compact $\FX(A)$. (In the case that the $\hf_i$ is partially differentiable with respect to $\xx_i$, then the GVI is reduced to a VI, and Lemma 3.1 in \cite{hartman1966} suffices to show the existence of a solution.)
\end{proof}

\section{A Continuum of Players: Nonatomic Framework}\label{sec:nonatomic}
\subsection{Nonatomic aggregative games}
In nonatomic aggregative games considered here, players have compact pure-action sets, and heterogeneous pure-action sets as well as heterogeneous cost function. This model is in line with the  Schmeidler's seminal paper \cite{schmeidler1973equilibrium}, but in contrast to most of the population games studied in game theory \cite{HofSig98,San11} where nonatomic players are grouped into several populations, with players in the same population have the same finite pure-action set  and the same cost function.

\begin{definition}[Nonatomic aggregative game]\label{def:nonatomicGame} 
A \emph{nonatomic aggregative game} $\Gna$ is defined by:\\
 (i) a continuum of players represented by points on the real interval $\Theta=[0,1]$ endowed with Lebesgue measure, \\
 (ii)  a set of feasible pure actions $\X_\th\subset \rit^T$ for each player $\th\in \Theta$,
 with $T\in \nit^*$ a constant, and\\
  (iii) a cost function $\X_\th\times \rit^T \rightarrow \rit: f_\th(\xx_\th, \xxag)$ for each player $\th$,  where $\xxag=(\xag_t)_{t=1}^T$ and $\xag_\t\eqd \int_0^1 \xx\thti \dth$ denotes the aggregate-action profile. 

The set of feasible pure-action profiles is defined by:
\begin{equation*} 
\FX\eqd \left\{ \xx \in L^2([0,1],\rit^{T}) \   : \ \forall\, a.e. \,\th \in \Theta , \xx_\th \in \X_\th \right\}. 
\end{equation*}

Denote the game by $\Gna=(\Theta, \FX, (f_\th)_{\th\in\Theta})$.
\end{definition}
\begin{remark}
The definition of a nonatomic game asks the pure-action profile $\xx$ to be a measurable and integrable function on $\Theta$ instead of simply being a collection of $\xx_\th\in \X_\th$ for $\th\in \Theta$. In other words, a coupling constraint is inherent in the definition of nonatomic games and the notion of WE. This is in contrast to finite-player games.
\end{remark}
The set of feasible aggregate actions is $\Sxag\eqd \{ \xxag \in \rit^T  : \exists\, \xx \in \FX \text{ s.t. } \txt\int_0^1 \xx_\th \dth = \xxag \}$.

Further assumptions are necessary for $\FX$ to be nonempty and for the existence of equilibria to be discussed later.

\begin{assumption}[Nonatomic pure-action sets]\label{ass_X_nonat}
The correspondence $\X: \Theta\rightrightarrows \rit^T, \th \mapsto \X_\th$ has nonempty, convex, compact values and measurable graph $Gr_\X =\{(\th,\xx_\th) \in \rit^{T+1}: \th \in \Theta, \xx_\th \in \X_\th  \}$, i.e. $Gr_\X$ is a Borel subset of $\rit^{T+1}$. Moreover, for all $\th\in \Theta$,  $\X_\th \subset N_{\diamX}(\mathbf{0})$, with $M>0$ a constant.
\end{assumption}
Under \Cref{ass_X_nonat}, a sufficient condition for $\xx$ to be in $L^2([0,1],\rit^{T})$ is that $\xx$ is measurable.  
\begin{notation}
Denote $\M=[0,M+1]^T$. 
\end{notation}

\begin{assumption}[Nonatomic convex cost functions] \label{ass_ut_nonat}  
For all $\th$, $f_\th$ is defined on $(\M')^2$, where $\M'$ is a neighborhood of  $\M$, and is bounded on $\M^2$, and for each  aggregate profile $\yyag\in \M$,\\
(i) function $Gr_\X \rightarrow \rit^T:  (\th,\xx_\th) \mapsto  f_\th(\xx_\th, \yyag)$ is measurable.\\
(ii) for each $\th\in \Theta$, function $\xx_\th \mapsto f_\th(\xx_\th, \yyag)$ is continuous and convex on $\M'$;\\
(iii) There is $\Bdf>0$ such that $\tnorm{\g} \leq \Bdf$ for all subgradients $\g \in \partial _1 f_\th(\xx_\th,\yyag)$ for each $\xx_\th\in \M$, each $\yyag\in \M$, and each $\th\in \Theta$.
\end{assumption}
\begin{remark}
\Cref{ass_ut_nonat}.(iii) implies that $f_\th(\cdot,\cdot)$'s are Lipschitz in the first variable with a uniform Lipschitz constant $\Bdf$ on $\M^2$ for all $\th$. Besides, if $f_\th(\cdot,\yy)$ is differentiable on $\M$, then $\partial_1 f_\th(\xx_\th, \yyag)$ contains one element $\nabla_1 f_\th(\xx_\th,\yyag)$, the gradient of $f_\th(\cdot,\yyag)$ at $\xx_\th$.
%
\end{remark}

Wardrop equilibrium extends the notion of Nash equilibrium in the framework of nonatomic games, where a single player of measure zero has a negligible impact on the others.
\begin{definition}[Wardrop Equilibrium (WE), \cite{wardrop1952some}]
 
A pure-action profile $\sxx\in \FX$ is a \emph{Wardrop equilibrium} of nonatomic game $\Gna$ if
\begin{equation*}
 f_\th (\sxx_\th, \sxxag) \leq f_\th (\xx_\th, \sxxag), \quad\forall \xx_\th \in \X_\th , \ \forall \, a.e.\,  \th \in \Theta\ .
\end{equation*}
\end{definition}
\smallskip

Before characterizing WE by infinite-dimensional GVI, let us introduce some notions and a technical assumption ensuring that the infinite-dimensional GVI is well-defined.

First, define a correspondence $H: L^2([0,1],\M) \rightrightarrows L^2([0,1],\rit^T)$ as follows: 
\begin{equation}\label{eq:sousgra}
H(\xx)\eqd\{\g = (\g_\th)_{\th\in \Theta}\in L^2([0,1],\rit^T): \g_\th \in \partial_1 f_\th(\xx_\th, \txt\int \xx), \forall \,a.e.\,\th\in \Theta\},\quad \forall \xx\in L^2([0,1],\M) .
\end{equation}
In other words, $H(\xx)$ is the collection of measurable (and integrable because of \Cref{ass_ut_nonat}.(iii)) selections of a subgradient for each $\xx_\th$. 
 
Next, define a best-reply correspondence $Br$ from the set of aggregate-action profiles $\Sxag$ to the set of pure-action profiles $\FX$: 
\begin{equation*}
Br(\yyag) \eqd \{ \xx\in \FX : \xx_\th \in \arg\min_{\X_\th} f_\th(\cdot, \yyag), \forall \th\in \Theta\}, \quad \forall \yyag\in \Sxag.
\end{equation*}

 Finally, fix $\yyag\in \Sxag$ and $\xx\in Br(\yyag)$, define a correspondence $\mathcal{D}(\xx,\yyag)$ from $\Theta$ to $\rit^T$ as follows:
 \begin{equation}
 \mathcal{D}(\xx,\yyag) (\th)\eqd \{\g_\th \in \partial_1 f_\th(\xx_\th,\yyag): \langle \g_\th , \yy_\th - \xx_\th \rangle \geq 0, \forall \yy_\th \in \X_\th\},\quad \forall \th\in \Theta.
\end{equation} 
Clearly, this is a nonempty and closed-valued correspondence. 
 
 \begin{assumption}\label{assp:tech}
 For all $\yyag\in \Sxag$ and all $\xx\in Br(\yyag)$, $\mathcal{D}(\xx,\yyag)$ is a measurable correspondence. 
 \end{assumption}

\begin{theorem}[GVI formulation of WE]\label{thm:agg_wardrop}%
Under \Cref{ass_X_nonat,ass_ut_nonat,assp:tech}, $\sxx \in \FX$ is a WE of nonatomic game $\Gna$ if and only if either of the following two equivalent conditions is true:
\begin{subequations}
\begin{align}
\forall \,a.e. \, \th \in \Theta, \; &\exists\, \g_\th\in \partial_1 f_\th(\sxx_\th, \sxxag) \text{ s.t } 
\langle \g_\th, \xx_\th - \sxx_\th \rangle \geq 0, \; \forall \xx_\th \in \X_\th  \ , \label{cond:ind_opt}\\
 &\exists\, \g \in H(\sxx) \text{ s.t } \int_{\Theta} \langle \g_\th, \xx_\th - \sxx_\th \rangle \dth \geq 0, \; \forall \xx\in \FX\ . \label{cond:agg_eq}
 \end{align}
\end{subequations}
%
%
\end{theorem}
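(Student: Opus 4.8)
The plan is to establish the chain WE $\iff$ \eqref{cond:ind_opt} $\iff$ \eqref{cond:agg_eq}, handling the pointwise convex-optimality characterization first and then the two passages between the pointwise and integral forms, where all the measure-theoretic content lies.

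First I would prove WE $\iff$ \eqref{cond:ind_opt}. By definition, $\sxx$ is a WE exactly when, for a.e. $\th$, the point $\sxx_\th$ minimizes the convex function $f_\th(\cdot, \sxxag)$ over the convex set $\X_\th$ (nonempty, convex, compact by \Cref{ass_X_nonat}; $f_\th(\cdot,\sxxag)$ convex by \Cref{ass_ut_nonat}.(ii)). Applying the standard first-order optimality condition for convex minimization over a convex set --- the same result cited in the proof of \Cref{prop:GNNash} --- this holds for a fixed $\th$ iff there exists $\g_\th \in \partial_1 f_\th(\sxx_\th, \sxxag)$ with $\langle \g_\th, \xx_\th - \sxx_\th\rangle \geq 0$ for all $\xx_\th \in \X_\th$. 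Gathering this over a.e. $\th$ gives precisely \eqref{cond:ind_opt}.

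Next, for \eqref{cond:ind_opt} $\Rightarrow$ \eqref{cond:agg_eq}: condition \eqref{cond:ind_opt} forces $\sxx_\th \in \arg\min_{\X_\th} f_\th(\cdot, \sxxag)$ for a.e. $\th$, i.e. $\sxx \in Br(\sxxag)$, and moreover says the set $\mathcal{D}(\sxx, \sxxag)(\th)$ is nonempty a.e. By \Cref{assp:tech} this correspondence is measurable (and it is closed-valued), so a measurable selection theorem yields a measurable map $\th \mapsto \g_\th \in \mathcal{D}(\sxx, \sxxag)(\th)$. By \Cref{ass_ut_nonat}.(iii) each $\g_\th$ satisfies $\tnorm{\g_\th}\le \Bdf$, so $\g \in L^2([0,1],\rit^T)$ and hence $\g \in H(\sxx)$. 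For any $\xx \in \FX$ one has $\xx_\th \in \X_\th$ a.e., so $\langle \g_\th, \xx_\th - \sxx_\th\rangle \geq 0$ a.e.; integrating over $\Theta$ delivers \eqref{cond:agg_eq}. For the converse \eqref{cond:agg_eq} $\Rightarrow$ \eqref{cond:ind_opt}, I would argue by contradiction using the $\g \in H(\sxx)$ of \eqref{cond:agg_eq}. Define $m(\th) \eqd \min_{\yy_\th \in \X_\th} \langle \g_\th, \yy_\th - \sxx_\th\rangle \leq 0$, the minimum being attained since $\X_\th$ is compact and the objective linear in $\yy_\th$. By a measurable maximum theorem (using the measurable graph of $\X$ from \Cref{ass_X_nonat} and the measurability of $\g$ and $\sxx$), $m$ is measurable and its argmin correspondence admits a measurable selection $\th \mapsto \xx_\th$. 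If \eqref{cond:ind_opt} failed, then $m(\th) < 0$ on some $S \subset \Theta$ of positive measure; setting $\xx_\th$ equal to the selection on $S$ and to $\sxx_\th$ off $S$ yields a feasible $\xx \in \FX$ with $\int_{\Theta}\langle \g_\th, \xx_\th - \sxx_\th\rangle \dth = \int_S m(\th)\dth < 0$, contradicting \eqref{cond:agg_eq}. Hence $m = 0$ a.e., which is exactly \eqref{cond:ind_opt}.

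The main obstacle is the measure-theoretic bookkeeping in the two passages between the pointwise and integral statements. In the forward direction one must produce a \emph{single} measurable (indeed $L^2$) selection of subgradients simultaneously realizing individual optimality for a.e.\ player --- this is precisely why \Cref{assp:tech} is imposed, together with the uniform bound of \Cref{ass_ut_nonat}.(iii) to land in $L^2$. In the reverse direction one must measurably select the worst-case feasible deviation in order to localize the contradiction on a set of positive measure. The pointwise convex-analysis step is routine; it is the selection and integration arguments that require care.
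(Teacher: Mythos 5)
Your proposal is correct and follows essentially the same route as the paper: the pointwise convex-optimality equivalence for WE $\iff$ \eqref{cond:ind_opt}, a measurable selection of subgradients via \Cref{assp:tech} (i.e.\ \Cref{lm:bestreply}(3)) for \eqref{cond:ind_opt} $\Rightarrow$ \eqref{cond:agg_eq}, and a contradiction built from a measurably selected deviation profile for the converse. The only cosmetic difference is that in the converse you minimize the linearized objective $\langle \g_\th, \cdot - \sxx_\th \rangle$ over $\X_\th$, whereas the paper measurably selects a minimizer of $f_\th(\cdot,\sxxag)$ itself and invokes the subgradient inequality to obtain strict negativity; both hinge on the same measurable maximum theorem and yield the same contradiction with \eqref{cond:agg_eq}.
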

We need the following lemma for the proof of \Cref{thm:agg_wardrop}. 
\begin{lemma}\label{lm:bestreply} ~~{ }\\ 
(1) For all $\xx\in L^2([0,1],\M)$, $H(\xx)$ is nonempty. \\
(2) For all $\yyag\in \Sxag$, $Br(\yyag)$ is nonempty.\\
(3) Under \Cref{assp:tech}, for all $\yyag\in \Sxag$ and all $\xx\in Br(\yyag)$, there exists a measurable mapping $\th\mapsto \g_\th$ such that $\g_\th \in \mathcal{D}(\xx,\yyag)(\th)$ for each $\th\in\Theta$.
\end{lemma}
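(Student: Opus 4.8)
The plan is to treat all three statements as measurable-selection problems and reduce each to the Kuratowski--Ryll-Nardzewski selection theorem once the relevant correspondence has been shown to be measurable; the uniform bound in \Cref{ass_ut_nonat}.(iii) then upgrades the abstract measurable selection to one lying in $L^2$ (in fact $L^\infty$).

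For (1), I would fix $\xx\in L^2([0,1],\M)$ and set $\yyag=\int\xx\in\M$, then show that the correspondence $\th\mapsto\partial_1 f_\th(\xx_\th,\yyag)$ has measurable graph. The key is the subgradient inequality: $\g\in\partial_1 f_\th(\xx_\th,\yyag)$ if and only if $f_\th(\zz,\yyag)-f_\th(\xx_\th,\yyag)-\langle\g,\zz-\xx_\th\rangle\geq 0$ for every $\zz$ in a neighborhood of $\M$. By continuity in the first variable (\Cref{ass_ut_nonat}.(ii)) the quantifier over $\zz$ can be restricted to a countable dense set $D$. Each map $\th\mapsto f_\th(\zz,\yyag)$ is measurable by \Cref{ass_ut_nonat}.(i), and $\th\mapsto f_\th(\xx_\th,\yyag)$ is measurable as the composition of a Carath\'eodory function (\Cref{ass_ut_nonat}.(i)--(ii)) with the measurable map $\th\mapsto\xx_\th$. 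Hence the graph is a countable intersection over $\zz\in D$ of measurable sets, and so is measurable; since the values are nonempty (finiteness and convexity) and closed, the selection theorem gives a measurable $\g$, and \Cref{ass_ut_nonat}.(iii) bounds it by $\Bdf$, whence $\g\in L^2$ and $H(\xx)\neq\emptyset$.

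For (2), I would fix $\yyag\in\Sxag$ and consider $\th\mapsto\arg\min_{\X_\th}f_\th(\cdot,\yyag)$, which is nonempty (a continuous function attains its minimum on the compact $\X_\th$, by \Cref{ass_X_nonat} and \Cref{ass_ut_nonat}.(ii)) and closed-valued. The plan is to invoke the measurable maximum theorem: since $\X$ has measurable graph with compact values (\Cref{ass_X_nonat}) and $(\th,\xx_\th)\mapsto f_\th(\xx_\th,\yyag)$ is Carath\'eodory, the value function $m(\th)=\min_{\X_\th}f_\th(\cdot,\yyag)$ is measurable and the argmin correspondence, whose graph is $\{(\th,\xx_\th)\in Gr_\X: f_\th(\xx_\th,\yyag)\leq m(\th)\}$, is measurable. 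A measurable selection is then the desired element of $Br(\yyag)$, which lies in $\FX$ by \Cref{ass_X_nonat}.

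For (3), measurability of $\mathcal{D}(\xx,\yyag)$ is exactly \Cref{assp:tech}, and the statement already records that this correspondence is nonempty and closed-valued; the nonemptiness is the first-order optimality of $\xx_\th$ on $\X_\th$, valid since $\xx\in Br(\yyag)$, via the convex characterization used in \Cref{prop:GNNash}. The selection theorem then applies verbatim. The main obstacle throughout is the measurability bookkeeping in (1) and (2)---reducing the uncountable quantifiers to countable dense sets and verifying joint measurability of the Carath\'eodory data; once the graph is shown measurable, the selection theorems are routine, and the uniform bound of \Cref{ass_ut_nonat}.(iii) is precisely what keeps the selections inside $L^2$.
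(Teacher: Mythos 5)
Your proposal is correct and follows essentially the same route as the paper: each part is reduced to a measurable-selection theorem (the measurable maximum theorem for (2), a compact-valued/Kuratowski--Ryll-Nardzewski selection theorem for (1) and (3)), with the uniform bound $\Bdf$ ensuring the selection lies in $L^2$. The only difference is that in (1) you explicitly verify measurability of the subdifferential correspondence via countable dense test points, a step the paper's own proof leaves implicit when invoking the compact-valued selection theorem.
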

\begin{proof}
%
(1) For each $\th$, the subdifferential $\partial_1 f_\th(\xx_\th, \int\xx)$ is nonempty and compact valued, so that a measurable selection exists according to the compact-valued selection theorem  \cite{aumann1976integration}.

(2) Fix $\yyag\in \Sxag $. A consequence of \Cref{ass_ut_nonat}.(i-ii) is that the function $\Theta \times \M \rightarrow \rit: (\theta, \zz) \mapsto f_\th(\zz,\yyag)$ is a Carath\'eodory function, that is, (i) $f_\cdot(\zz,\yyag)$ is measurable on $\Theta$ for each $\zz\in \M$, and (ii) $f_\th(\cdot,\yyag)$ is continuous on $\M$ for each $\th\in \Theta$.  Thus, according to the measurable maximum theorem \cite[Thm. 18.19]{aliprantis2006infinite} applied to $f_\cdot(\cdot,\yyag)$, there exists a selection $\xx_\th\in \arg\min_{\X_\th} f_\th(\cdot, \yyag)$ such that $\xx$ is a measurable function on $\Theta$.

(3) Because of \Cref{assp:tech}, one can apply the compact-valued selection theorem \cite{aumann1976integration}.
 \end{proof}

\begin{proof}[Proof of \Cref{thm:agg_wardrop}]
Given $\sxxag$, \eqref{cond:ind_opt} is a necessary and sufficient condition for $\sxx_\th$ to minimize the convex function $f_\th (., \sxxag)$ on $ \X_\th$. Condition \eqref{cond:ind_opt} implies condition \eqref{cond:agg_eq} because of \Cref{assp:tech}.

For the converse, suppose that $\sxx\in \FX$ satisfies condition \eqref{cond:agg_eq} but not \eqref{cond:ind_opt}. 
Then there must be a subset $\Theta'$ of $\Theta$ with strictly positive measure such that for each $\th \in \Theta'$, 
$\sxx_\th \notin \Y_\th \eqd \arg\min_{\X_\th}f_\th(\cdot,\sxxag)$.
 In particular, for any $\yy_\th \in \Y_\th$, $\langle \g_\th, \, \yy_\th - \sxx_\th \rangle <f_\th(\yy_\th,\sxxag) - f_\th(\sxx_\th,\sxxag)< 0$. 
 By the same argument as in the proof of \Cref{lm:bestreply},  one can select a $\yy_\th\in \arg\min_{\X_\th}f_\th(\cdot,\sxxag)$ for $\th\in \Theta'$ such that $\Theta' \rightarrow \rit^T: \th \mapsto \yy_\th$ is measurable. 
 By defining $\yy_\th = \sxx_\th$ for $\th\notin\Theta'$, one has $\Theta \rightarrow \rit^T: \th \mapsto \yy_\th$ is measurable and hence belongs to $\FX$. However, $\int_{\Theta} \langle \g_\th, \yy_\th - \sxx_\th \rangle \dth= \int_{\Theta'} \langle \g_\th, \yy_\th - \sxx_\th \rangle \dth <0  $, contradicting \eqref{cond:agg_eq}. 
\end{proof}

\begin{remark}\label{rm:tangent_infinitedim}
Condition \eqref{cond:ind_opt} is equivalent to $\langle \g_\th(\sxx_\th, \sxxag), \yy_\th \rangle \geq 0$ for all $\yy_\th \in T_{\X_\th}(\sxx_\th)$ for each $\th$. The interpretation is the same as for atomic players at NE: no unilateral deviation is profitable. However, since each nonatomic player has measure zero, when considering a deviation in the profile of pure actions, one must let players in a set of strictly positive measure deviate: \eqref{cond:agg_eq} means that the collective deviation of players of any set of strictly positive measure increases their cost. Note that the GVI problem \eqref{cond:agg_eq} has infinite dimensions.  
\end{remark}

The existence of WE is obtained by an equilibrium existence theorem for nonatomic games.
\begin{theorem}[Existence of a WE,  \cite{rath1992direct}]\label{thm:exist_we}
Under \Cref{ass_X_nonat,ass_ut_nonat}.(1), if for all $\th$ and all $\yyag\in \M$, $f_\th(\cdot, \yyag)$ is continuous on $\M$, then the nonatomic aggregative game $\Gna$ admits a WE.
\end{theorem}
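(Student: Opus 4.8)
The plan is to reduce the infinite-dimensional equilibrium problem to a finite-dimensional fixed-point problem at the level of the aggregate action. Since every feasible profile satisfies $\xx_\th \in \X_\th \subset N_{\diamX}(\mathbf{0})$, the set of feasible aggregates $\Sxag$ is a bounded subset of $\rit^T$; I would first check that $\Sxag$ is nonempty (by \Cref{lm:bestreply}.(2), or via any measurable selection of $\X$), convex and compact, with convexity and compactness following from Lyapunov's convexity theorem applied to the Aumann integral $\int_0^1 \X_\th \dth$ together with the uniform bound and the measurable graph hypothesis in \Cref{ass_X_nonat}. I would then introduce the \emph{integrated best-reply correspondence} $\Phi:\Sxag \rightrightarrows \Sxag$ defined by $\Phi(\yyag) \eqd \{ \int_0^1 \xx_\th \dth : \xx \in Br(\yyag)\}$, i.e. the Aumann integral of the fiberwise best replies $\th \mapsto \arg\min_{\X_\th} f_\th(\cdot,\yyag)$. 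A fixed point $\yyag^* \in \Phi(\yyag^*)$ is exactly what is needed: by construction there is a measurable selection $\sxx \in Br(\yyag^*)$ with $\int_0^1 \sxx_\th \dth = \yyag^*$, so $\sxxag = \yyag^*$ and $\sxx_\th$ minimizes $f_\th(\cdot,\sxxag)$ on $\X_\th$ for a.e. $\th$, which is precisely the WE condition; moreover $\sxx$ is measurable and uniformly bounded, hence in $\FX$.

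The bulk of the work is to verify the hypotheses of Kakutani's fixed-point theorem for $\Phi$ on the convex compact set $\Sxag$. Nonemptiness of $\Phi(\yyag)$ follows from \Cref{lm:bestreply}.(2): the measurable maximum theorem \cite{aliprantis2006infinite} guarantees a measurable best-reply selection, so $Br(\yyag) \neq \emptyset$ and its integral is well defined. The decisive structural point is that $\Phi(\yyag)$ is \emph{convex}: this is Lyapunov's convexity theorem for the Aumann integral over the nonatomic space $(\Theta,\mu)$, and crucially it does \emph{not} require the fiber best-reply sets to be convex. Consequently only the continuity hypothesis on $f_\th(\cdot,\yyag)$, and not convexity in the own action, is used here, which is exactly what lets the direct (Schmeidler--Rath) argument go through under the weaker assumptions of the statement. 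Compactness of $\Phi(\yyag)$ then follows from the uniform bound $\X_\th \subset N_{\diamX}(\mathbf{0})$ (so the integrand is integrably bounded) together with closedness of the integrated correspondence.

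The hard part will be the upper hemicontinuity (equivalently, the closed graph) of $\Phi$. I would first invoke Berge's maximum theorem: for each fixed $\th$, continuity of $(\xx_\th,\yyag)\mapsto f_\th(\xx_\th,\yyag)$ on $\M\times\M$ and compactness of $\X_\th$ make $\yyag \mapsto \arg\min_{\X_\th} f_\th(\cdot,\yyag)$ a compact-valued, upper hemicontinuous correspondence. The real difficulty is to transfer this fiberwise upper hemicontinuity to the integral: given $\yyag_n \to \yyag$ and $\zzag_n \in \Phi(\yyag_n)$ with $\zzag_n \to \zzag$, one picks measurable selections $\xx^n \in Br(\yyag_n)$ with $\int_0^1 \xx^n_\th \dth = \zzag_n$, and must produce a measurable selection $\xx \in Br(\yyag)$ with $\int_0^1 \xx_\th \dth = \zzag$. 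The sequence $(\xx^n)$ is bounded in $L^2([0,1],\rit^T)$, and this is precisely where a closure theorem for Aumann integrals (a Fatou-type convergence lemma in the spirit of \cite{aumann1976integration}, exploiting nonatomicity and the uniform bound) is needed, combined with the fiberwise inclusion $\limsup_n \arg\min_{\X_\th} f_\th(\cdot,\yyag_n) \subseteq \arg\min_{\X_\th} f_\th(\cdot,\yyag)$ supplied by Berge. I expect this closedness step, reconciling integral (weak) convergence of the selections with the fiberwise optimality constraint, to be the main technical obstacle.

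With $\Phi$ established as a nonempty, convex, compact-valued, upper hemicontinuous self-correspondence of the nonempty convex compact set $\Sxag \subset \rit^T$, Kakutani's fixed-point theorem yields a fixed point $\yyag^*$, and the measurable-selection argument of the first paragraph converts it into a Wardrop equilibrium, completing the proof.
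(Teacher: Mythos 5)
Your proposal is correct and is essentially the argument of Rath (1992) that the paper invokes: the paper's own proof is a one-line citation of Rath's Remark 8, whose underlying direct proof is exactly your reduction to a Kakutani fixed point of the integrated best-reply correspondence on $\Sxag$, with convexity from Lyapunov's theorem, nonemptiness and measurability from the measurable maximum theorem, and upper hemicontinuity from Berge's theorem plus the fact that integration preserves upper hemicontinuity (Aumann). In short, you have unpacked the cited result rather than found a different route, and your identification of the closedness-of-the-integral step as the technical crux matches where the real work lies in Rath's proof.
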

\begin{proof} 
The conditions required in Remark 8 in Rath's 1992 paper \cite{rath1992direct} on the existence of WE in aggregate games are satisfied. 
\end{proof}
\begin{remark}
No convexity of $f_\th(\cdot,\yyag)$'s are needed.
\end{remark}

\subsection{Monotone nonatomic aggregative games}\label{subsec:monotone}
For the  uniqueness of WE and the existence of equilibrium notion to be introduced in the next subsection for the case with coupling constraints, let us introduce the following notions of monotone nonatomic games.

\begin{definition}\label{def:mono_non}
The nonatomic aggregative game $\Gna$ is \emph{monotone} if
\begin{equation}\label{cd:mono_non}
\int_{\Theta} \langle \g_\th-\h_\th, \xx_\th - \yy_\th \rangle \dth \geq 0, \quad \forall \xx, \yy \in L^2([0,1],\M) \text{ and } \g\in H(\xx), \h \in H(\yy).
\end{equation}

It is \emph{strictly monotone} if the equality in \eqref{cd:mono_non} holds if and only if $\xx=\yy$ almost everywhere.

It is \emph{aggregatively strictly monotone} if the equality in \eqref{cd:mono_non} holds if and only if $\int \xx=\int \yy$.

It is \emph{strongly monotone} with modulus $\alpha$ if
\begin{equation}\label{cd:strong_mono_non}
\int_{\Theta} \langle  \g_\th-\h_\th, \xx_\th - \yy_\th \rangle \dth \geq \alpha\|\xx-\yy\|^2_2, \; \forall \xx,\yy \in L^2([0,1],\M) \text{ and } \g\in H(\xx),  \h \in H(\yy).
\end{equation}

It is \emph{aggregatively strongly monotone} with modulus $\beta$ if
\begin{equation}\label{cd:strong_agg_mono_non}
\int_{\Theta} \langle  \g_\th-\h_\th, \xx_\th - \yy_\th \rangle \dth \geq \beta\|\txt\int \xx-\txt\int \yy\|^2, \;\forall \xx,\yy \in L^2([0,1],\M) \text{ and } \g\in H(\xx),  \h \in H(\yy).
\end{equation}
\end{definition}
\begin{remark}
\Cref{cd:mono_non} means nothing else but $H$ is a monotone correspondence on $L^2([0,1],\M)$ (cf. \cite[Definition 20.1]{combettes2011monotone} for the definition of monotone correspondence in Hilbert spaces).
%
\end{remark}
\begin{remark}
A recent paper of Hadikhanloo \cite{Hadikhanloo2017} generalizes the notion of stable games in population games \cite{HofSan09} to monotone games in anonymous games, an extension of population games with players having heterogeneous compact pure-action sets but the same payoff function. 
 He defines the notion of monotonicity directly on the distribution of pure-actions among the players instead of pure-action profile as we do. The two approaches are compatible.
\end{remark}

Examples of aggregative games are given by cost functions of the form:
\begin{equation}\label{eq:common_form_cost}
f_\th(\xx_\th,\xxag)= \langle \xx_\th , \cc(\xxag) \rangle - u_\th(\xx_\th) \ .
\end{equation}
 Here $\cc(\xxag)$ specifies the per-unit cost (or negative of per-unit utility) of each of the $T$ ``public products'', which is a function of  the aggregative contribution $\xxag$ to each of the ``public products''. Player $\th$'s cost (resp. negative of utility) associated to these products is scaled by her own contribution $\xx_\th$.  The function $u_\th(\xx_\th)$ measures the private utility of player $\th$ (resp. negative of private cost) for the contribution $\xx_\th$. 

For instance, in a public goods game, $-c_t (\xag_t)$ is the common per-unit payoff for using public good $t$, determined by the total contribution $\xag_t$, while $-u_\th(\xx_\th)$ is player $\th$'s private cost of supplying $\xx_\th$ to the public goods;  in a Cournot competition, $-c_t (\xag_t)$ is the common market price for product $t$, determined by its total supply $\xag_t$, while $-u_\th(\xx_\th)$ is player $\th$'s private cost of producing $x_{\thti}$ unit of product $t$ for each product $t$; in a congestion game, $c_t(\xag_t)$ is the common per-unit cost for using arc $t$ in a network, determined by the aggregate load $\xag_t$ on arc $t$, while $u_\th(\xx_\th)$ is player $\th$'s private utility of her routing or energy consuming choice $\xx_\th$.

\begin{proposition}\label{lm:monotonemap}
Under \Cref{ass_X_nonat,ass_ut_nonat,assp:tech}, in a nonatomic aggregative game $\Gna$ with cost functions of form \eqref{eq:common_form_cost}, assume that $\cc$ is monotone on $\M$ and, for each $\th$, $u_\th$ is a concave function on $\M$. Then:\\
(1)  $\Gna$  is a monotone game.\\
(2) If $u_\th$ is strictly concave on $\M$ for all $\th \in \Theta$, then $\Gna$ is a strictly monotone game.\\
(3) If $\cc$ is strictly monotone on $\M$, then $\Gna$ is an aggregatively strictly monotone game.\\
(4) If $u_\th$ is strongly concave on $\M$ with modulus $\alpha_\th$ for each $\th \in \Theta$ and $\inf_{\th\in\Theta}\alpha_\th= \alpha>0$, then $\Gna$ is a strongly monotone game with modulus $\alpha$.\\
(5) If  $\cc$ is strongly monotone on $\M$ with $\beta$, then $\Gna$ is an aggregatively strongly monotone game with modulus $\beta$.
\end{proposition}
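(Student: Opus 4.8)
The plan is to exploit the additive structure of the cost \eqref{eq:common_form_cost}, which splits into a part that couples the players only through the aggregate, namely $\langle \xx_\th, \cc(\xxag)\rangle$, and a purely individual part $-u_\th(\xx_\th)$. First I would compute the subdifferential in the own action: since $\xx_\th \mapsto \langle \xx_\th, \cc(\yyag)\rangle$ is linear and $-u_\th$ is convex (because $u_\th$ is concave), the sum rule gives $\partial_1 f_\th(\xx_\th, \yyag) = \cc(\yyag) + \partial(-u_\th)(\xx_\th)$. Consequently, every $\g\in H(\xx)$ has the form $\g_\th = \cc(\xxag) + \ww_\th$ with $\ww_\th \in \partial(-u_\th)(\xx_\th)$ and $\xxag = \int\xx$, and likewise every $\h\in H(\yy)$ has the form $\h_\th = \cc(\yyag) + \bv_\th$ with $\bv_\th \in \partial(-u_\th)(\yy_\th)$ and $\yyag = \int\yy$.

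Substituting this into the monotonicity integral \eqref{cd:mono_non} and using that $\cc(\xxag) - \cc(\yyag)$ is constant in $\th$ (so it may be pulled out, leaving $\int_\Theta(\xx_\th - \yy_\th)\dth = \xxag - \yyag$), I would obtain the clean decomposition
\[
\int_\Theta \langle \g_\th - \h_\th, \xx_\th - \yy_\th\rangle \dth = \langle \cc(\xxag) - \cc(\yyag),\, \xxag - \yyag\rangle + \int_\Theta \langle \ww_\th - \bv_\th,\, \xx_\th - \yy_\th\rangle \dth .
\]
Call the two summands on the right $(\mathrm{A})$, the \emph{aggregate} term, and $(\mathrm{B})$, the \emph{individual} term. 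Everything then follows by controlling them separately.

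Term $(\mathrm{A})$ is nonnegative by monotonicity of $\cc$, and term $(\mathrm{B})$ is nonnegative because $\partial(-u_\th)$, as the subdifferential of a convex function, is a monotone operator, so its integrand is pointwise nonnegative; this proves (1). For the strict statements I would use that a sum of two nonnegative quantities vanishes only if both vanish: in (2), strict concavity of $u_\th$ makes $\partial(-u_\th)$ strictly monotone, so equality forces $(\mathrm{B})=0$ and hence $\xx_\th = \yy_\th$ a.e.; in (3), strict monotonicity of $\cc$ forces $(\mathrm{A})=0$ and hence $\int\xx = \int\yy$. For the quantitative statements I would use the corresponding strengthenings: strong convexity of $-u_\th$ with modulus $\alpha_\th \geq \alpha$ gives $\langle \ww_\th - \bv_\th, \xx_\th - \yy_\th\rangle \geq \alpha\norm{\xx_\th - \yy_\th}^2$, which integrates to $\alpha\|\xx - \yy\|_2^2$ and, combined with $(\mathrm{A})\geq 0$, yields \eqref{cd:strong_mono_non}; strong monotonicity of $\cc$ gives $(\mathrm{A})\geq \beta\norm{\xxag - \yyag}^2$ and, combined with $(\mathrm{B})\geq 0$, yields \eqref{cd:strong_agg_mono_non}.

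The computation is essentially routine once the decomposition is in place. The points needing care—rather than genuine obstacles—are the standard dictionary between concavity properties of $u_\th$ and monotonicity properties (plain, strict, strong) of the operator $\partial(-u_\th)$, and the elementary but decisive observation driving (2)--(3) that each of $(\mathrm{A})$ and $(\mathrm{B})$ is separately nonnegative, so the total vanishes iff both do. The only bookkeeping is the measurability and integrability of the selections $\ww_\th,\bv_\th$, which is already guaranteed by \Cref{ass_ut_nonat} and the definition of $H$.
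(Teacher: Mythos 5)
Your proposal is correct and follows essentially the same route as the paper: compute $\partial_1 f_\th(\xx_\th,\yyag)=\cc(\yyag)+\partial(-u_\th)(\xx_\th)$, pull the constant vector $\cc(\xxag)-\cc(\yyag)$ out of the integral to split the monotonicity integral into an aggregate term and an individual term, and bound each separately. The paper only writes out part (1) and states that (2)--(5) are omitted; your treatment of those parts (strict/strong monotonicity of $\partial(-u_\th)$ from strict/strong concavity of $u_\th$, and the observation that two nonnegative summands vanish only if both do) is the intended routine completion and is correct.
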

\begin{proof}
(1) Let $\xx,\yy\in \FX$ and $\xxag=\int \xx$, $\yyag=\int \yy$. For each $\th$, $\partial_1 f_\th(\xx_\th, \yyag)=\{\cc(\yyag)+\g:\g\in \partial (-u_\th)(\xx_\th) \} $.  
Then, given $\xx,\yy\in \FX$, with $\g_\th \in  \partial (-u_\th)(\xx_\th)$ and $\h_\th \in  \partial (-u_\th)(\yy_\th)$ for each $\th$, one has $\langle  \g_\th(\xx_\th)-\h_\th (\yy_\th), \xx_\th - \yy_\th \rangle\geq 0$ because $u_\th$ is concave so that $\partial (-u_\th)$ is a monotone correspondence on $\X_\th$.

Then 
$\int_0^1 \langle (\cc(\xxag)+\g_\th(\xx_\th))-(\cc(\yyag)-\h_\th(\yy_\th)), \xx_\th-\yy_\th\rangle\dth= \langle \cc(\xxag)-\cc(\yyag), \xxag-\yyag\rangle + \int_0^1\langle  \g_\th(\xx_\th)-\h_\th (\yy_\th), \xx_\th - \yy_\th \rangle \dth\geq 0$ because $\cc$ is monotone. Hence $\Gna$ is a monotone game. 

The proof for (2)-(5) is omitted.
\end{proof}

In particular, if $\cc(\xxag)=(c_t(\xag_t))_{t\in T}$, then $\cc$ is monotone if $c_t$'s are all non-decreasing, and $\cc$ is strongly monotone if $c_t$'s are all strictly increasing.

\subsection{Nonatomic aggregative games with aggregate constraints}\label{subsec:GWE}
Let us consider the aggregative constraint in nonatomic aggregative game $\Gna$: $\xxag\in A$,  where $A$ is a convex compact subset of $\rit^T$ such that $A\cap \Sxag \neq \emptyset$. Let $\FX(A)$ be a subset of $\FX$ defined by $\FX(A)\eqd \{\xx \in \FX: \xxag=\int \xx \in A\}$. Denote the nonatomic game with aggregative constraint $\GnaA$.

In contrast to games with finitely many players, a generalized equilibrium in the style of \Cref{def:gne} is not well-defined in a nonatomic game. Indeed, since the impact of a nonatomic player's choice on the aggregative profile is negligible, the feasible pure-action set of a nonatomic player $\th$ facing the choices of the others $\xx_{-\th}$ in a game with coupling constraint is not a well-established notion: either $\int \xx_{-\th} \in A$ then $\X_{\th}=\X$, or $\int \xx_{-\th} \notin A$ then $\X_{\th}=\emptyset$. 
Departing from a pure-action profile in $\FX(A)$, simultaneous unilateral deviations by the players can lead to any profile in $\FX$. If only profiles in $\X(A)$ are allowed to be attained, then one lands on a notion similar to VNE.
 Indeed, the most natural notion of equilibrium with the presence of aggregative constraint is the notion of variational Wardrop equilibrium, where feasible deviations are defined on a collective basis.

\begin{definition}[Variational Wardrop Equilibrium (VWE)]\label{def:ve-infinite}
A solution to the following infinite dimensional GVI problem: 
 \begin{equation}\label{cond:ind_opt_ve_inf}
\text{Find } \sxx\in \FX(A) \text{ and } \g\in H(\sxx)\;\text{  s.t.  } \int_{\Theta} \langle \g_\th, \xx_\th - \sxx_\th \rangle \dth \geq 0,\quad \forall \xx \in \FX(A),
 \end{equation}
 is called a \emph{variational Wardrop equilibrium} of  $\GnaA$, where the correspondence $H$ is as defined by \Cref{eq:sousgra}.
\end{definition}

\begin{remark}[Justification for VWE]
From a game theoretical point of view, the notion of VWE can be problematic as well. Each nonatomic player can deviate unilaterally without having any impact on the aggregate profile. The unilateral stability as for NE and VNE (cf. \Cref{rm:unilateral_deviation,rm:refine}) is lost. However, our main theorem, \Cref{thm:converge_with_u}, shows that VWE can be seen as the limit of a sequence of VNE which are unilaterally stable.
Finally, in the literation of congestion games, the equilibrium notion characterized by VI of form \eqref{cond:ind_opt_ve_inf} but in finite dimension and with smooth cost functions has long been studied. For example, see \cite{larsson1999side,marcotte2004capacitated,correa2004capacitated,zhongal2011} and references therein.
\end{remark}

The following facts are needed for later use. Under \Cref{ass_X_nonat}:\begin{itemize}
\item  $\FX$ is a nonempty, convex, closed and bounded subset of  $L^2([0,1], \rit^T)$; 
\item $\FX(A)$ is a nonempty, convex and closed subset of $\FX$; \item $\Sxag$ and $A\cap \Sxag$ are nonempty, convex and compact subsets of $\rit^T$.
 \end{itemize} We omit the proof and only point out that $\FX$ and $\Sxag$ are nonempty because of \Cref{ass_X_nonat} and the measurable selection theorem of Aumann \cite{aumann1969measure}, while aggregate-action set $\Sxag$ is compact by \cite[Theorem 4]{aumann1965integral}.

 \Cref{prop:exist_ve_inf} shows the existence of VWE via the VI approach. Compared with \Cref{thm:exist_we}, much stronger conditions are required on cost functions. 

\begin{assumption}[Continuity of cost function in aggregate action]\label{assp_lip}
For each $\th\in \Theta$ and $\xx_\th\in\M$, $f_\th(\xx_\th,\cdot)$ is continuous on $\M$.
\end{assumption}

\begin{theorem}[Existence of VWE]\label{prop:exist_ve_inf}
Under \Cref{ass_X_nonat,ass_ut_nonat,assp_lip}, if nonatomic game with coupling constraint $\Gna(A)$ is monotone on $\FX(A)$, then a VWE  exists. 
\end{theorem}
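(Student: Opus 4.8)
The plan is to read \Cref{cond:ind_opt_ve_inf} as a monotone set-valued variational inequality on the weakly compact convex set $\FX(A)$ and to solve it by a Minty--Fan--KKM argument followed by a minimax step. The preliminary facts are the ones recalled just before the statement: $\FX(A)$ is a nonempty, convex, closed and bounded subset of the Hilbert space $L^2([0,1],\rit^T)$, hence weakly compact. For the operator $H$ of \Cref{eq:sousgra}, \Cref{lm:bestreply} gives nonempty values; these values are convex because each $\partial_1 f_\th(\xx_\th,\int\xx)$ is convex; and by \Cref{ass_ut_nonat}.(iii) every selection obeys $\|\g\|_2\le\Bdf$, so each $H(\xx)$ is a bounded, convex, weakly closed, hence weakly compact subset of $L^2$. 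Finally $H$ is monotone on $\FX(A)$ by hypothesis.

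First I would pass to the dual (Minty) problem. For each $\yy\in\FX(A)$ set
\[
S(\yy)=\{\xx\in\FX(A): \exists\,\g\in H(\xx),\ \txt\int_\Theta\langle\g_\th,\xx_\th-\yy_\th\rangle\dth\le 0\},\quad
M(\yy)=\{\xx\in\FX(A): \forall\,\h\in H(\yy),\ \txt\int_\Theta\langle\h_\th,\xx_\th-\yy_\th\rangle\dth\le 0\}.
\]
Monotonicity on $\FX(A)$ gives $S(\yy)\subset M(\yy)$ (if $\g\in H(\xx)$ realizes $\xx\in S(\yy)$, then for any $\h\in H(\yy)$ one has $\int\langle\h_\th,\xx_\th-\yy_\th\rangle\dth\le\int\langle\g_\th,\xx_\th-\yy_\th\rangle\dth\le 0$). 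The map $S$ is a KKM map: if $\zz=\sum_k\lambda_k\yy^k$ belonged to no $S(\yy^k)$, then fixing any $\g\in H(\zz)$ we would have $\int_\Theta\langle\g_\th,\zz_\th-\yy^k_\th\rangle\dth>0$ for every $k$, whereas the $\lambda_k$-weighted sum equals $\int_\Theta\langle\g_\th,\zz_\th-\zz_\th\rangle\dth=0$, contradicting positivity of each summand. Hence $M$ is a KKM map as well. Each $M(\yy)$ is weakly closed, being an intersection of half-spaces cut out by the weakly continuous affine functionals $\xx\mapsto\langle\h,\xx-\yy\rangle_2$, and it is contained in the weakly compact set $\FX(A)$. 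The Fan--KKM theorem then yields $\sxx\in\bigcap_{\yy\in\FX(A)}M(\yy)$, a solution of the Minty inequality $\int_\Theta\langle\h_\th,\yy_\th-\sxx_\th\rangle\dth\ge 0$ for all $\yy\in\FX(A)$ and all $\h\in H(\yy)$.

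It remains to linearise at $\sxx$. For $\yy\in\FX(A)$ and $\lambda\in(0,1]$ put $\yy_\lambda=(1-\lambda)\sxx+\lambda\yy\in\FX(A)$; the Minty inequality at $\yy_\lambda$, divided by $\lambda$, reads $\int_\Theta\langle\h^\lambda_\th,\yy_\th-\sxx_\th\rangle\dth\ge 0$ for every $\h^\lambda\in H(\yy_\lambda)$. Choosing $\h^\lambda\in H(\yy_\lambda)$, the bound $\|\h^\lambda\|_2\le\Bdf$ lets me extract, as $\lambda\downarrow 0$, a weak cluster point $\g\in H(\sxx)$ — this uses the strong-to-weak upper hemicontinuity of $H$, itself a consequence of the closedness of the parametric subdifferentials $\partial_1 f_\th$ under \Cref{ass_ut_nonat,assp_lip} — and passing to the weak limit in the pairing against the fixed $\yy-\sxx$ gives, for each $\yy$, some $\g\in H(\sxx)$ with $\int_\Theta\langle\g_\th,\yy_\th-\sxx_\th\rangle\dth\ge 0$. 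To obtain a single subgradient valid for all $\yy$, I would apply Sion's minimax theorem to the bilinear form $(\g,\yy)\mapsto\int_\Theta\langle\g_\th,\yy_\th-\sxx_\th\rangle\dth$ on $H(\sxx)\times\FX(A)$: the previous step gives $\inf_{\yy}\sup_{\g}\ge 0$, Sion's theorem yields $\sup_{\g}\inf_{\yy}\ge 0$, and weak compactness of $H(\sxx)$ makes the outer supremum attained at some $\g^*\in H(\sxx)$ with $\int_\Theta\langle\g^*_\th,\yy_\th-\sxx_\th\rangle\dth\ge 0$ for all $\yy\in\FX(A)$. Then $(\sxx,\g^*)$ solves \Cref{cond:ind_opt_ve_inf}.

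The main obstacle is the strong-to-weak upper hemicontinuity of $H$ invoked in the linearisation. Because $H$ depends on $\xx$ both pointwise and through the aggregate $\int\xx$, and because one controls only measurable selections of the pointwise subdifferentials, showing that a \emph{weak} $L^2$-limit of selections along $\yy_\lambda\to\sxx$ is again an admissible selection at $\sxx$ is delicate: weak convergence forbids a direct pointwise passage to the limit, so one must combine the closed-graph property of $(\xx_\th,\yyag)\mapsto\partial_1 f_\th(\xx_\th,\yyag)$ with a convexity (Mazur-type) or monotonicity argument and a measurable-selection step. The Minty reduction and the minimax step are comparatively routine once this operator-theoretic property of $H$ is secured.
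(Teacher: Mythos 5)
Your architecture is sound, but note how it relates to the paper's proof: the paper does not prove existence from scratch, it invokes the existence theorem for monotone generalized variational inequalities of \cite[Corollary 2.1]{ding1996monotoneGVI} on the bounded, closed, convex set $\FX(A)$, and the \emph{whole} of its proof consists in verifying the single nontrivial hypothesis of that theorem, namely that $H$ is upper hemicontinuous from line segments of $\FX(A)$ to the weak topology of $L^2([0,1],\rit^T)$. Your Minty/Fan--KKM reduction together with the Sion minimax step is, in effect, a self-contained re-derivation of that black-box theorem: the KKM property of $S$, the weak closedness of the sets $M(\yy)$ inside the weakly compact $\FX(A)$, the linearisation along $\yy_\lambda=(1-\lambda)\sxx+\lambda\yy$, and the extraction of a single $\g^*\in H(\sxx)$ from the family of $\yy$-dependent subgradients by minimax over the weakly compact convex set $H(\sxx)$ are all correct as written (nonemptiness of $H$ being \Cref{lm:bestreply}(1)). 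So your route is more elementary and self-contained, at the price of redoing work the paper delegates to a citation.

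The genuine gap is that the step you explicitly defer --- that a weak $L^2$ cluster point $\g$ of selections $\h^\lambda\in H(\yy_\lambda)$ belongs to $H(\sxx)$ --- is not a peripheral technicality: it is exactly the hemicontinuity property that constitutes the entire content of the paper's proof, and without it your linearisation step produces nothing, so the proposal is incomplete at its only load-bearing analytic point. Your diagnosis of why it is delicate is, however, accurate, and in fact the paper's own verification stumbles on precisely the issue you flag: it deduces $\langle\g\supk_\th,\zz_\th-\xx_\th\rangle\to\langle\g_\th,\zz_\th-\xx_\th\rangle$ for each fixed $\th$ directly from $\g\supk\stackrel{\ast}{\rightharpoonup}\g$, which weak convergence in $L^2$ does not give. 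The correct completion is the Mazur-type argument you gesture at: take convex combinations of the tail of the selections converging strongly in $L^2$, hence a.e.\ along a subsequence; for a.e.\ fixed $\th$, the correspondence $(\xx_\th,\yyag)\mapsto\partial_1 f_\th(\xx_\th,\yyag)$ is convex- and compact-valued, uniformly bounded by $\Bdf$, and has closed graph on $\M^2$ (subgradient inequality plus the continuity granted by \Cref{ass_ut_nonat,assp_lip} and the uniform Lipschitz bound in the first variable), hence is upper semicontinuous; therefore for every $\varepsilon>0$ the relevant convex combinations eventually lie within distance $\varepsilon$ of the convex closed set $\partial_1 f_\th(\sxx_\th,\sxxag)$, and letting $\varepsilon\downarrow 0$ gives $\g_\th\in\partial_1 f_\th(\sxx_\th,\sxxag)$ a.e., measurability of $\g$ being automatic as an a.e.\ limit of measurable functions. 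Writing this out would complete your proof, and would repair the paper's own weak-convergence step as a by-product.
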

\begin{proof}
%

Let us apply \cite[Corollary 2.1]{ding1996monotoneGVI} to show that \Cref{cond:ind_opt_ve_inf} has a solution. This theorem states that if $\FX(A)$ is bounded, closed and convex in $L^2([0,1],\rit^T)$, and if $H:L^2([0,1],\M) \rightrightarrows L^2([0,1],\rit^T)$ is a monotone correspondence which is upper hemicontinuous from the line segments in $\FX(A)$ to the weak* topology of $L^2([0,1],\rit^T)$, then \eqref{cond:ind_opt_ve_inf} admits a solution. 
We only need to show the upper hemicontinuity property. 
First notice that $H(\cdot)$ has closed values.
 Take $\xx$ and $\yy$ in $\FX(A)$, consider sequence $(\xx\supk)_k$ with $\xx\supk=\xx+\frac{1}{k}(\yy-\xx)$, and sequence $(\g\supk)_k$ such that $\g\supk\in H(\xx\supk)$ and $\g\supk \stackrel{\ast}{\rightharpoonup} \g$ with $\g\in L^2([0,1],\rit^T)$. Let us show that $\g\in H(\xx)$. 

Denote $\xxag=\int \xx$ and $\xxag\supk=\int\xx\supk$. Then $\xxag\supk$ converges to $\xxag$ in $l^2$-norm.

By definition of $H$, for each $\zz\in \M$, for each $\th$, $f_\th (\zz_\th, \xxag\supk_\th) \geq f_\th (\xx\supk_\th, \xxag\supk) + \langle \g\supk_\th, \zz_\th-\xx\supk_\th \rangle$. Since $f_\th$ is continuous in both variables, $f_\th (\zz_\th, \xxag\supk)\rightarrow f_\th (\zz_\th, \xxag)$ and $f_\th (\xx_\th\supk, \xxag\supk)\rightarrow f_\th (\xx_\th, \xxag)$. Besides, $\langle \g\supk_\th, \zz_\th-\xx\supk_\th \rangle = \langle \g\supk_\th, \zz_\th-\xx_\th \rangle + \langle \g\supk_\th, \xx_\th-\xx\supk_\th \rangle$, and $\langle \g\supk_\th, \zz_\th-\xx_\th \rangle \rightarrow \langle \g_\th, \zz_\th-\xx_\th \rangle$ because  $\g\stackrel{\ast}{\rightharpoonup} \g$, while $ \langle \g\supk_\th, \xx_\th-\xx\supk_\th \rangle\rightarrow 0$ because $\g\supk_\th$'s are uniformly bounded by $\Bdf$. Therefore, $f_\th (\zz_\th, \xxag) \geq f_\th (\xx_\th\supk, \xxag) + \langle \g_\th, \zz_\th-\xx_\th \rangle$ so that $\g_\th\in \partial_1 f_\th(\xx_\th, \xxag)$. Since the limit of measurable functions is measurable,  $\g$ is measurable. Hence $\g\in H(\xx)$, which concludes the proof.
\end{proof}

\begin{theorem}[Uniqueness of VWE]\label{prop:unique_vwe}
Under \Cref{ass_X_nonat,ass_ut_nonat}: \\
(1) if $\Gna(A)$ is strictly monotone on $\FX(A)$, then it has at most one VWE; \\
(2) if $\Gna(A)$ is aggregatively strictly monotone on $\FX(A)$, then all VWE of $\Gna(A)$ have the same aggregative profile;\\
(3) if $\Gna$ (without aggregative constraint) is only aggregatively strictly monotone but, for each $\th\in \Theta$ and all $\yyag\in \M$, $f_\th(\xx,\yyag)$ is strictly convex in $\xx$, then there is at most one WE.
\end{theorem}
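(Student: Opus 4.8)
The plan is to run the classical monotone variational-inequality uniqueness argument on the infinite-dimensional GVI \eqref{cond:ind_opt_ve_inf}, and then, for part (3), to upgrade aggregate-level uniqueness to profile-level uniqueness by exploiting strict convexity.

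First I would suppose that $\xx^1,\xx^2\in\FX(A)$ are two VWE, with associated measurable selections $\g^1\in H(\xx^1)$ and $\g^2\in H(\xx^2)$. Testing the defining inequality of $\xx^1$ against the feasible profile $\xx^2$, and that of $\xx^2$ against $\xx^1$, yields $\int_\Theta\langle\g^1_\th,\xx^2_\th-\xx^1_\th\rangle\dth\geq 0$ and $\int_\Theta\langle\g^2_\th,\xx^1_\th-\xx^2_\th\rangle\dth\geq 0$. Adding the two inequalities gives $\int_\Theta\langle\g^1_\th-\g^2_\th,\xx^1_\th-\xx^2_\th\rangle\dth\leq 0$. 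Since $\xx^1,\xx^2\in\FX(A)\subset L^2([0,1],\M)$, the monotonicity of $\GnaA$ on $\FX(A)$ (\Cref{cd:mono_non}) supplies the reverse inequality $\int_\Theta\langle\g^1_\th-\g^2_\th,\xx^1_\th-\xx^2_\th\rangle\dth\geq 0$, forcing the integral to vanish. For part (1), strict monotonicity means equality can occur only when $\xx^1=\xx^2$ almost everywhere, which is the claimed uniqueness. For part (2), aggregative strict monotonicity means the vanishing integral forces $\int\xx^1=\int\xx^2$, i.e. all VWE share a common aggregate profile. The one point to keep in mind is that the monotonicity hypotheses quantify over every selection of $H$, so the conclusion is independent of the particular $\g^1,\g^2$ attached to the two equilibria, and no single-valuedness of the subdifferentials is needed.

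For part (3) there is no coupling constraint, so WE are characterized by \eqref{cond:agg_eq} on $\FX$; repeating the cross-testing step above with test profiles in $\FX$ and invoking aggregative strict monotonicity shows that any two WE already share the same aggregate profile, which I denote $\sxxag$. It then remains to convert this into equality of the full profiles. For this I would appeal to the pointwise optimality condition \eqref{cond:ind_opt} of \Cref{thm:agg_wardrop}: at any WE $\sxx$, for almost every $\th$ the action $\sxx_\th$ minimizes $f_\th(\cdot,\sxxag)$ over the convex set $\X_\th$. Because $f_\th(\cdot,\sxxag)$ is strictly convex, this minimizer is unique, so $\sxx_\th$ is determined almost everywhere by the common aggregate $\sxxag$ alone; hence any two WE coincide almost everywhere, giving at most one WE.

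I expect the argument to be essentially routine, with all the real content carried by the monotonicity definitions. The only places deserving care are verifying that the equilibria indeed lie in the domain $L^2([0,1],\M)$ on which monotonicity is posited, and, in part (3), cleanly passing from aggregate-level to profile-level uniqueness through strict convexity while respecting the almost-everywhere qualifiers throughout.
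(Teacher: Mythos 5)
Your proposal is correct and follows essentially the same route as the paper's own proof: cross-testing the two equilibrium GVIs, summing to force $\int_\Theta\langle\g^1_\th-\g^2_\th,\xx^1_\th-\xx^2_\th\rangle\dth=0$, and then invoking strict (resp.\ aggregative strict) monotonicity for parts (1)--(2), with part (3) obtained by combining aggregate uniqueness with the pointwise optimality condition \eqref{cond:ind_opt} and strict convexity of $f_\th(\cdot,\sxxag)$ to pin down the unique minimizer almost everywhere. Your added care about the almost-everywhere qualifiers and the independence of the conclusion from the particular subgradient selections is sound but does not change the argument.
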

\begin{proof}
Suppose that $\xx,\yy\in \FX(A)$ are both VWE. Let $\xxag=\int\xx$ and $\yyag=\int \yy$. According to Theorem \ref{thm:agg_wardrop}, there exist $\g\in H(\xx)$ an $\h\in H(\yy)$ such that $ \int_{\Theta} \langle \g_\th, \yy_\th - \xx_\th \rangle \dth \geq 0$ and $\int_{\Theta} \langle \h_\th, \xx_\th - \yy_\th \rangle \dth \geq 0$. Adding up these two inequalities yields $\int_{\Theta}\langle \g_\th - \h_\th, \yy_\th - \xx_\th \rangle \dth \geq 0$. 

(1) If $\Gna(A)$ is a strictly monotone game, then $\int_{\Theta}\langle  \g_\th - \h_\th, \xx_\th - \yy_\th \rangle \dth = 0$ and thus $\xx=\yy$ almost everywhere.  

(2-3) If $\Gna(A)$ is an aggregatively strictly monotone game, then $\int_{\Theta}\langle  \g_\th - \h_\th, \xx_\th - \yy_\th \rangle \dth = 0$ and thus $\xxag=\yyag$.  

If there is no aggregative constraint and $f_\th(\cdot,\zzag)$ is strictly convex for all $\zzag\in \M$, then for all $\th$, $\xx_\th$ (resp. $\yy_\th$) is the unique minimizer of $f_\th(\cdot, \xxag)$ (resp. $f_\th(\cdot, \yyag)$). Since $\xxag=\yyag$, one has $\xx_\th=\yy_\th$.
\end{proof}

\section{Approximating a Nonatomic Aggregate Game}
\label{sec:approx_games}
\subsection{Atomic Approximating Sequence}

After discussing the existence and uniqueness properties of NE/VNE and WE/VWE  in finite-player and nonatomic aggregative games, we shall study the relationship between these notions.  First note that NE (resp. WE) is a particular case of VNE (resp. VWE) when the  aggregate constraint set  is  any subset of $\rit^T$ containing $\Sxag$.   On the one hand, one may naturally expect that, when the number of players grows very large in a finite-player aggregative game, the game gets ``close'' to a nonatomic aggregative game.    On the other hand, since there is a larger literature on algorithms for finite dimensional VI than for infinite dimensional ones, it can be helpful for the computation of WE/VWE to find a NE/VNE that approximates the former with arbitrary precision.%

This section shows the following result: Considering a sequence of equilibria of ``approximating'' finite-player aggregative games $(\G\esnu(A\esnu))_\snu$ of a nonatomic game $\Gna(A)$,  where each player in $ \G\esnu(A\esnu) $ represents a collection of nonatomic players who are similar in their action sets and cost functions,
 a sequence of VNE in $(\G\esnu(A\esnu))_\snu$ 
 converges to the VWE of $\GnaA$ when this one is (aggregatively) strongly monotone.

In this section, we always consider that  \Cref{assp_convex_costs,assp_compactness,ass_X_nonat,ass_ut_nonat,assp:tech,assp_lip} hold. 
\medskip

Let us consider the following definition of an approximating sequence:

\begin{definition} \label{def:approx_seq} \textbf{Atomic Approximating Sequence (AAS)} \\
A sequence of finite-player aggregative games $\{\G\esnu(A\esnu)=\big(\I\esnu,\FX\esnu,(f_i\esnu)_i,A \esnu\big): \snu\in \nit^*\}$ with aggregative constraints is an \emph{atomic approximating sequence} (AAS) for the nonatomic aggregative game $\Gna(A)=\big(\Theta,\FX,(f_\th)_\th, A\big)$ with an aggregative constraint if, for each $\snu \in \nit^*$, there exists a partition $(\Theta_0\esnu,\Theta_1\esnu, \dots, \Theta_{I\esnu}\esnu)$ of the set $\Theta$, where $I\esnu\eqd|\I\esnu|$, such that the Lebesgue measure of $\Theta_0\esnu$ is $ \mu_0\esnu= 0$, and if, for each $\i\in \I\esnu$, the Lebesgue measure of $\Theta_\i\esnu$ is $\mu\esnu_i>0$ while the collection of nonatomic players in $\Theta_\i\esnu$ corresponding to finite player $i\in \I\esnu$ satisfies that, as $\nu \rightarrow +\infty$:
\begin{enumerate}[i),leftmargin=*,wide,labelindent=5pt]
\item  \label{def:approx_seq:subgradientAtomicNonatomic} $\mdset\esnu\eqd\max_{\i \in\I\esnu} \dset_\i\esnu \longrightarrow 0$, where $\dset_\i$ is the Hausdorff distance  between the feasible pure-action sets of nonatomic players in $\Theta\esnu_\i$ and the scaled feasible pure-action set of finite player $i\in \I\esnu$:
\begin{equation} \label{eq:def_dset}
\dset_\i\esnu \eqd \sup_{\th \in \Theta\esnu_\i} d_{H}\left( \X_\th, \txt\frac{1}{\mu_i\esnu}\X_{\i}\esnu \right)\ ,
\end{equation}
and $\spa \X\esnu_i = \spa \X_\th$ for all $\th\in \Theta\esnu_i$.
\item $\mld\esnu\eqd \max_{\i\in\I\esnu}\ld_\i\esnu\longrightarrow 0$, where $\ld\esnu_\i$ measures the difference between the subdifferential of a finite player's cost function when she takes or not into account the impact of her own action on the aggregate profile:
\begin{equation}\label{eq:def_ld}
\ld\esnu_\i \eqd \sup_{(\xx,\yyag)\in \M^2} \sup_{\g\in \partial_1 \hf\esnu_\i(\mu\esnu_\i\xx, \yyag-\mu\esnu_\i\xx)}d \left(\g, \partial_1 f\esnu_\i(\mu\esnu_\i\xx, \yyag)\right)
\end{equation}
(Recall that $\hf\esnu_i(\xx_i,\xxag_{-i})\eqd f\esnu_i(\xx_i,\xxag)$. The subgradients of $\hf\esnu_\i$ with respect to the first variable count the impact of $\xx_\i$ on $\xxag$ while the subgradients of $f\esnu_\i$ with respect to the first variable do not.) \label{def:approx_seq:subgradientWithWithoutI}
\item \label{def:approx_seq:distSets}$\mduti\esnu\eqd\max_{\i\in\I\esnu}\duti_\i\esnu \longrightarrow 0  $, where $\duti\esnu_\i$ measures the Hausdorff distance between the subdifferential of nonatomic players' cost functions and that of the finite players' cost functions:
\begin{equation}\label{eq:def_dut}
\duti\esnu_i \eqd  \sup_{\th \in \Theta_\i}\sup_{(\xx,\yyag)\in \M^2}   d_H\left(  \partial_1 f\esnu_\i(\txt {\mu_i\esnu } \xx,\yyag) , \partial_1 f_\th(\xx, \yyag) \right).
\end{equation}
\item \label{def:approx_seq:distAggSets} $D\esnu \longrightarrow 0$, where $D\esnu\eqd d_H\left(A\esnu,A \right)$  is the Hausdorff distance between the aggregative constraint set $A\esnu\subset \rit^T$ and the aggregative constraint set $A\subset \rit^T$. Besides, $\spa A=\spa A\esnu$ for all $\esnu\in \nit^*$.
\end{enumerate} 
\end{definition}
\begin{remark}
Roughly speaking, along an AAS, 
the impact of a finite player's action on the aggregate profile gradually disappears. Besides, the pure-action set of a finite player converges to the pure-action set of a nonatomic player in the subset of $\Theta$ that the finite player represents, whereas the subgradients of her cost with respect to her own action also tend to those of a nonatomic player that she represents.  Finally, the aggregate-profile constraint sets in the AAS converge to the one in the nonatomic game.

Note that, except the last condition on $D\esnu$, the other conditions are independent of the constraint sets $(A\esnu)_\snu$ and $A$.
\end{remark}

\begin{remark}
Without loss of generality, we assume $\rlt (A\cap \Sxag)\neq \emptyset$ in this section. Indeed, if the nonempty convex compact set $A\cap \Sxag$ has an empty relative interior, then it is reduced to a point hence the problem becomes trivial. 
\end{remark}


In  \Cref{subsec:construction}, we will construct an AAS for two fairly general cases of nonatomic games.

%


\medskip

In order to compare a pure-action profile in a finite-player game and one in a nonatomic game, we introduce the following linear mappings which define an equivalent nonatomic action profile for a finite-player action profile and vice versa. 

First, define  $\psi\esnu: \M^{ I\esnu} \rightarrow L^2([0,1],\M)$ for each $\snu\in \nit^*$ by 
\begin{equation}\label{eq:psi}
\forall \xx\esnu\in \M^{I\esnu},\; \psi\esnu(\xx\esnu)=(\psi\esnu_\th(\xx\esnu))_{\th\in \Theta}\, , \text{ where } \,\psi\esnu_\th(\xx\esnu)\eqd\frac{\xx\esnu_i}{\mu\esnu_i}\,, \, \forall i\in\I\esnu ,\  \forall \th\in \Theta\esnu_i\ .
\end{equation}
The interpretation of $\psi\esnu$ is the following. If one considers finite player $i\in \I\esnu$ as a coordinator of the coalition formed by a group of nonatomic players represented by set $\Theta_i\esnu$ equipped with Lebesgue measure, then $\psi_\th\esnu(\xx\esnu)$ dictates the behavior of each of them.  


Then, define mapping $\bpsi\esnu: L^2([0,1],\M) \rightarrow   \M^{\I\esnu} $  for each $\snu\in \nit^*$ by 
 \begin{equation}\label{eq:psibar}
 \forall \xx\in L^2([0,1],\M^T), \;
 \bpsi\esnu(\xx)= \big(\,\bpsi\esnu_i(\xx) \big)_{\i\in\I\esnu}, \, \text{ where }\bpsi\esnu_i(\xx)=\txt\int_{\Theta\esnu_\i} \xx_\th \dth \ .
 \end{equation}
 The interpretation of $\bpsi\esnu$ is that the nonatomic players in $\Theta_i\esnu$ form a coalition which behaves like a finite player, so that $\bpsi\esnu_i(\xx)$ is just her pure-action.

Finally, let us make the following assumption for this section.
\begin{assumption}\label{asp:intpt}
There is a strictly positive constant $\eta$ and a pure-action profile $\bar{\xx}\in \FX$ such that, for almost all $\th\in \Theta$, $d(\bar{\xx}_\th, \rbd \X_\th)>\eta$.
\end{assumption}
It means that the pure-action space of each player has an (aggregatively) nonempty relative interior and that the relative interior is not vanishing along any sequence of players.

\subsection{Convergence of Equilibrium Profiles and Aggregate Equilibrium Profiles}\label{sec:atomic_with_u}

The following \Cref{thm:converge_with_u} gives the main result of this paper. It shows that a VWE in a strongly monotone nonatomic aggregative game can be approximated by VNE of an AAS, both in the case with and without aggregative constraints. 

Recall that, according to \Cref{prop:unique_vwe}, a strongly monotone game is strictly monotone, hence the VWE is unique, while an aggregatively strongly monotone game is aggregatively strictly monotone, hence the aggregate-action profile at VWE is unique. 
\begin{theorem}[Convergence of VNE to VWE] \label{thm:converge_with_u}
Under \Cref{assp_convex_costs,assp_compactness,ass_X_nonat,ass_ut_nonat,assp:tech,assp_lip,asp:intpt}, let $(\G\esnu(A\esnu))_\snu$ be an AAS of nonatomic aggregative game $\Gna(A)$ with an aggregative constraint. 
 Let  $\sxx$ be the VWE of $\GnaA$, $\hxx\esnu \in \FX\esnu(A\esnu)$ a VNE of $\G\esnu(A\esnu)$  for each $\snu\in\nit^*$, and $\sxxag$, $\hxxag\esnu$ their respective aggregate-action profiles. 
 Then there exists constants $\rho>0$ and $\bar{\rho}>0$ such that the following results hold with $K_A\eqd\tfrac{M+1}{\min\{\rho, \bar{\rho}\}}$:

(1) If $\Gna$ is aggregatively strongly monotone with modulus $\beta$, $(\hxxag\esnu)_\snu$ converges to $\sxxag$: for all $\snu\in \nit^*$ such that $\max(\mdset\esnu, D\esnu) < \min\{\bar{\rho}, \rho\}$,
\begin{equation} \label{eq:cvg_agg_nou}
\| \hxxag\esnu- \sxxag \|^2 \leq \frac{1}{\beta}  \Big(  (3\Bdf + 1) K_A\max(D\esnu,\mdset\esnu) +(2M+1)(\mduti\esnu + \mld\esnu) \Big)\ .
\end{equation}

(2) If $\Gna$ is strongly monotone with modulus $\alpha$, then 
 $(\psi\esnu(\hxx\esnu))_\snu$ (cf. \Cref{eq:psi}),  converges to $\sxx$ in $L^2$-norm: for all $\snu\in \nit^*$ such that $\max(\mdset\esnu, D\esnu) < \min\{\bar{\rho}, \rho\}$,
 \begin{equation} \label{eq:cvg_indiv}
\|\psi\esnu(\hxx\esnu)-\sxx\|^2_2\leq \frac{1}{ \stgccvut} \Big( (3\Bdf +1) K_A\max(D\esnu,\mdset\esnu)  +(2M+1)(\mduti\esnu + \mld\esnu)\Big)\ .
\end{equation}

If there are no aggregate constraints, replace $K_A$ and $D\esnu$ all by 0 in \eqref{eq:cvg_agg_nou} and \eqref{eq:cvg_indiv}.
\end{theorem}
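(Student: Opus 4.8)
The plan is to couple the infinite-dimensional GVI \eqref{cond:ind_opt_ve_inf} satisfied by the VWE $\sxx$ with the finite-dimensional GVI \eqref{cond:ind_opt_ve} satisfied by each VNE $\hxx\esnu$, and to convert the gap between them into the four approximation quantities $\mdset\esnu,\mld\esnu,\mduti\esnu,D\esnu$ via the (aggregative) strong monotonicity of $H$. First I would lift the finite equilibrium to a nonatomic profile $\tilde{\xx}\esnu\eqd\psi\esnu(\hxx\esnu)$, which lies in $L^2([0,1],\M)$ once $\mdset\esnu<1$ (since $\tilde{\xx}\esnu_\th\in\tfrac1{\mu_i\esnu}\X_i\esnu$ is $\mdset\esnu$-close to $\X_\th$ by \eqref{eq:def_dset}) and which preserves aggregates, $\int\tilde{\xx}\esnu=\sum_{i\in\I\esnu}\hxx_i\esnu=\hxxag\esnu$. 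Applying strong monotonicity \eqref{cd:strong_mono_non} (resp. aggregative strong monotonicity \eqref{cd:strong_agg_mono_non}) to the pair $(\tilde{\xx}\esnu,\sxx)$, with the selection $\g^*\in H(\sxx)$ furnished by the VWE and a selection $\tilde{\g}\esnu\in H(\tilde{\xx}\esnu)$ still to be specified, gives
\[
\stgccvut\,\|\tilde{\xx}\esnu-\sxx\|_2^2 \;\big(\text{resp. } \beta\,\|\hxxag\esnu-\sxxag\|^2\big)\;\le\; \int_\Theta\langle \tilde{\g}\esnu_\th-\g^*_\th,\;\tilde{\xx}\esnu_\th-\sxx_\th\rangle\dth .
\]
The entire proof then reduces to bounding this right-hand side from above by the stated combination of errors.

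Splitting the right-hand side as $\int_\Theta\langle\tilde{\g}\esnu_\th,\tilde{\xx}\esnu_\th-\sxx_\th\rangle\dth-\int_\Theta\langle\g^*_\th,\tilde{\xx}\esnu_\th-\sxx_\th\rangle\dth$, I would control each integral through the relevant variational inequality. For the $\g^*$-integral, the VWE inequality \eqref{cond:ind_opt_ve_inf} contributes a nonpositive sign as soon as the test profile lies in $\FX(A)$; since $\tilde{\xx}\esnu$ only lies in $\FX\esnu(A\esnu)$, I must correct it to a feasible profile in $\FX(A)$ and charge the displacement to the pure-action gap $\mdset\esnu$ and the constraint-set gap $D\esnu$, paired against $\g^*$ which is bounded by $\Bdf$. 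For the $\tilde{\g}\esnu$-integral I would choose, on each $\Theta_i\esnu$, a selection $\tilde{\g}\esnu_\th\in\partial_1 f_\th(\tilde{\xx}\esnu_\th,\hxxag\esnu)$ lying within $\mduti\esnu+\mld\esnu$ of the VNE subgradient $\hat{\g}_i\esnu$; this is legitimated by the chain $\partial_1\hat f_i\esnu(\hxx_i\esnu,\hxxag\esnu_{-i})\!\approx_{\mld\esnu}\!\partial_1 f_i\esnu(\hxx_i\esnu,\hxxag\esnu)\!\approx_{\mduti\esnu}\!\partial_1 f_\th(\tilde{\xx}\esnu_\th,\hxxag\esnu)$ read off \eqref{eq:def_ld} and \eqref{eq:def_dut}, and it shows en passant that $\|\hat{\g}_i\esnu\|\le\Bdf+\mld\esnu+\mduti\esnu$. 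Using $\int_{\Theta_i\esnu}(\tilde{\xx}\esnu_\th-\sxx_\th)\dth=\hxx_i\esnu-\bpsi_i\esnu(\sxx)$, the leading part collapses to the finite inner product $\langle\hat{\g}\esnu,\hxx\esnu-\bpsi\esnu(\sxx)\rangle$, to which I apply the finite VNE inequality \eqref{cond:ind_opt_ve} at a feasible test point obtained by pushing $\bpsi\esnu(\sxx)$ into $\FX\esnu(A\esnu)$; the residual subgradient discrepancy integrates to at most $(\mduti\esnu+\mld\esnu)$ times a uniform bound of order $2M+1$ on the action differences over $\M$. Collecting the two contributions, substituting $K_A=\tfrac{M+1}{\min\{\rho,\bar{\rho}\}}$, and dividing by $\stgccvut$ (resp. $\beta$) yields \eqref{eq:cvg_indiv} (resp. \eqref{eq:cvg_agg_nou}); the no-constraint case drops the $A$-correction, i.e. sets $K_A$ and $D\esnu$ to $0$.

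The main obstacle, and the source of the constants $\rho$ and $\bar{\rho}$, is the feasibility-restoration step invoked twice above: given a profile whose pointwise actions lie in the scaled finite sets or whose aggregate sits in $A\esnu$, I must project it into $\FX(A)$—and dually push $\bpsi\esnu(\sxx)$ into $\FX\esnu(A\esnu)$—with a \emph{total} displacement bounded \emph{linearly} in the Hausdorff gaps $\mdset\esnu$ and $D\esnu$, so that pairing against the $\Bdf$-bounded subgradients costs only $O(\Bdf K_A\max(D\esnu,\mdset\esnu))$ rather than something of order the full mass. This is precisely where \Cref{asp:intpt} enters: the uniformly nonvanishing relative interior of the pure-action sets (witnessed by $\bar{\xx}$ at radius $\eta$) together with $\rlt(A\cap\Sxag)\neq\emptyset$ lets me slide a feasible point toward an interior point to absorb a violation of size $\varepsilon$ at cost $O(\varepsilon)$, with $\rho$ and $\bar{\rho}$ the individual- and aggregate-level interior radii; this also explains the admissibility window $\max(\mdset\esnu,D\esnu)<\min\{\bar{\rho},\rho\}$, in which the violation is small enough to be corrected. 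Proving this Lipschitz-type estimate uniformly in $\snu$ and across the partition $(\Theta_i\esnu)_i$ is the delicate part; once it is in hand, the remainder—expanding the inner products, applying Cauchy--Schwarz against the bounds $\Bdf$ and $2M+1$, and assembling the three subgradient-weighted terms into $(3\Bdf+1)K_A\max(D\esnu,\mdset\esnu)$—is routine.
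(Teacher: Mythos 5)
Your proposal is correct and follows essentially the same route as the paper's proof: lifting $\hxx\esnu$ via $\psi\esnu$, pairing the two GVIs through (aggregative) strong monotonicity, splitting the monotonicity integral into the VWE term, the $\Bdf$-weighted displacement term, the subgradient-discrepancy term of size $(2M+1)(\mduti\esnu+\mld\esnu)$, and the finite VNE term $\langle\hat{\g}\esnu,\hxx\esnu-\bpsi\esnu(\sxx)\rangle$. The feasibility-restoration step you single out as the crux is exactly the paper's \Cref{lm:intprofile,lem:dist_agg_genized_sets}, proved there by the same interior-point sliding argument (convex combination with the profile $\zz$ furnished by \Cref{asp:intpt}) that you sketch.
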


Some notions and a series of lemmas are needed for the proof of \Cref{thm:converge_with_u}.
\medskip

Firstly, for $\snu\in \nit^*$, define a new norm $\|\cdot \|_\snu$ on $\rit^{T\I\esnu}$ by $\|\xx\esnu\|_\snu \eqd \left(\sum_{\i\in \I\esnu} \frac{\|\xx\esnu_\i\|^2}{\mu\esnu_i}\right)^{\frac{1}{2}}= \langle \xx\esnu,D_{\mu\esnu} \xx\esnu\rangle^{\frac{1}{2}}$, where  $D_{\mu\esnu}  $ is the symmetric positive definite matrix diag$(\frac{1}{\mu_1\esnu}I_{T}, \dots , \frac{1}{\mu_{N\esnu}\esnu}I_{T})$, with $I_{T}$ being the $T$-dimensional identity matrix. Denote by $d_\snu$ the associated metric, and by $\text{diam}_\snu$ the diameter of a set with norm $\|\cdot \|_\snu$. 

It is easy to see that $\|\xx\esnu\|\leq \sqrt{\mmu\esnu} \|\xx\esnu\|_\snu$, where $\mmu\esnu \eqd\max_{\i \in\I\esnu}\mu_i\esnu$.

The advantage of using $\snu$-norm for game $\G\esnu$ is that it keeps the norm of pure-action profiles at the same order of those in the nonatomic game, instead of $ \sqrt{\mmu\esnu}$ times smaller. Indeed, $\psi\esnu$ is an isomorphism from $\FX\esnu$ (with $\snu$-norm) to $\psi\esnu(\FX\esnu)$ (with $L^2$-norm):
\begin{equation*}
\|\psi\esnu(\xx\esnu)\|^2_2=\int_{\Theta}\|\psi\esnu_\th(\xx\esnu)\|^2\dth=
\sum_{\i\in \I\esnu}\int_{\Theta\esnu_\i}\|\frac{\xx\esnu_i}{\mu\esnu_i}\|^2\dth=\sum_{\i\in \I\esnu}\frac{\|\xx\esnu_i\|^2}{\mu\esnu_i}=\|\xx\esnu\|^2_\snu .
\end{equation*} 

Besides, since $\X_\i\esnu$'s, $\X_\th$'s, $\FX\esnu(A\esnu)$'s and $\FX(A)$ are all convex and closed in their respective Hilbert spaces, the projection functions onto these sets are well defined.
\begin{notation} Let $\Pi\esnu_i(\cdot)$ denote the  projection function onto $\frac{1}{\mu\esnu_i}\X_\i\esnu$ for $\i\in\I\esnu$ and $\Pi_\th(\cdot)$ the projection function onto $\X_\th$ for $\th \in \Theta$. 

Let $\Pi\esnu$ denote the  projection function onto $\FX\esnu(A\esnu)\subset \rit^{I\esnu T}$, and $\Pi$ the projection function onto $\FX(A) \subset L^2([0,1], \rit^T; \mu)$.
\end{notation}

The following \Cref{lem:norm_xat_approx} shows that the players become infinitesimal along an ASS. When scaled by $\frac{1}{\mmu\esnu}$, they stay at the same order as a nonatomic player.
\begin{lemma}\label{lem:norm_xat_approx}
Under \Cref{ass_X_nonat}, for all $\snu \in \nit^*$,  $\|\xx\esnu\|_\snu\leq \mdset\esnu+M$ and $\|\xx\esnu\|\leq \sqrt{\mmu\esnu}(\mdset\esnu  + M)$ for all $\xx\esnu\in \FX\esnu$.
\end{lemma}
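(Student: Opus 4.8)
The plan is to reduce the statement to a componentwise estimate on $\tfrac{1}{\mu_i\esnu}\norm{\xx_i\esnu}$ and then assemble the $\snu$-norm, exploiting that the block measures $\mu_i\esnu$ sum to one. First I would fix $\snu\in\nit^*$ and $\xx\esnu=(\xx_i\esnu)_{i\in\I\esnu}\in\FX\esnu$, so that $\xx_i\esnu\in\X_i\esnu$ for each $i$; the scaled action $\tfrac{1}{\mu_i\esnu}\xx_i\esnu$ then lies in $\tfrac{1}{\mu_i\esnu}\X_i\esnu$, and since $\mu_i\esnu>0$ the block $\Theta_i\esnu$ is nonempty, so I may pick some $\th\in\Theta_i\esnu$.

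The key step is the comparison with the nonatomic sets via condition (i) of the AAS definition. By \eqref{eq:def_dset}, the Hausdorff distance $d_H(\X_\th,\tfrac{1}{\mu_i\esnu}\X_i\esnu)$ is at most $\dset_i\esnu\le\mdset\esnu$, so the distance from the point $\tfrac{1}{\mu_i\esnu}\xx_i\esnu\in\tfrac{1}{\mu_i\esnu}\X_i\esnu$ to the set $\X_\th$ is at most $\mdset\esnu$. Since $\X_\th$ is compact, this distance is realized at some $\zz\in\X_\th$ with $\norm{\tfrac{1}{\mu_i\esnu}\xx_i\esnu-\zz}\le\mdset\esnu$. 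Using $\X_\th\subset N_M(\mathbf{0})$, i.e. $\norm{\zz}\le M$, the triangle inequality gives $\tfrac{1}{\mu_i\esnu}\norm{\xx_i\esnu}\le\mdset\esnu+M$, equivalently $\norm{\xx_i\esnu}\le\mu_i\esnu(\mdset\esnu+M)$, for every $i\in\I\esnu$.

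To close the first bound I would substitute this into the definition of the $\snu$-norm:
\[
\norm{\xx\esnu}_\snu^2=\sum_{i\in\I\esnu}\frac{\norm{\xx_i\esnu}^2}{\mu_i\esnu}\le(\mdset\esnu+M)^2\sum_{i\in\I\esnu}\mu_i\esnu=(\mdset\esnu+M)^2,
\]
where the last equality uses that $(\Theta_0\esnu,\dots,\Theta_{I\esnu}\esnu)$ partitions $[0,1]$ with $\mu_0\esnu=0$, hence $\sum_{i\in\I\esnu}\mu_i\esnu=\mu(\Theta)=1$. Taking square roots yields $\norm{\xx\esnu}_\snu\le\mdset\esnu+M$, and the second bound follows immediately from the already-recorded inequality $\norm{\xx\esnu}\le\sqrt{\mmu\esnu}\,\norm{\xx\esnu}_\snu$.

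There is no genuine obstacle here: the estimate is a direct consequence of condition (i) together with the uniform bound $\X_\th\subset N_M(\mathbf{0})$. The only points deserving slight care are invoking compactness of $\X_\th$ to turn the Hausdorff-distance inequality into an actual nearby point $\zz$, and the normalization $\sum_{i\in\I\esnu}\mu_i\esnu=1$, which is precisely what makes the weighted sum collapse and keeps the bound independent of how fine the partition of $\Theta$ is.
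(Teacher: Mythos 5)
Your proof is correct and follows essentially the same route as the paper's: the paper also scales by $\mu_i\esnu$, uses the Hausdorff bound $\dset_i\esnu$ from the AAS definition together with the projection $\Pi_\th$ onto $\X_\th$ (your compactness-realized point $\zz$ plays the identical role), bounds $\|\Pi_\th(\cdot)\|\le M$ via \Cref{ass_X_nonat}, and sums over $i$ using $\sum_i\mu_i\esnu=1$. The only cosmetic difference is that the paper keeps the per-component $\dset_i\esnu$ until the final summation, whereas you pass to $\mdset\esnu$ immediately.
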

\begin{proof}
\ifproofs
Let $\xx\esnu_\i \in \X_\i\esnu$ and $\th \in \Theta_\i\esnu$.  By definition of $\dset_\i\esnu$,  $ \big\|\frac{\xx\esnu_i}{\mu_\i\esnu} -\Pi_\th\big(\frac{\xx\esnu_i}{\mu_\i\esnu}\big)\big\| \leq  \dset_\i\esnu$ so that $\norm{\xx\esnu_i} \leq \mu_i\esnu \big( \dset_\i\esnu  + \big\|\Pi_\th(\frac{\xx\esnu_i}{\mu_\i\esnu})\big\| \big) \leq \mu_i\esnu (\dset_\i\esnu  + M) $. Then, $\|\xx\esnu\|^2_\snu=\sum_{i=1}^{I\esnu}\frac{\|\xx\esnu_i\|^2}{\mu\esnu_i}\leq\sum_{i=1}^{I\esnu} \mu_i\esnu (\dset_\i\esnu  + M)^2\leq  (\mdset\esnu  + M)^2$, and hence $\|\xx\esnu\|^2\leq \mmu\esnu(\dset\esnu  + M)^2$.
\else
Apply the triangle inequality to $\norm{\frac{\xx_i}{\mu_\i\esnu} -P_{\X_\th}\Big(\frac{\xx_i}{\mu_\i\esnu}\Big)}$ where $P_{\X_\th}$ is the projection on $\X_\th$ for a $ \th \in \Theta_i$. 
\fi
\end{proof}

The following lemma shows that the convergence of the pure-action set of a player in finite-player game $\Gna\esnu$ to that of her corresponding nonatomic player in $\Gna$, assumed by \Cref{eq:def_dset}, can be described by the convergence of spaces of pure-action profiles.  Although $\FX\esnu$ is of finite dimension while $\FX$ is of infinite dimension, the isomorphism $\psi\esnu$ is used to here so that the analysis is done on $L^2([0,1], \rit^T)$.
\begin{lemma}[Convergence of $\FX\esnu$ to $\FX$]\label{lm:FY}
Under \Cref{ass_X_nonat}, for all $\snu\in \nit^*$,\\
(1) for each $\xx\esnu \in \FX\esnu$, $d_2(\psi\esnu(\xx\esnu), \FX)\leq \mdset\esnu$;\\
(2) for each $\xx\in \FX$, $d_\snu(\bpsi\esnu(\xx), \FX\esnu)\leq \mdset\esnu$, $d(\bpsi\esnu(\xx), \FX\esnu)\leq \sqrt{\mmu\esnu}\mdset\esnu$;\\
(3) for each $i\in \I\esnu$ and each $\xx\esnu_i\in \X\esnu_i$, if $d(\frac{\xx\esnu_i}{\mu\esnu_i}, \rbd \frac{\X\esnu_i}{\mu\esnu_i})>\dset\esnu_i$, then $\frac{\xx\esnu_i}{\mu\esnu_i}\in \X_\th$ for all $\th\in \Theta\esnu_i$;\\
(4) for each $i\in \I\esnu$, each $\th\in \Theta\esnu_i$, and each $\xx_\th\in \X_\th$, if $d(\xx_\th, \rbd \X_\th)>\dset\esnu_i$, then $\mu\esnu_i \xx_\th \in \X_i\esnu$.
\end{lemma}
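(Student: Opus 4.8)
The plan is to treat the four claims as two approximation estimates, (1) and (2), each obtained by an explicit projection, plus a single geometric separation fact that underlies both (3) and (4). Throughout I use that $(\Theta\esnu_i)_i$ partitions $[0,1]$ up to the null set $\Theta\esnu_0$, so $\sum_{i\in\I\esnu}\mu\esnu_i=1$, and that for each fixed $i$ the profile $\psi\esnu_\th(\xx\esnu)=\xx\esnu_i/\mu\esnu_i$ is constant on $\Theta\esnu_i$.

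For (1), given $\xx\esnu\in\FX\esnu$ I would define $\yy\in\FX$ pointwise by $\yy_\th\eqd\Pi_\th\big(\psi\esnu_\th(\xx\esnu)\big)$. On each $\Theta\esnu_i$ this is the minimizer over $\X_\th$ of the distance to the fixed point $\xx\esnu_i/\mu\esnu_i$, so measurability of $\th\mapsto\yy_\th$ follows from the measurable maximum theorem exactly as in the proof of \Cref{lm:bestreply}(2); hence $\yy\in\FX$. Since $\xx\esnu_i/\mu\esnu_i\in\tfrac1{\mu\esnu_i}\X\esnu_i$, the definition of $\dset\esnu_i$ gives $\norm{\psi\esnu_\th(\xx\esnu)-\yy_\th}=d\big(\xx\esnu_i/\mu\esnu_i,\X_\th\big)\le d_H\big(\tfrac1{\mu\esnu_i}\X\esnu_i,\X_\th\big)\le\dset\esnu_i$, and integrating over $\Theta\esnu_i$ and summing over $i$ yields $\norm{\psi\esnu(\xx\esnu)-\yy}_2^2\le\sum_i\mu\esnu_i(\dset\esnu_i)^2\le(\mdset\esnu)^2$, that is $d_2(\psi\esnu(\xx\esnu),\FX)\le\mdset\esnu$.

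For (2), given $\xx\in\FX$ set $\bar\xx_i\eqd\bpsi\esnu_i(\xx)/\mu\esnu_i$, the average of $\xx_\th$ over $\Theta\esnu_i$, and take the candidate $\zz\esnu\eqd\big(\mu\esnu_i\,\Pi\esnu_i(\bar\xx_i)\big)_i\in\FX\esnu$. The key estimate is $d\big(\bar\xx_i,\tfrac1{\mu\esnu_i}\X\esnu_i\big)\le\dset\esnu_i$: projecting each $\xx_\th$ onto the fixed convex compact set $\tfrac1{\mu\esnu_i}\X\esnu_i$ and averaging these projections to a point $\mathbf q_i$ (which lies in that set by convexity), one has $\norm{\bar\xx_i-\mathbf q_i}\le\tfrac1{\mu\esnu_i}\int_{\Theta\esnu_i}d\big(\xx_\th,\tfrac1{\mu\esnu_i}\X\esnu_i\big)\dth\le\dset\esnu_i$. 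Consequently $\norm{\bpsi\esnu(\xx)-\zz\esnu}_\snu^2=\sum_i\mu\esnu_i\norm{\bar\xx_i-\Pi\esnu_i(\bar\xx_i)}^2\le(\mdset\esnu)^2$, giving $d_\snu(\bpsi\esnu(\xx),\FX\esnu)\le\mdset\esnu$; the Euclidean bound then follows from $\norm{\cdot}\le\sqrt{\mmu\esnu}\norm{\cdot}_\snu$.

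The heart of the argument is the geometric claim behind (3) and (4): if $C,D\subset\rit^T$ are convex compact with $\spa C=\spa D$ and $d_H(C,D)\le\delta$, then any $\mathbf a\in C$ with $d(\mathbf a,\rbd C)>\delta$ lies in $D$. I would prove this by contradiction. If $\mathbf a\notin D$, let $\mathbf n\eqd(\mathbf a-\Pi_D(\mathbf a))/\epsilon$ with $\epsilon=d(\mathbf a,D)\in(0,\delta]$; the projection inequality gives $\langle\mathbf n,\mathbf d\rangle\le\langle\mathbf n,\mathbf a\rangle-\epsilon$ for all $\mathbf d\in D$. Because $\mathbf a,\Pi_D(\mathbf a)\in\spa C=\spa D$, the unit direction $\mathbf n$ lies in $\aff C$, and $d(\mathbf a,\rbd C)>\delta$ forces the relative ball of radius $\delta$ about $\mathbf a$ to stay in $C$, so $\mathbf b\eqd\mathbf a+\delta\mathbf n\in C$; then $\langle\mathbf n,\mathbf b-\mathbf d\rangle\ge\delta+\epsilon$ for every $\mathbf d\in D$, hence $d(\mathbf b,D)\ge\delta+\epsilon>\delta$, contradicting $d_H(C,D)\le\delta$ since $\mathbf b\in C$. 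Applying this with $(C,D)=(\tfrac1{\mu\esnu_i}\X\esnu_i,\X_\th)$ yields (3), and with $(C,D)=(\X_\th,\tfrac1{\mu\esnu_i}\X\esnu_i)$ yields (4). I expect the main obstacle to be exactly this separation claim: one must verify carefully that $\mathbf n$ is a feasible relative direction (using $\spa C=\spa D$, so that $\aff C=\spa C$ whenever the hypothesis $d(\mathbf a,\rbd C)>\delta>0$ is non-vacuous) and that $\mathbf b$ remains admissible in $C$; by comparison the measurability bookkeeping in (1)--(2) is routine.
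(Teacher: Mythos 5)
Your proof is correct and takes essentially the same route as the paper's: pointwise projection onto $\X_\th$ plus the measurable maximum theorem for (1), averaging projections into the convex set $\tfrac{1}{\mu\esnu_i}\X\esnu_i$ for (2), and for (3)--(4) the same push-out contradiction (moving a point of one set beyond its projection onto the other so as to violate the Hausdorff bound $\dset\esnu_i$). The only difference is presentational: you package (3)--(4) as one symmetric geometric lemma with the supporting-hyperplane inequality made explicit, where the paper proves (3) directly and declares (4) similar.
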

\begin{proof}
(1) Let $\xx\esnu\in \FX\esnu$. For each $i\in \I\esnu$ and each $\th \in \Theta_i\esnu$, define $\yy_\th= \Pi_\th ( \psi\esnu_\th(\xx\esnu )) \in \X_\th$, so that $\|\yy_\th - \psi\esnu_\th(\xx\esnu ) \| \leq \dset_\i\esnu $. Let us show that  $\yy$ is measurable on each $\Theta_i\esnu$, hence measurable on $\Theta$ so that $\yy\in \X$. 
For that,  define $\kappa_i$ on $\Theta\esnu_i\times \M^T$ by $\kappa_i:(\th,\ww)\mapsto \|\xx\esnu_i - \ww\|$. Then, $\kappa_i$ is a Carath\'eodory function. Since the correspondence $\Theta\esnu_i \ni \th \mapsto \X_\th$ is measurable, according to the measurable maximum theorem \cite[Thm. 18.19]{aliprantis2006infinite}, there is a measurable selection of $\xx_\th\in \arg\min_{\X_\th}\kappa_i(\th,\cdot)$. The minimum of $\kappa_i(\th,\cdot)$ on $\X_\th$ is unique and is just $\yy_\th$, hence $\yy$ is measurable on $\Theta_i\esnu$.

Then, $\| \psi\esnu(\xx\esnu ) - \yy\|_2  \leq \mdset\esnu$, which shows that $d_2(\psi\esnu(\xx\esnu ), \FX)\leq \mdset\esnu$. 

(2) 
Let $\xx \in \FX$. For each $\i\in\I\esnu$, $\th\in\Theta\esnu_i$, $\|\xx_\th - \Pi\esnu_{\i}(\xx_\th) \|\leq \dset_i\esnu$.
 Since $\frac{1}{\mu\esnu_i}\X\esnu_{i}$ is a convex subset in $\rit^T$, $\frac{1}{\mu\esnu_i}\int_{\Theta\esnu_i}\Pi\esnu_{i}(\xx_\th)\dth\in \frac{1}{\mu\esnu_i}\X\esnu_{i}$. Define $\yy\in \FX\esnu$ by $\yy_i \eqd \int_{\Theta\esnu_i}\Pi\esnu_{i}(\xx_\th)\dth \in \X\esnu_i$ for $i\in \I\esnu_i$. Then,
\begin{align*}
&\|\bpsi\esnu(\xx)-\yy \|^2_\nu= \sum_{i\in \I\esnu} \frac{1}{\mu\esnu_i}\|\bpsi\esnu_i(\xx)-\int_{\Theta\esnu_i}\Pi\esnu_{i}(\xx_\th)\dth\|^2
=\sum_{i\in \I\esnu} \frac{1}{\mu\esnu_i}\|\int_{\Theta\esnu_i}(\xx_\th - \Pi\esnu_{i}(\xx_\th))\dth\|^2
\\ & \leq \sum_{i\in \I\esnu} \frac{1}{\mu\esnu_i}\mu_i\esnu \int_{\Theta_i}\|\xx_\th - \Pi\esnu_{i}(\xx_\th)\|^2 \dth = \sum_{i\in \I\esnu}  \int_{\Theta_i}\|\xx_\th - \Pi\esnu_{i}(\xx_\th)\|^2 \dth \leq \sum_{i\in \I\esnu}\mu_i\esnu (\dset\esnu_i)^2 \leq (\mdset\esnu)^2,
\end{align*}
so that $\|\bpsi\esnu(\xx)-\yy \|_\nu\leq \mdset\esnu$, and $\|\bpsi\esnu(\xx)-\yy \|\leq \sqrt{\mmu\esnu}\mdset\esnu$.  This concludes the proof.

(3) Fix $\snu\in\nit^*$, $i\in \I\esnu$ and $\th\in \Theta\esnu_i$. Consider $\xx\esnu_i\in \X\esnu_i$ such that $d(\frac{\xx\esnu_i}{\mu\esnu_i}, \rbd \frac{\X\esnu_i}{\mu\esnu_i})>\eta$ for some $\eta>\dset\esnu_i$. 
Assume that $\frac{\xx\esnu_i}{\mu\esnu_i}\notin \X_\th$ i.e. $\big\|\frac{\xx\esnu_i}{\mu\esnu_i}-\Pi_{\X_\th}(\frac{\xx\esnu_i}{\mu\esnu_i})\big\|>0$. Let $\yy\esnu_i = \frac{\xx\esnu_i}{\mu\esnu_i} +\eta \frac{\frac{\xx\esnu_i}{\mu\esnu_i}-\Pi_{\X_\th}(\frac{\xx\esnu_i}{\mu\esnu_i})}{\big\|\frac{\xx\esnu_i}{\mu\esnu_i}-\Pi_{\X_\th}(\frac{\xx\esnu_i}{\mu\esnu_i})\big\|}\in \frac{\X\esnu_i}{\mu\esnu_i}$. Since $\X\esnu_i$ is convex, 
$d(\yy\esnu_i, \X_\th)=\big\|\frac{\xx\esnu_i}{\mu\esnu_i}-\Pi_{\X_\th}(\frac{\xx\esnu_i}{\mu\esnu_i})\big\|+\big\|\yy\esnu_i - \frac{\xx\esnu_i}{\mu\esnu_i}\big\|>\eta> \mdset\esnu$, which contradicts the fact that $d(\X_\th, \frac{\xx\esnu_i}{\mu\esnu_i})\leq \dset\esnu_i$. Hence $\frac{\xx\esnu_i}{\mu\esnu_i} \in \X_\th$.

(4) The proof is similar to that of (3).
\end{proof}

The sets of aggregate-action profiles in $\G\esnu(A\esnu)$ \emph{with aggregative constraints} converges to the set of aggregate-action profiles of the nonatomic game $\G(A)$ \emph{with an aggregative constraint}, as the following lemma says.
\begin{lemma}\label{lem:hausdorff_agg_sets}
Under \Cref{ass_X_nonat},  for $\snu\in \nit^*$, \\
(1) $d_H ( \, \Sxag\esnu,\Sxag \,) \leq \mdset\esnu$;\\
(2) for $\xxag\in \rlt\Sxag$, if $d(\xxag, \rbd \Sxag)>\mdset\esnu$, then $\xxag\in \Sxag\esnu$;
for $\xxag\esnu\in \rlt\Sxag\esnu$, if $d(\xxag, \rbd \Sxag\esnu)>\mdset\esnu$, then $\xxag\in \Sxag$;\\
(3) for $\xxag\in \rlt A$, if $d(\xxag, \rbd A)>D\esnu$, then $\xxag\in \A\esnu$; 
 for $\xxag\esnu\in \rlt A\esnu$, if $d(\xxag\esnu, \rbd A\esnu) > D\esnu$, then $\xxag\esnu \in \A$; \\
(4) for $\xxag\in \rlt(\Sxag\cap A)$, if $d(\xxag, \rbd (\Sxag\cap A)) > \max (\mdset\esnu, D\esnu)$, then $\xxag\in \Sxag\esnu\cap A\esnu$; 
 for $\xxag\esnu\in \rlt(\Sxag\esnu\cap A\esnu)$, if $d(\xxag\esnu, \rbd (\Sxag\esnu \cap A\esnu)) > \max (\mdset\esnu, D\esnu)$, then $\xxag\esnu \in \Sxag\cap A$. 
\end{lemma}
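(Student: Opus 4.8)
The plan is to reduce everything to \Cref{lm:FY} together with one elementary geometric lemma about convex sets that are Hausdorff-close, the only genuinely delicate point being the treatment of the intersection in part~(4).

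For part~(1), the key observation is that both linear maps preserve aggregates: for every $\xx\esnu\in\FX\esnu$ one has $\int_\Theta\psi\esnu_\th(\xx\esnu)\dth=\sum_{\i\in\I\esnu}\xx\esnu_\i$, and for every $\xx\in\FX$ one has $\sum_{\i\in\I\esnu}\bpsi\esnu_\i(\xx)=\int_\Theta\xx_\th\dth$, because the $(\Theta\esnu_\i)_\i$ partition $\Theta$. First I would take $\xxag\esnu\in\Sxag\esnu$, write it as the aggregate $\int\psi\esnu(\xx\esnu)$ of some $\xx\esnu\in\FX\esnu$, and use \Cref{lm:FY}(1) to get $\yy\in\FX$ with $\|\psi\esnu(\xx\esnu)-\yy\|_2\leq\mdset\esnu$; since $\int\yy\in\Sxag$, a Jensen/Cauchy--Schwarz step on the probability space $\Theta$ gives $\|\xxag\esnu-\int\yy\|\le\|\psi\esnu(\xx\esnu)-\yy\|_2\le\mdset\esnu$. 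Symmetrically, starting from $\xxag=\int\xx\in\Sxag$ and using \Cref{lm:FY}(2) to approximate $\bpsi\esnu(\xx)$ in $\|\cdot\|_\snu$ by some $\yy\esnu\in\FX\esnu$, a weighted Cauchy--Schwarz step (using $\sum_{\i\in\I\esnu}\mu\esnu_\i=1$) bounds $\|\xxag-\sum_{\i\in\I\esnu}\yy\esnu_\i\|$ by $\|\bpsi\esnu(\xx)-\yy\esnu\|_\snu\le\mdset\esnu$. The two estimates give both inclusions $\Sxag\esnu\subseteq\Sxag+\mdset\esnu N_1(\mathbf{0})$ and $\Sxag\subseteq\Sxag\esnu+\mdset\esnu N_1(\mathbf{0})$, hence the Hausdorff bound.

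Parts~(2) and~(3) I would derive from a single geometric lemma: if $C,C'$ are nonempty convex closed sets with the same affine hull and $d_H(C,C')\le\varepsilon$, then every $\xxag\in C$ with $d(\xxag,\rbd C)>\varepsilon$ lies in $C'$. To prove it, assume $\xxag\notin C'$, let $p=\Pi_{C'}(\xxag)$, and move from $\xxag$ in the direction $\frac{\xxag-p}{\|\xxag-p\|}$ (which lies in the common affine hull, since both $\xxag$ and $p$ do) by an amount arbitrarily close to $\varepsilon$; the resulting point $\yy$ still lies in $C$ because $d(\xxag,\rbd C)>\varepsilon$, while the projection inequality $\langle \xxag-p, b-p\rangle\le0$ for all $b\in C'$ forces $d(\yy,C')\ge d(\xxag,C')+\varepsilon>\varepsilon$, contradicting $d_H(C,C')\le\varepsilon$. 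Applying this to $(\Sxag,\Sxag\esnu)$ with $\varepsilon=\mdset\esnu$ (using part~(1) and $\spa\X\esnu_\i=\spa\X_\th$ to match affine hulls) yields part~(2); applying it to $(A,A\esnu)$ with $\varepsilon=D\esnu=d_H(A\esnu,A)$ and $\spa A=\spa A\esnu$ yields part~(3); both directions follow by swapping the roles of the two sets.

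For part~(4) the plan is to transfer the hypothesis on $\rbd(\Sxag\cap A)$ to the two factors and then invoke parts~(2) and~(3). Under the standing assumption $\rlt(\Sxag\cap A)=\rlt\Sxag\cap\rlt A\neq\emptyset$, so that $\xxag\in\rlt\Sxag\cap\rlt A$, the main geometric step is the inequality
\begin{equation*}
d\big(\xxag,\rbd(\Sxag\cap A)\big)\le\min\big\{\,d(\xxag,\rbd\Sxag),\ d(\xxag,\rbd A)\,\big\},
\end{equation*}
which I would establish by taking the nearest boundary point $p$ of, say, $\Sxag$: if $p\in A$ then $p\in\rbd(\Sxag\cap A)$ directly, and otherwise the segment $[\xxag,p]$ meets $\rbd A$ at a closer point that still lies in $\Sxag$, hence in $\rbd(\Sxag\cap A)$. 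Given this, $d(\xxag,\rbd(\Sxag\cap A))>\max(\mdset\esnu,D\esnu)$ implies $d(\xxag,\rbd\Sxag)>\mdset\esnu$ and $d(\xxag,\rbd A)>D\esnu$, so parts~(2) and~(3) give $\xxag\in\Sxag\esnu$ and $\xxag\in A\esnu$; the reverse inclusion is symmetric. I expect this displayed inequality to be the main obstacle, since it fails when the affine hulls of $\Sxag$, $A$ and $\Sxag\cap A$ do not coincide (a segment-versus-disk configuration already breaks it); the role of the transversality/nonempty-relative-interior assumption is precisely to guarantee that the nearest-boundary-point construction stays within the common affine hull.
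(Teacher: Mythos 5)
Parts (1)--(3) of your proposal are correct and essentially reproduce the paper's own argument: for (1) the paper also combines \Cref{lm:FY} with the fact that aggregation is nonexpansive from the $\snu$-norm to $\rit^T$ (your weighted Cauchy--Schwarz step), and its proof of (2)--(3) is declared ``similar to that for \Cref{lm:FY}.(3)'', which is exactly your projection-and-push lemma. Stating the common-affine-hull hypothesis explicitly, as you do, is if anything cleaner than the paper's appeal to $\spa\X\esnu_i=\spa\X_\th$ and $\spa A\esnu=\spa A$ (neither you nor the paper actually derives $\aff\Sxag\esnu=\aff\Sxag$ from those span conditions, but you are at the same level of rigor as the source on this point).

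The genuine gap is in part (4), and your own warning about it is the heart of the matter, but your proposed repair does not work. First, $\rlt(\Sxag\cap A)=\rlt\Sxag\cap\rlt A$ is not the paper's standing assumption: the paper only assumes $\rlt(A\cap\Sxag)\neq\emptyset$, whereas that identity requires the strictly stronger transversality condition $\rlt\Sxag\cap\rlt A\neq\emptyset$. Second, and decisively, transversality does not rescue your displayed inequality. Take $\Sxag=[-1,1]\times[-0.1,0.1]$ and $A=[-1,1]\times\{0\}$ in $\rit^2$ (or a disk and a chord near its boundary): then $\rlt\Sxag\cap\rlt A\neq\emptyset$, yet for $\xxag=(0,0)$ one has
\begin{equation*}
d\big(\xxag,\rbd(\Sxag\cap A)\big)=1>0.1=\min\big\{d(\xxag,\rbd\Sxag),\,d(\xxag,\rbd A)\big\}\ .
\end{equation*}
So the inequality fails even with transversality; it is \emph{not} true that transversality forces the nearest-boundary-point construction to stay in a common affine hull, because there is no common affine hull here. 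What your two-case argument (nearest boundary point of $\Sxag$ lies in $A$, or the segment $[\xxag,p]$ exits $A$ through $\rbd A$) actually needs is $\aff\Sxag=\aff A$; combined with transversality this gives $\aff(\Sxag\cap A)=\aff\Sxag=\aff A$, and then both cases are sound. No hypothesis of this kind is available: \Cref{ass_X_nonat} says nothing about $A$, and condition (iv) of \Cref{def:approx_seq} matches $\spa A\esnu$ with $\spa A$, not with $\spa\Sxag$.

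You should also know this is not a defect of your write-up alone: the paper's entire proof of (4) is the sentence ``these are corollaries of (2) and (3)'', which tacitly relies on the same false inequality. Indeed the rectangle-and-segment geometry can be realized by admissible data for a fixed $\snu$ (all $\X_\th$ equal to the thin rectangle, each $\tfrac{1}{\mu\esnu_i}\X\esnu_i$ equal to a slightly tilted strip that still meets $A$, $A\esnu=A$, constant cost functions), and then the point $\xxag=(0.8,0)$ satisfies the hypotheses of (4) but lies outside $\Sxag\esnu\cap A\esnu$. So statement (4) is false as written and cannot be closed along the route you and the paper both take; it becomes correct, with your argument, under an additional compatibility assumption such as $\aff A=\aff\Sxag$. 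Your instinct about where the obstacle lies was exactly right; the claimed resolution via transversality is not.
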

\begin{proof}
\ifproofs
(1)  Define mapping $S\esnu$ from $\M^{I\esnu}$ to $\rit^T$ by $S\esnu (\xx\esnu)=\sum_{i\in \I\esnu}\xx\esnu_i$. Then $S\esnu$ is nonexpansive hence it is continuous. To see this, let $\zz\esnu$ and $\ww\esnu$ be in $\M^{I\esnu}$, then by the Cauchy-Schwarz inequality,
\begin{equation*}
 \big\|\sum_{i\in \I\esnu}\zz\esnu_i - \sum_{i\in\I\esnu}\ww\esnu_i \big\|^2
 =\big\|\sum_{i\in\I\esnu}\frac{\zz\esnu_i - \ww\esnu_i }{ \sqrt{\mu\esnu_i}} \sqrt{\mu\esnu_i}\big\|^2 \leq \sum_{\i\in\I\esnu}\frac{\|\zz\esnu_i - \ww\esnu_i\|^2}{\mu\esnu_i}\cdot \sum_{i\in\I\esnu}\mu\esnu_i= \|\zz\esnu-\ww\esnu \|^2_\nu.
 \end{equation*}

Now fix $\xx \in \FX$. 
Consider $\yy\in \FX\esnu$ such that $\|\bpsi\esnu(\xx)-\yy \|_\nu\leq \mdset\esnu$ (cf. \Cref{lm:FY}). Then $\|\txt\sum_{i\in \I\esnu}\bpsi_i\esnu(\xx) - \txt\sum_{i\in\I\esnu}\yy_i\|^2\leq \|\bpsi\esnu(\xx)-\yy \|^2_\nu \leq (\mdset\esnu)^2$. Hence $d(\int \xx, \Sxag\esnu)\leq \|\txt\sum_{i\in \I\esnu}\bpsi_i\esnu(\xx) - \txt\sum_{i\in\I\esnu}\yy_i\|\leq \mdset\esnu$.

On the other hand, let $\xx\esnu\in \FX\esnu$, thus $\xxag\esnu \eqd \sum_{\i\in\I\esnu} \xx\esnu_i \in \Sxag\esnu$. For each $i\in \I\esnu$ and each $\th \in \Theta_i\esnu$, define $\yy_\th= \Pi_\th\big(\frac{1}{\mu_\i\esnu} \xx\esnu_i \big) \in \X_\th$, so that $\|\frac{1}{\mu_\i\esnu}\xx\esnu_i - \yy_\th \| \leq \dset_\i\esnu $. Then, $\|\sum_{\i\in\I\esnu} \xx\esnu_i  - \int \yy \| \leq \sum_{\i\in\I\esnu}  \|\int_{\Theta_\i\esnu}\frac{1}{\mu_\i\esnu}  \xx\esnu_i  - \yy_\th \dth  \| \leq  \sum_{\i\in\I\esnu}  \int_{\Theta_\i\esnu} \| \frac{1}{\mu_\i\esnu}  \xx\esnu_i  - \yy_\th \| \dth \leq  \sum_{i\in\I\esnu} \mu_\i\esnu \dset_\i\esnu \leq \mdset\esnu $, 
which shows that $d \left( \xxag\esnu, \Sxag\right) \leq \mdset\esnu$ for all $\xxag\esnu\in\ \Sxag\esnu$. 
\smallskip

(2-3)  The proof is similar to that for \Cref{lm:FY}.(3).

(4) These are corollaries of (2) and (3).
%
%
\end{proof}

\begin{lemma}\label{lm:intprofile}
Under \Cref{ass_X_nonat,asp:intpt}, there is a strictly positive constant $\rho$ and a nonatomic pure-action profile $\zz\in \FX$ such that $\int \zz \in \rlt (\Sxag\cap A)$ and, for almost all $\th\in \Theta$, $d(\zz_\th, \rbd \X_\th)>3\rho$.
\end{lemma}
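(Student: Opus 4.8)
The plan is to build $\zz$ as a convex combination of two profiles: the uniformly-interior profile $\bar{\xx}$ furnished by \Cref{asp:intpt}, and an auxiliary profile whose aggregate is chosen so that the combination lands \emph{exactly} on a prescribed relative-interior point of $\Sxag\cap A$. Concretely, I would fix a target $\bv^*\in\rlt(\Sxag\cap A)$ (nonempty by the standing assumption of this section) and note that $\bv^*\in\rlt\Sxag$ as well; write $\bv_0\eqd\int\bar{\xx}\in\Sxag$. Since $\bv^*\in\rlt\Sxag$, for every small enough $s>0$ the outward point $\bv_+\eqd\bv^*+s(\bv^*-\bv_0)=(1+s)\bv^*-s\bv_0$ still lies in $\Sxag$, and by definition of $\Sxag$ there is $\yy^+\in\FX$ with $\int\yy^+=\bv_+$ (a measurable selection, as in the proof of \Cref{lm:bestreply}).

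Setting $\lambda\eqd\tfrac{s}{1+s}\in(0,1)$ and $\zz\eqd(1-\lambda)\yy^++\lambda\bar{\xx}$, one sees at once that $\zz\in\FX$ (it is measurable, being a combination of measurable profiles) and that its aggregate is exactly the target:
\[
\int\zz=(1-\lambda)\bv_++\lambda\bv_0=\tfrac{1}{1+s}\big((1+s)\bv^*-s\bv_0\big)+\tfrac{s}{1+s}\bv_0=\bv^*\in\rlt(\Sxag\cap A).
\]
This settles the aggregate requirement. Note that the extension-then-recombination trick is exactly what lets us realize a relative-interior aggregate by a uniformly-interior profile \emph{without} invoking a full relative-interior-of-the-Aumann-integral theorem.

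For the interior requirement I would use the elementary geometric fact that interiority is preserved proportionally under convex combination: if $\bm a\in C$ satisfies $N_\eta(\bm a)\cap\aff C\subseteq C$ and $\bm b\in C$, then $\bm c\eqd\lambda\bm a+(1-\lambda)\bm b$ satisfies $N_{\lambda\eta}(\bm c)\cap\aff C\subseteq C$ (any affine point within $\lambda\eta$ of $\bm c$ can be written as the $\lambda$-combination of a point of $N_\eta(\bm a)\cap\aff C$ with $\bm b$, hence lies in $C$ by convexity). Applying this with $C=\X_\th$, $\bm a=\bar{\xx}_\th$ of margin $\eta$ by \Cref{asp:intpt}, and $\bm b=\yy^+_\th\in\X_\th$, gives $d(\zz_\th,\rbd\X_\th)\geq\lambda\eta$ for almost every $\th$. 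Choosing $\rho\eqd\lambda\eta/4$ then yields $d(\zz_\th,\rbd\X_\th)>3\rho$ a.e., completing the construction.

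The step I would treat most carefully — the main obstacle — is the claim $\bv^*\in\rlt\Sxag$, which is precisely what makes the outward extension $\bv_+\in\Sxag$ admissible; this is delicate because a relative-interior point of the constrained set $\Sxag\cap A$ may a priori sit on $\rbd\Sxag$. I would secure it by choosing $\bv^*\in\rlt\Sxag\cap\rlt A$, invoking the convex-analysis identity $\rlt(\Sxag\cap A)=\rlt\Sxag\cap\rlt A$ that holds once these relative interiors meet; under the present nondegeneracy hypothesis on $A\cap\Sxag$ this is the intended configuration, the excluded degenerate case (empty relative interior of $A\cap\Sxag$) being the trivial single-point one handled in the preceding remark. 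The only remaining technical item, the measurable selection producing $\yy^+$ from $\bv_+\in\Sxag$, is routine under \Cref{ass_X_nonat} via Aumann's theorem.
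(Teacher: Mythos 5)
Your construction is correct up to the point you yourself flagged, and it takes a genuinely different --- in one respect more careful --- route than the paper's. Both proofs share the same template: mix the uniformly interior profile $\bar{\xx}$ of \Cref{asp:intpt} with another feasible profile, and use convexity of each $\X_\th$ to keep an interiority margin proportional to the mixing weight (your proportional-interiority fact is exactly the paper's pointwise estimate). But the paper handles the aggregate directly: it takes $\yy\in\FX(A)$ with $\int \yy\in\rlt(\Sxag\cap A)$, sets $\zz=(1-t)\yy+t\bar{\xx}$ with $t=d(\int\yy,\rbd(\Sxag\cap A))/(3M)$, and concludes $\int\zz\in\rlt(\Sxag\cap A)$ from $\|\int\zz-\int\yy\|\leq \tfrac{2}{3}d(\int\yy,\rbd(\Sxag\cap A))$ alone. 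That step silently assumes that $\int\zz=(1-t)\int\yy+t\int\bar{\xx}$ remains in $\aff(\Sxag\cap A)$, which can fail when $\int\bar{\xx}\notin\aff(\Sxag\cap A)$, e.g.\ when $A\cap\Sxag$ is lower-dimensional. Your overshoot-then-recombine device (push out to $\bv_+=(1+s)\bv^*-s\int\bar{\xx}\in\Sxag$ via the prolongation property of $\bv^*\in\rlt\Sxag$, then mix back so that $\int\zz=\bv^*$ exactly) eliminates precisely this defect, at the price of needing $\bv^*\in\rlt\Sxag$.

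That requirement is the genuine gap, and your way of securing it does not hold up. The identity $\rlt(\Sxag\cap A)=\rlt\Sxag\cap\rlt A$ is valid only when $\rlt\Sxag\cap\rlt A\neq\emptyset$, and this is not delivered by the paper's standing remark: $\rlt(\Sxag\cap A)\neq\emptyset$ is automatic for any nonempty convex subset of $\rit^T$, yet $\rlt\Sxag\cap\rlt A$ may well be empty. Concretely, take $\X_\th\equiv[0,1]^2$ (so $\Sxag=[0,1]^2$, and \Cref{ass_X_nonat,asp:intpt} hold with $\bar{\xx}_\th\equiv(\tfrac{1}{2},\tfrac{1}{2})$, $\eta=\tfrac{1}{4}$) and $A=\{1\}\times[0,1]$: then $\rlt(\Sxag\cap A)=\{1\}\times(0,1)$ is nonempty while $\rlt\Sxag\cap\rlt A=\emptyset$. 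Note, however, that no argument can close this gap from the stated hypotheses, because the lemma itself fails in this example: $\int\zz\in\rlt(\Sxag\cap A)$ forces $z_{\th,1}=1$ for a.e.\ $\th$, hence $d(\zz_\th,\rbd\X_\th)=0$ a.e.; the same example breaks the paper's own proof at the step identified above. So the honest repair is to add a Slater-type hypothesis such as $\rlt A\cap\rlt\Sxag\neq\emptyset$ (or directly $\rlt(\Sxag\cap A)\cap\rlt\Sxag\neq\emptyset$). Under that condition your proof is complete and rigorous --- indeed it survives configurations where the paper's direct mixing does not --- but you should state the condition as an assumption rather than derive it from the paper's remark.
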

\begin{proof}
Take $\bar{\xx}$ the nonatomic pure-action profile in \Cref{asp:intpt} and an arbitrary $\yy\in\FX(A)$ such that $\int \yy\in \rlt (\Sxag\cap A)$.  

Denote $t=\frac{d(\int\yy, \rbd (\Sxag\cap A))}{3M}$.  Define profile $\zz\in \FX$ by $\zz =\yy - t(\yy -\bar{ \xx}) $. 

Firstly, $\|\int\yy - \int\zz\|=t\|\int\yy-\int\bar{\xx}\|\leq t 2M\leq \frac{2}{3}d(\int\yy, \rbd (\Sxag\cap A))$, hence $\int\zz \in \rlt (\Sxag \cap A)$.

Besides, for any $\th$, $\zz_\th =\yy_\th - t(\yy_\th - \bar{\xx}_\th) $. Since $d(\bar{\xx}_\th, \rbd \X_\th)>\eta$, $\yy_\th\in \X_\th$, and $\X_\th$ is convex, one has $d(\zz_\th, \rbd \X_\th ) > \eta t = \frac{\eta}{3M}d(\int\yy, \rbd (\Sxag\cap A))$. One concludes by defining  $\rho \eqd \frac{\eta}{9M}d(\int\yy, \rbd (\Sxag\cap A))$.
\end{proof}

\begin{notation}
Denote $\zzag=\int \zz$ where $\zz$ is the one in \Cref{lm:intprofile}. Define $\bar{\rho}=\frac{1}{3}d(\zzag, \rbd \Sxag\cap A)>0$. Then,  for $\snu\in \nit^*$ such that $\max(\mdset\esnu, D\esnu)<3\bar{\rho}$,  $d(\zzag, \rbd (\Sxag\esnu\cap A\esnu))  \geq 3 \bar{\rho}-\max(\mdset\esnu, D\esnu)$.
\end{notation}

\begin{figure}[hbtp!]
\centering
\includegraphics[scale=1]{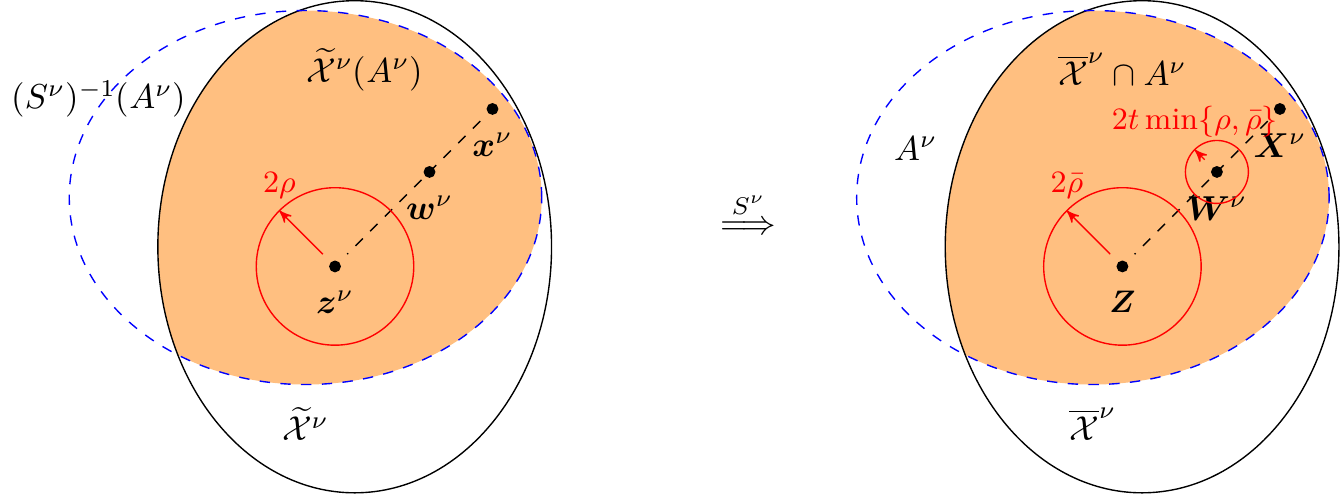}
\caption{Illustration of the mapping $S\esnu$ between $\FX\esnu(A\esnu)$ and $\Sxag\cap A$ used in \Cref{lem:dist_agg_genized_sets}}
\end{figure}

The following lemma shows that the space of pure-action profiles in the finite-player game \emph{with aggregative constraint}, $\FX\esnu(A\esnu)$, is converging to the space of pure-action profiles in the nonatomic game \emph{with aggregative constraint}, $\FX(A)$.
\begin{lemma}[Convergence of $\FX\esnu(A\esnu)$ to $\FX(A)$]
\label{lem:dist_agg_genized_sets}
Under \Cref{ass_X_nonat,asp:intpt}, let $K_A=\tfrac{M+1}{\min\{\rho, \bar{\rho}\}}$. Then, for all $\snu\in \nit^*$ such that $\max(\mdset\esnu, D\esnu) < \min\{\bar{\rho}, \rho\}$,\\
(1) for each $\xx\esnu \in \FX\esnu(A\esnu)$, $d_2(\psi\esnu(\xx\esnu), \FX(A))\leq 2K_{A}\max(D\esnu,\mdset\esnu)$;\\
(2) for each $\xx\in \FX(A)$, $d_{\snu}(\bpsi\esnu(\xx), \FX\esnu(A\esnu))\leq K_{A}\max(D\esnu,\mdset\esnu)$ and 
 $d(\bpsi\esnu(\xx), \FX\esnu(A\esnu)) \leq \sqrt{\mmu\esnu} K_{A}\max(D\esnu,\mdset\esnu)$. 
\end{lemma}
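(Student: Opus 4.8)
The plan is to prove this as the constrained counterpart of \Cref{lm:FY}: there the unconstrained sets $\FX\esnu$ and $\FX$ are shown to be $\mdset\esnu$-close via the maps $\psi\esnu,\bpsi\esnu$, but those maps need not respect the aggregate constraints $A\esnu,A$. The whole difficulty is therefore to repair feasibility of the aggregate after approximating, and the tool for this is the deep interior profile $\zz$ of \Cref{lm:intprofile}, whose aggregate $\zzag$ sits at distance $3\bar{\rho}$ from $\rbd(\Sxag\cap A)$ and which satisfies $d(\zz_\th,\rbd\X_\th)>3\rho$ almost everywhere. Retracting any feasible profile a small amount $s$ toward $\zz$ creates a uniform slack of order $s\bar{\rho}$ in the aggregate constraint (and order $s\rho$ in the individual action sets), enough to absorb the $O(\max(D\esnu,\mdset\esnu))$ perturbations coming from the Hausdorff gaps and from the feasibility approximation. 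Throughout I write $\epsilon\eqd\max(D\esnu,\mdset\esnu)$.

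For (2), given $\xx\in\FX(A)$ I would set $\xx^s\eqd(1-s)\xx+s\zz\in\FX$ with $s$ of order $\epsilon/\bar{\rho}$. Its aggregate $\int\xx^s=(1-s)\int\xx+s\zzag$ is a convex combination of a point of $\Sxag\cap A$ with the deep point $\zzag$, so by a ball argument it lies at distance at least $3s\bar{\rho}-\epsilon$ from $\rbd(\Sxag\esnu\cap A\esnu)$ (using the Notation-box estimate $d(\zzag,\rbd(\Sxag\esnu\cap A\esnu))\geq 3\bar{\rho}-\epsilon$ and that $\int\xx$ is within $\epsilon$ of $\Sxag\esnu\cap A\esnu$ by \Cref{lem:hausdorff_agg_sets}); the choice of $s$ makes this margin exceed $\mdset\esnu$. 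I then apply \Cref{lm:FY}.(2) to $\xx^s$ to obtain $\yy\esnu\in\FX\esnu$ with $\|\bpsi\esnu(\xx^s)-\yy\esnu\|_\snu\leq\mdset\esnu$; since $S\esnu$ is nonexpansive (proof of \Cref{lem:hausdorff_agg_sets}.(1)), the aggregate $\sum_i\yy\esnu_i\in\Sxag\esnu$ lies within $\mdset\esnu$ of $\int\xx^s$ and hence, by the margin just established, inside $A\esnu$, so $\yy\esnu\in\FX\esnu(A\esnu)$. A triangle inequality $d_\snu(\bpsi\esnu(\xx),\FX\esnu(A\esnu))\leq\|\bpsi\esnu(\xx)-\bpsi\esnu(\xx^s)\|_\snu+\|\bpsi\esnu(\xx^s)-\yy\esnu\|_\snu$, together with $\|\bpsi\esnu(\xx-\zz)\|_\snu\leq\|\xx-\zz\|_2\leq 2M$ (Cauchy--Schwarz on each $\Theta\esnu_i$), gives $\leq 2Ms+\mdset\esnu$; calibrating $s$ turns this into $K_A\epsilon$, and the $L^2$ bound follows from $\|\cdot\|\leq\sqrt{\mmu\esnu}\|\cdot\|_\snu$.

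For (1) the construction is symmetric but carries one extra approximation. Given $\xx\esnu\in\FX\esnu(A\esnu)$, I first use \Cref{lm:FY}.(1) to get $\yy\in\FX$ with $\|\psi\esnu(\xx\esnu)-\yy\|_2\leq\mdset\esnu$; since $\int\psi\esnu(\xx\esnu)=\sum_i\xx\esnu_i\in A\esnu$ is within $\epsilon$ of $\Sxag\cap A$ and integration is nonexpansive, $\int\yy$ lies within $\mdset\esnu+\epsilon\leq 2\epsilon$ of $\Sxag\cap A$. Retracting $\yy^s\eqd(1-s)\yy+s\zz$ with $s$ of order $\epsilon/\bar{\rho}$ pushes $\int\yy^s$ back into $\Sxag\cap A$ (margin $\geq 3s\bar{\rho}-2\epsilon\geq 0$, now using $d(\zzag,\rbd(\Sxag\cap A))=3\bar{\rho}$ directly since we work in the nonatomic set), so $\yy^s\in\FX(A)$, and $d_2(\psi\esnu(\xx\esnu),\FX(A))\leq\mdset\esnu+2Ms$. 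The extra $\mdset\esnu$ from the preliminary approximation is precisely what produces the factor $2$ in front of $K_A$ in (1) as opposed to (2).

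The main obstacle is the bookkeeping of the retraction parameter: $s$ must be chosen large enough (proportional to $\epsilon/\bar{\rho}$, and, for the individual-feasibility step, compatibly with the $3\rho$ slack of \Cref{lm:intprofile}) so that the slack $3s\bar{\rho}$ dominates every perturbation that displaces the aggregate, yet small enough that the incurred displacement $\sim 2Ms$ remains $O(\epsilon)$ with the stated constant $K_A=\tfrac{M+1}{\min\{\rho,\bar{\rho}\}}$. A secondary point to verify is that all the relative-interior/relative-boundary ball arguments are legitimate, which is where the span hypotheses $\spa A=\spa A\esnu$ (and $\spa\X\esnu_i=\spa\X_\th$) enter: they guarantee that $\aff(\Sxag\cap A)$ and $\aff(\Sxag\esnu\cap A\esnu)$ coincide, so that points obtained by convex combination followed by a small Euclidean perturbation stay in the common affine hull, letting \Cref{lem:hausdorff_agg_sets}.(4) transfer deep points between the two constraint sets.
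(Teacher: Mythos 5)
Your overall mechanism (retraction toward the deep profile $\zz$ of \Cref{lm:intprofile} to create slack of order $s\bar{\rho}$ that absorbs $O(\max(D\esnu,\mdset\esnu))$ perturbations) is indeed the paper's central idea, but your execution has a genuine gap at the point where you control the aggregate. In part (1) you assert that $\sum_i\xx\esnu_i\in \Sxag\esnu\cap A\esnu$ ``is within $\epsilon$ of $\Sxag\cap A$'', and in part (2) that $\int\xx\in\Sxag\cap A$ ``is within $\epsilon$ of $\Sxag\esnu\cap A\esnu$ by \Cref{lem:hausdorff_agg_sets}''. No such statement exists in \Cref{lem:hausdorff_agg_sets}: that lemma bounds $d_H(\Sxag\esnu,\Sxag)$ and $d_H(A\esnu,A)$ \emph{separately}, and transfers only \emph{deep} points between the two intersections. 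Hausdorff closeness of $\Sxag\esnu$ to $\Sxag$ and of $A\esnu$ to $A$ does \emph{not} bound the distance between $\Sxag\esnu\cap A\esnu$ and $\Sxag\cap A$: for nearly tangential convex sets a perturbation of size $\epsilon$ of each set can displace the intersection by order $\sqrt{\epsilon}$. Repairing this claim requires precisely the deep-point retraction you invoke later, and doing it correctly (retract the aggregate toward $\zzag$, then apply \Cref{lem:hausdorff_agg_sets}(4)) produces a constant of order $M/\bar{\rho}$ rather than $1$, which then propagates and overshoots the stated constant $2K_A$. A second, related defect: your ball arguments apply a margin around a deep point to conclude membership of a \emph{nearby} point, but that inference is only valid if the nearby point lies in the relevant affine hull. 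After the projection steps (\Cref{lm:FY}(1) in part (1), \Cref{lm:FY}(2) in part (2)) the aggregate of the approximant lies in $\Sxag$ (resp.\ $\Sxag\esnu$) but not necessarily in $\aff(\Sxag\cap A)$ (resp.\ $\aff(\Sxag\esnu\cap A\esnu)$); the span hypotheses of the AAS definition concern $A$, $A\esnu$ and the individual action sets and do not force the affine hulls of the two \emph{intersections} to coincide, so your closing remark does not close this hole.

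The paper's proof avoids both problems by a different order of operations, which is worth internalizing. Instead of approximate-then-repair, it first transports the deep profile into the finite game ($\zz\esnu\eqd\bpsi\esnu(\zz)\in\FX\esnu(A\esnu)$, using \Cref{lm:FY}(4) and the Notation-box estimate), and performs the retraction \emph{inside} $\FX\esnu(A\esnu)$: $\ww\esnu\eqd\xx\esnu+t(\zz\esnu-\xx\esnu)$ with $t=\max(D\esnu,\mdset\esnu)/\min\{\rho,\bar{\rho}\}$. By convexity, $\ww\esnu$ is \emph{exactly} feasible, and its aggregate is deep in $\Sxag\esnu\cap A\esnu$. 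It then maps $\ww\esnu$ to the nonatomic game by $\psi\esnu$, exploiting the key structural fact your proof forfeits: $\psi\esnu$ (and $\bpsi\esnu$) preserve the aggregate \emph{exactly}, $\int\psi\esnu(\ww\esnu)=S\esnu(\ww\esnu)$. Hence feasibility of $\psi\esnu(\ww\esnu)$ in $\FX(A)$ requires only pure membership transfers of deep points --- \Cref{lm:FY}(3) for the individual actions and \Cref{lem:hausdorff_agg_sets}(4) for the aggregate --- and never a ``near a deep point $\Rightarrow$ member'' inference nor any distance-to-intersection estimate. The bound $\|\xx\esnu-\ww\esnu\|_\snu\leq t\cdot 2(M+1)=2K_A\max(D\esnu,\mdset\esnu)$ then gives (1) directly, and (2) is the symmetric argument with the retraction done in $\FX(A)$ before applying $\bpsi\esnu$.
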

\begin{proof}
(1) Consider $\xx\esnu \in \FX\esnu(A\esnu)$ and $\xxag\esnu=\sum_i  \xx_i\esnu$, i.e. $S\esnu(\xx\esnu)$. 
%
Consider $\zz$ in \Cref{lm:intprofile} and $\zzag=\int \zz$. Let $\zz\esnu\eqd \bpsi\esnu(\zz)$. Since for each $\th$,  $d(\zz_\th,\rbd \X_\th) > 3\rho > \mdset\esnu$, $\zz\esnu \in \FX\esnu$ according to \Cref{lm:FY}.(4).

Also, $d(\frac{\zz_i\esnu}{\mu\esnu_i}, \rbd \frac{\X_i\esnu}{\mu\esnu_i}) \geq 2\rho$. 
Indeed, if $\frac{\yy_i}{\mu\esnu_i} \in N_{2\rho}( \frac{\zz_i\esnu}{\mu\esnu_i})$, then let $\yy_\th \eqd \zz_\th + (\frac{\yy_i}{\mu\esnu_i}-\frac{\zz\esnu_i}{\mu\esnu_i} )$ for $\th\in\Theta_i\esnu$. Then, $\yy_\th \in  N_{2\rho}(\zz_\th)\subset \X_\th$ as $\spa \X_\th = \spa \X_i\esnu$, and $d(\yy_\th, \rbd \X_\th) \geq d(\zz_\th, \rbd \X_\th) -\norm{\zz_\th-\yy_\th} > 3\rho - 2\rho= \rho$. Hence, from \ref{lm:FY}.(4), one has $\frac{\yy_i}{\mu\esnu_i} \in \frac{\X_i\esnu}{\mu\esnu_i}$.

Besides, since $\max(\mdset\esnu, D\esnu) < \bar{\rho}$, one has  $d(\zzag, \rbd (\Sxag\esnu\cap A\esnu))  \geq 2 \bar{\rho}$.

Define $\ww\esnu\eqd\xx\esnu+t(\zz\esnu-\xx\esnu)$ with $t\eqd\frac{\max(D\esnu,\mdset\esnu)}{\min\{\rho, \bar{\rho}\}}<1$. Let $\wwag\esnu=S\esnu(\ww\esnu)$. 
Then, $\|\ww\esnu-\xx\esnu\|_\snu=\max(D\esnu,\mdset\esnu)\frac{\|\zz\esnu-\xx\esnu\|_\nu}{\min\{\rho, \bar{\rho}\}} \leq \max(D\esnu,\mdset\esnu)\frac{2(M+1)}{ \min\{\rho, \bar{\rho}\}} $.

The linear mapping $S\esnu$ maps the segment linking $\xx\esnu$ and $\zz\esnu$ in $\FX\esnu(A\esnu)$ to a segment linking $\xxag\esnu$ and $\zzag$ in $\Sxag\esnu\cap A\esnu$. Hence, by the definition of $\ww\esnu$,
 $N_{2 t \min\{\rho, \bar{\rho}\}}(\wwag\esnu) \cap\spa (\Sxag\esnu\cap A\esnu) \subset \Sxag\esnu\cap A\esnu$, because each point in $N_{2 t \min\{\rho, \bar{\rho}\}}(\wwag\esnu) \cap\spa (\Sxag\esnu\cap A\esnu)$ is on the segment linking $\xxag\esnu$ and some point in $N_{2 \min\{\rho, \bar{\rho}\}}(\zzag) \cap\spa (\Sxag\esnu\cap A\esnu)\subset \Sxag\esnu\cap A\esnu$.
In particular,  $2 t \min\{\rho, \bar{\rho}\}=2 \max(D\esnu,\mdset\esnu)$ means that $d(\wwag\esnu, \rbd(\Sxag\esnu\cap A\esnu))\geq 2\max(D\esnu,\mdset\esnu) >\max(D\esnu,\mdset\esnu)$. Consequently, $\wwag\esnu\in \Sxag\cap A$ according to \Cref{lem:hausdorff_agg_sets}(4). 

Then, in the same manner, one has $d(\frac{\ww\esnu_i}{\mu\esnu_i},\rbd \frac{\X_i\esnu}{\mu_i\esnu})  > 2t\rho> \mdset_\i\esnu $. Thus,  \Cref{lem:hausdorff_agg_sets}(3) implies that $\frac{\ww\esnu}{\mu\esnu_i} \in \X_\th$ for each $\th\in \Theta\esnu_i$, hence $\psi\esnu(\ww\esnu)\in \FX\esnu$ and $\psi\esnu(\ww\esnu)\in \FX\esnu(A)$.

Finally, $\|\xx\esnu-\ww\esnu\|_{\snu} \leq \max(D\esnu,\mdset\esnu)\frac{2(M+1)}{\min\{\rho, \bar{\rho}\}} $. Hence, $d_2(\psi\esnu(\xx\esnu), \FX(A))\leq 2 K_{A}\max(D\esnu,\mdset\esnu)$.

The proof for (2) is similar and omitted.

\end{proof}

\begin{figure}[ht]\label{fig:proj}
\centering
\includegraphics[scale=1]{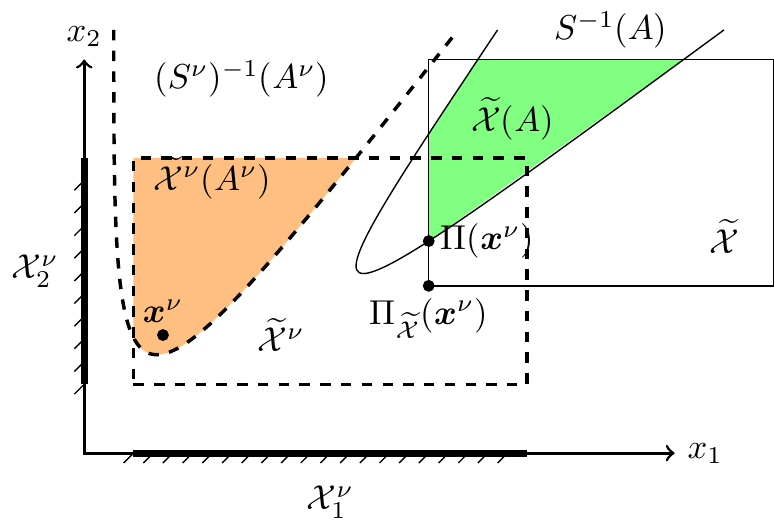}
\caption{Difference between projections on $\FX$ and on $\FX(A)$. (Since it is impossible to draw the graph of a $L^2$ pure-action-profile space with a continuum of players, we illustrate the idea with two players.)}
\end{figure}

\begin{remark}[{Difference between unilateral projections of actions and  collective projection of the action profile}]
\Cref{lem:dist_agg_genized_sets} shows that $d_2(\psi\esnu(\xx\esnu), \Pi(\psi\esnu(\xx\esnu)))\leq 2K_{A}\max(D\esnu,\mdset\esnu)$ and $d(\bpsi\esnu(\xx), \Pi\esnu(\bpsi\esnu(\xx))) \leq \sqrt{\mmu\esnu} K_{A}\max(D\esnu,\mdset\esnu)$. Recall that $\Pi$ and $\Pi\esnu$ stand for the projection functions onto $\FX(A)$ and $\FX\esnu(A\esnu)$. \Cref{lem:dist_agg_genized_sets} is of first importance for our proof of \Cref{thm:converge_with_u}. Without this lemma, we only have the convergence of individual, i.e. unilateral pure-action spaces in the AAS to the unilateral pure-action spaces in the nonatomic game, as shown in \Cref{lm:FY}. Without coupling constraints, this should be sufficient in the proof of the convergence of NE. However, in the presence of coupling aggregative constraints, this convergence of unilateral pure-action spaces is not enough. Given a profile in $\FX\esnu(A\esnu)$, unilateral projection of each atomic player's pure-action onto the corresponding nonatomic players' pure-action spaces, i.e. from $\X\esnu_i$ to $\X_\th$, cannot guarantee that the resulting profile of pure-actions is in $\FX(A)$, and vice versa. \Cref{lem:dist_agg_genized_sets} shows that, for each pure-action profile $\xx\esnu \in \FX\esnu(A\esnu)$, its projection on the space of finite-player pure-action profiles in $\FX(A)$ very close to $\xx\esnu$, and vice versa. 
\end{remark}

We are finally ready to prove \Cref{thm:converge_with_u}.

\begin{proof}[Proof of \Cref{thm:converge_with_u}]
Fix $\snu\in \nit^*$, define  $\hyy\esnu=\psi(\hxx\esnu)\in L^2([0,1], \rit^T)$ and $\hz\esnu\eqd \Pi(\hyy\esnu) $.
  Then $\hz\esnu\in \FX(A)$ is a pure-action profile in nonatomic game $\Gna(A)$. 
By the definition of VWE (cf. \Cref{cond:ind_opt_ve_inf}), there is $\g\in H(\sxx)$ such that $\int_0^1 \langle \g_\th( \sxx_\th, \sxxag) , \ \sxx_\th  - \hz\esnu_\th  \rangle \dth\leq 0$. 

Secondly, by the definition of VNE (cf. \Cref{cond:ind_opt_ve}), there is $\h_i \in \partial_1 \hf\esnu_\i(\hxx\esnu_i, \hxxag\esnu_{-i})$ for each $i\in\I\esnu$ such that $\sum_{\i\in\I\esnu}  \big\langle  \h_\i( \hxx\esnu_\i, \hxxag_{-i}\esnu), \hxx\esnu_\i - \zz\esnu_i\rangle \leq 0$ for all $\zz\esnu_\i \in \X\esnu_\i$. For all $\i\in \I\esnu$ and $\th\in \Theta\esnu_i$, by the definition of $\ld\esnu_\i$ and $\duti\esnu_i$ (cf. \Cref{eq:def_ld,eq:def_dut}), there exists $\rr\in H(\hyy\esnu)$ with $\rr_\th\in \partial_1 f_\th( \hyy\esnu_\th, \hyyag\esnu)=\partial_1 f_\th(\frac{\hxx\esnu_\th}{ \mu\esnu_\i }, \hxxag\esnu) $ such that $ \|\rr_\th( \hyy\esnu_\th, \hyyag\esnu) - \h_\i( \hxx\esnu_\th, \hxxag_{-i}\esnu) \|\leq \ld_i\esnu+\duti\esnu_i$.

Thirdly, $\|\hyy\esnu -\hz\esnu\|_2\leq 2 K_A\max(D\esnu,\mdset\esnu)$ by \Cref{lem:dist_agg_genized_sets}.

With these two results, while noticing that $\hyy\esnu_\th \leq M+\mdset\esnu$ for all $\th$ by \Cref{lem:norm_xat_approx} and that $\hxxag\esnu=\hyyag\esnu$, one has:
\begin{align}
&  \int_{\Theta} \big\langle   \g_\th( \sxx_\th, \sxxag)- \rr_\th( \hyy\esnu_\th, \hyyag\esnu) , \ \sxx_\th-  \hyy\esnu_\th \big\rangle \dth\notag \\
=&  \int_{\Theta} \left\langle \g_\th( \sxx_\th, \sxxag) , \ \sxx_\th  - \hz\esnu_\th  \right \rangle \dth + \int_{\Theta} \left\langle \g_\th( \sxx_\th, \sxxag) , \  \hz\esnu_\th - \hyy\esnu_\th \right \rangle \dth \notag  \\
&\!+ \!\sum_{\i\in\I\esnu}  \int_{\Theta\esnu_i} \big\langle  \rr_\th( \hyy\esnu_\th, \hyyag\esnu) \!- \! \h_\i( \hxx\esnu_\i, \hxxag_{-\i}\esnu)  , \ \hyy\esnu_\th\! -\! \sxx_\th  \big\rangle \dth
\!+\! \sum_{\i\in\I\esnu}  \int_{\Theta\esnu_i} \big\langle  \h_\i( \hxx\esnu_\i, \hxxag_{-i}\esnu)  , \ \hyy\esnu_\th \!-\! \sxx_\th  \big\rangle \dth \notag \\
\leq &\, 0\!+\!\int_{\Theta}  \norm{\g_\th( \sxx_\th, \sxxag)} \norm{\hz_\th\!-\!\hyy\esnu_\th} \!\dth \! + \! \sum_{\i\in\I\esnu} \int_{\Theta\esnu_i} \|\rr_\th( \hyy\esnu_\th, \hyyag\esnu) \!-\! \h_\i( \hxx\esnu_\th, \hxxag_{-i}\esnu) \| \norm{\hyy\esnu_\th\! -\! \sxx_\th }\!\dth\! + \! J\esnu\notag\\
\leq & \, 2 \Bdf  \, K_A\max(D\esnu,\mdset\esnu)  +\,(2M+\mdset\esnu)(\mduti\esnu+\mld\esnu)+J\esnu   \label{eq:th-firstbound}
\end{align}
where $J\esnu\eqd  \sum_{\i\in\I\esnu}  \int_{\Theta\esnu_i} \big\langle  \h_\i( \hxx\esnu_\i, \hxxag_{-i}\esnu) , \ \hyy\esnu_\th - \sxx_\th  \big\rangle \dth$.

Next, for the VWE $\sxx\in\FX(A)$, let
 $\syy\esnu=\bpsi(\sxx)\in  \M^{\I\esnu}$ and  $\sz\esnu \eqd \Pi\esnu(\syy\esnu) \in \FX\esnu(A\esnu)$:
\begin{align}
 J\esnu & =\sum_{\i\in\I\esnu}  \big\langle  \h_\i( \hxx\esnu_\i, \hxxag_{-i}\esnu), \ \hxx\esnu_\i - \syy_i\esnu  \big\rangle \notag \\
 & = \sum_{\i\in\I\esnu}  \big\langle  \h_\i( \hxx\esnu_\i, \hxxag_{-i}\esnu) , \ \hxx\esnu_\i - \sz_i\esnu  \big\rangle + \sum_{\i\in\I\esnu}  \big\langle \h_\i( \hxx\esnu_\i, \hxxag_{-i}\esnu) , \ \sz\esnu_\i - \syy_i\esnu  \big\rangle \notag \\
 & \leq 0 +   (\Bdf+\mld\esnu) \norm{\sz\esnu- \syy\esnu} \leq  (\Bdf+\mld\esnu)\sqrt{\mmu\esnu}K_A\max(D\esnu,\mdset\esnu) \, ,\label{eq:Jnu}
\end{align}
because of the definition of VNE $\hxx\esnu$, the definition of $\mld\esnu$ and  \Cref{lem:dist_agg_genized_sets}.$(i)$.

Let us summarize by combining \eqref{eq:th-firstbound} and \eqref{eq:Jnu}:
\begin{equation}\label{eq:unpperboundmono}
\begin{aligned}  
\int_{\Theta}  \big\langle \g_\th( \sxx_\th, \sxxag) &-\rr_\th( \hyy\esnu_\th, \hyyag\esnu) ,  \sxx_\th-  \hyy\esnu_\th \big\rangle \dth\\
& \leq (3\Bdf + 1) K_A\max(D\esnu,\mdset\esnu)  +(2M+1)(\mduti\esnu + \mld\esnu) \ .
\end{aligned}
\end{equation}
Hence, if $\Gna$ is strongly monotone with modulus $\stgccvut$, then $ \stgccvut  \norm{ \hyy\esnu -  \sxx }^2_2  \leq  (2\Bdf + 1) K_A(D\esnu+\mdset\esnu)  +(2M+1)(\mduti\esnu + \mld\esnu)$. If $\Gna$ is aggregatively strongly monotone with modulus $\beta$, then $ \beta \| \hxxag\esnu - \sxxag\|^2\leq  (3\Bdf + 1) K_A\max(D\esnu,\mdset\esnu)  +(2M+1)(\mduti\esnu + \mld\esnu)$.
\end{proof}

\begin{remark}
The strong monotonicity of the nonatomic game $\Gna$, either with respect to pure-action profile or with respect to aggregate-action profile, is essential in this result. Strict monotonicity is not enough, in contrast to finite-player games (cf. \cite{ wan2012coalition}). Indeed, since $L^2([0,1],\M^T)$ is only weakly compact, one cannot ensure that $\int_{\Theta}  \langle \g_\th( \sxx_\th, \sxxag) -\rr_\th( \hyy\esnu_\th, \hyyag\esnu) ,  \sxx_\th-  \hyy\esnu_\th \rangle \dth$ tends to $\int_{\Theta}  \langle \g_\th( \sxx_\th, \sxxag) -\rr_\th( \hzz_\th, \hat{\zzag}) ,  \sxx_\th-  \hzz_\th \rangle \dth$ in \eqref{eq:unpperboundmono}, where $\hzz$ is an accumulation point of $(\hyy\esnu)_{\snu}$ in the weak topology.
 \end{remark}

\subsection{Construction of an Atomic Approximating Sequence}\label{subsec:construction}

%
As seen in our previous results, a nonatomic player $\th$ is characterized by two elements:  her pure-action set $\X_\th$, and  her subdifferential correspondence $\partial_1 f_\th$ defined from $\M^2$ to $\rit^T$: $(\xx,\yyag)\mapsto \partial_1 f_\th(\xx, \yyag)$.

Note that it is the subdifferential of the cost function $\partial_1 f_\th$, instead of the cost function $f_\th$ itself, that characterizes a nonatomic player's type. For example, two players $\th$ and $\xi$ with $\X_\th=\X_\xi$ and $f_\th(\xx,\yyag) \equiv f_\xi(\xx,\yyag) + C$ where $C$ is a strictly positive constant can be seen as identical in their behavior.

%
%
%
%

This section presents the construction of an AAS for a given nonatomic game $\Gna$ in two particular cases:  (1) the player characteristic profile $\th \mapsto (\X_\th, \partial_1 f_\th)$ is piecewise continuous (cf. \Cref{def:continuity_nonatomic_game}) and, (2) \{$\X_\th,\th\in \Theta\}$ and  \{$f_\th, \th\in \Theta\}$ are respectively polytopes and functions parameterized by a finite number of real parameters.

\paragraph{Case 1: Piecewise Continuous Characteristics -- Uniform Splitting} ~~
\begin{definition}[Continuity of nonatomic player characteristic profile]\label{def:continuity_nonatomic_game}
The player characteristic profile $\th \mapsto (\X_\th, \partial_1 f_\th)$ in nonatomic game $\Gna$ is \emph{continuous} at $\th \in \Theta$ if, for all $\varepsilon>0$, there exists $\eta>0$ such that: for each $\th'\in \Theta$
\begin{equation}\label{eq:character_continu}
 |\th-\th'| \leq \eta \ \Rightarrow
\begin{cases}
 d_H(\X_\th, \X_{\th'}) \leq \varepsilon \, \\
 \sup_{(\xx,\yyag)\in \M\times \M} d_H\left(
 \partial_1 f_\th(\xx,\yyag) ,
  \partial_1 f_{\th'}(\xx,\yyag) \right) \leq \varepsilon  \ .
\end{cases}
\end{equation}

If \eqref{eq:character_continu} is true for all $\th$ and $\th'$ on an interval $\Theta' \subset \Theta$,  then the player characteristic profile is \emph{uniformly continuous} on $\Theta'$. 
\end{definition}

Assume that the player characteristic profile $\th \mapsto (\X_\th, \partial_1 f_\th)$ of nonatomic game $\Gna$ is piecewise continuous, with a finite number $K$ of discontinuity points $\sigma_0=0 \leq \disth_1 <\disth_2 < \dots < \disth_K\leq \sigma_K=1$, and that it is uniformly continuous on $( \sigma_k, \sigma_{k+1})$, for each $k\in  \{0 ,\dots ,K-1\}$.

For $\nu \in \nit^*$, define an ordered set of $I_\snu$ cutting points by $
\{\cutp_i\esnu, i=0,\ldots,I\esnu\} :=\left\{ \tfrac{k}{\nu} \right\}_{0\leq k \leq \nu} \cup \{\sigma_k   \}_{1\leq k \leq K}$
 and the corresponding partition $(\Theta_\i\esnu)_{i\in \I\esnu}$ of $\Theta$ by:
\vspace{-0.15cm}
\begin{equation*}
\Theta_i\esnu= [\cutp_{i-1}\esnu,\cutp_{i}\esnu ) \text{ for }  \i \in \{1, \dots, I\esnu-1\}\  ;\quad \Theta_{ I\esnu}\esnu = [\cutp_{ I_\snu-1}\esnu, 1 ].
\end{equation*}
Hence, $\mu_i\esnu=\cutp_{i}\esnu-\cutp_{i-1}\esnu$. Denote $\bar{\cutp}_{i}\esnu= \frac{\cutp_{i-1}\esnu +  \cutp_{i}\esnu}{2}$.

 \begin{assumption}\label{asp:partials} 
%
For each $\th\in \Theta$ and each $\xx\in \M$, $\M'\ni \yyag \mapsto f_\th(\xx,\yyag)$ is convex. Denote the subdifferential of $f_\th(\xx,\cdot)$ at $\yyag$ by $\partial_2 f_\th(\xx,\yyag)$. There is $\Bdg>0$ such that $\tnorm{\g} \leq \Bdg $ for all $\g\in \partial_2 f_\th(\cdot, \cdot)$ on $\M^2$ for all $\th$.
\end{assumption}
\begin{proposition}\label{prop:approx_seq_continuous}
Under \Cref{ass_X_nonat,ass_ut_nonat,assp:tech,assp_lip,asp:partials}, for $\snu \in \nit^*$, consider the finite-player game $\G\esnu(A\esnu)$ with an aggregative constraint defined with $A\esnu\eqd A$, player set $\I\esnu\eqd\{ 1  \dots  I\esnu\}$, where for each finite player $\i\in\I\esnu$:
\vspace{-0.15cm}
\begin{equation*}
\X_i\esnu \eqd \mu_i\esnu \X_{ \bar{\cutp}_{i}\esnu} 
\text{ \ and \ } f_i\esnu(\xx,\yyag)\eqd \mu_i\esnu f_{ \bar{\cutp}_{i}\esnu }\Big(\txt\frac{1}{\mu_i\esnu} \xx ,\yyag\Big) ,\;  \, \forall (\xx,\yyag)\in \mu_i\esnu\M \times \M.
\end{equation*}
Then $\big(\G\esnu(A)\big)_{\snu}=\big( \I\esnu, \FX\esnu,A, (f\esnu_\i)_{\i\in \I\esnu} \big)_{\snu} $ is an AAS of nonatomic game  $\Gna(A)$. 
\end{proposition}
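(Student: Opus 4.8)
The plan is to check, one by one, the four conditions defining an AAS in \Cref{def:approx_seq}, using throughout the two elementary scaling identities implied by $\X_i\esnu=\mu_i\esnu\X_{\bar{\cutp}_i\esnu}$ and $f_i\esnu(\xx,\yyag)=\mu_i\esnu f_{\bar{\cutp}_i\esnu}(\xx/\mu_i\esnu,\yyag)$, namely, for all $(\xx,\yyag)\in\M^2$,
\[
\partial_1 f_i\esnu(\mu_i\esnu\xx,\yyag)=\partial_1 f_{\bar{\cutp}_i\esnu}(\xx,\yyag),\qquad \partial_2 f_i\esnu(\mu_i\esnu\xx,\yyag)=\mu_i\esnu\,\partial_2 f_{\bar{\cutp}_i\esnu}(\xx,\yyag).
\]
Condition (iv) is immediate since $A\esnu=A$, so $D\esnu=0$ and $\spa A\esnu=\spa A$. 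For the other three, the governing geometric observation is that, because every discontinuity point $\disth_k$ is a cutting point, each $\Theta_i\esnu$ (after discarding the finitely many $\disth_k$ into the measure-zero block $\Theta_0\esnu$) lies inside a single open interval of uniform continuity and has length $\mu_i\esnu\le 1/\nu$; hence $|\th-\bar{\cutp}_i\esnu|\le 1/\nu$ for all $\th\in\Theta_i\esnu$, and $\bar{\cutp}_i\esnu$ is itself a continuity point.

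First I would dispatch $\mdset\esnu$ and $\mduti\esnu$. Since $\frac1{\mu_i\esnu}\X_i\esnu=\X_{\bar{\cutp}_i\esnu}$, the quantity $\dset_i\esnu$ is exactly $\sup_{\th\in\Theta_i\esnu}d_H(\X_\th,\X_{\bar{\cutp}_i\esnu})$, and by the first scaling identity $\duti_i\esnu=\sup_{\th\in\Theta_i\esnu}\sup_{(\xx,\yyag)\in\M^2}d_H(\partial_1 f_{\bar{\cutp}_i\esnu}(\xx,\yyag),\partial_1 f_\th(\xx,\yyag))$. Both are bounded by the modulus of continuity in \eqref{eq:character_continu} evaluated at the argument $|\th-\bar{\cutp}_i\esnu|\le1/\nu$. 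Taking the common modulus over the finitely many continuity pieces and sending $\nu\to\infty$ gives $\mdset\esnu\to0$ and $\mduti\esnu\to0$.

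Next I would handle $\mld\esnu$, which is where \Cref{asp:partials} enters. Writing $\hf_i\esnu(\xx_i,\xxag_{-i})=f_i\esnu(\xx_i,\xx_i+\xxag_{-i})$ as $f_i\esnu$ precomposed with the affine map $\xx_i\mapsto(\xx_i,\xx_i+\xxag_{-i})$, the convex chain rule expresses any $\g\in\partial_1\hf_i\esnu(\mu_i\esnu\xx,\yyag-\mu_i\esnu\xx)$ as $\g=\g_1+\g_2$ with $\g_1\in\partial_1 f_i\esnu(\mu_i\esnu\xx,\yyag)$ and $\g_2\in\partial_2 f_i\esnu(\mu_i\esnu\xx,\yyag)$. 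Therefore $d\big(\g,\partial_1 f_i\esnu(\mu_i\esnu\xx,\yyag)\big)\le\norm{\g_2}$, and the second scaling identity together with the uniform bound $\Bdg$ of \Cref{asp:partials} gives $\norm{\g_2}\le\mu_i\esnu\Bdg\le\Bdg/\nu$. Hence $\mld\esnu\le\Bdg/\nu\to0$: the feedback of a shrinking player's own action through the aggregate vanishes, which is precisely the role of \Cref{asp:partials}.

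The step I expect to be the main obstacle is the span clause of condition (i), $\spa\X_i\esnu=\spa\X_\th$ for every $\th\in\Theta_i\esnu$. Since $\X_i\esnu=\mu_i\esnu\X_{\bar{\cutp}_i\esnu}$ gives $\spa\X_i\esnu=\spa\X_{\bar{\cutp}_i\esnu}$, the clause reduces to the linear span of $\X_\th$ being constant along each continuity interval. This does not follow from \eqref{eq:character_continu} alone -- a continuously rotating segment is Hausdorff-continuous yet has a varying span -- so I would secure it by reading constancy of $\spa\X_\th$ on each piece into the piecewise-continuity hypothesis (the uniform relative-interior bound of \Cref{asp:intpt} stabilizes the affine dimension but not the subspace itself). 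Granting this, the rescaling $\X_i\esnu=\mu_i\esnu\X_{\bar{\cutp}_i\esnu}$ transports the common span to the finite-player sets, and all four AAS conditions hold, completing the proof.
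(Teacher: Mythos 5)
Your proposal is correct, and on three of the four AAS conditions it follows essentially the same route as the paper: the same scaling identities $\partial_1 f_i\esnu(\mu_i\esnu\xx,\yyag)=\partial_1 f_{\bar{\cutp}_{i}\esnu}(\xx,\yyag)$ and $\partial_2 f_i\esnu(\mu_i\esnu\xx,\yyag)=\mu_i\esnu\,\partial_2 f_{\bar{\cutp}_{i}\esnu}(\xx,\yyag)$, the same uniform-continuity bound for $\mdset\esnu$ and $\mduti\esnu$ on each interval $(\disth_k,\disth_{k+1})$, the same affine-composition chain rule (the paper cites Propositions 16.6--16.7 of its reference \cite{combettes2011monotone}) yielding $\ld_i\esnu\leq \mu_i\esnu\Bdg\leq \Bdg/\nu$, and $D\esnu=0$ by construction.

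The genuine divergence is the span clause of \Cref{def:approx_seq}, item i). The paper keeps the hypotheses of the proposition as stated and patches the clause by further subdividing each $\Theta_i\esnu$ into finitely many groups on which $\spa\X_\th$ is constant, asserting this is possible because the sets lie in the finite-dimensional space $\rit^T$; you instead strengthen the hypotheses by reading constancy of $\spa\X_\th$ on each continuity piece into the piecewise-continuity assumption. Your rotating-segment example shows why you were right to hesitate: Hausdorff continuity of $\th\mapsto\X_\th$ constrains only the dimension of $\spa\X_\th$ (an integer, hence locally eventually constant is not even guaranteed), not the subspace itself, so the span can sweep through uncountably many distinct subspaces along an interval. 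In that situation the paper's finite subdivision does not exist either, so the paper's fix is subject to the very gap you identified. In short, your treatment buys rigor at the price of an explicit extra hypothesis, whereas the paper's treatment keeps the statement as is but rests on an argument that fails exactly on your counterexample; flagging this, as you did, is the more defensible position.
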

\begin{proof} 
First notice that $\mmu\esnu \leq \frac{1}{\nu} $ hence it tends to 0. 

Let us show the four points required by \Cref{def:approx_seq} as follows.
%
%

i) Given an arbitrary $\varepsilon>0$, there is a common modulus of uniform continuity $\eta$ such that \Cref{eq:character_continu} is true for all the intervals $(\sigma_k,\sigma_{k+1})$. For $\nu$ large enough, one has, for each $\i \in\I\esnu$, 
 $\mu_\i\esnu< \eta$ so that for all $\th \in \Theta_\i\esnu$, $|\bar{\cutp}_{i}\esnu-\th|< \eta$; hence $d_H\big( \X_\th, \txt\frac{1}{\mu_\i\esnu}\X_\i\esnu\big)=d_H( \X_\th, \X_{ \bar{\cutp}_{i}\esnu}) < \varepsilon$. 

ii) According to the continuity property, for all $(\xx,\yyag)\in \M^2$:
\begin{equation*}
d_H\left( \partial_1 f\esnu_\i\left(\txt {\mu_i\esnu } \xx, \yyag \right) \ ,
\  \partial_1 f_\th(\xx,\yyag) \right)  = d_H\left( \txt\frac{\mu_\i\esnu}{\mu_i\esnu}   \partial_1 f_{ \bar{\cutp}_{i}\esnu}\Big(\txt\frac{1}{\mu_i\esnu} \mu_\i\esnu \xx , \yyag \Big) \ , \ \partial_1 f_\th(\xx,\yyag)  \right)< \varepsilon . 
\end{equation*}
To be rigorous, we need $\spa \X_\th$ to be the same for all $\th\in \Theta\esnu_i$. For this, we can further divide $\Theta\esnu_i$ into a finite number of, say $n$, groups so that the nonatomic players in each group have the same $\spa \X_\th$. This is possible because $\X_\th$ are all in $\rit^T$, a finite dimensional space. 

iii)  Consider $\partial_1 \hf\esnu_\i(\xx, \yyag-\xx)$, the subdifferential of $\hf\esnu_\i(\cdot, \yyag')$ at $\xx$, given $\yyag'=\yyag-\xx$. By definition, $\hf\esnu_\i(\cdot, \yyag')\eqd  f\esnu_\i(\cdot, \yyag'+\cdot)$. Consider $\hf\esnu_\i(\cdot, \yyag')$ as a composition function on $\M'$ defined by $f\esnu_\i \circ L$, where $L$ is the following affine mapping from $\mu\esnu_\i\M'$ to $\rit^{2T}$:
\begin{equation*}
L(\xx) = \begin{pmatrix} I_{T} \\ I_{T}\end{pmatrix} \xx + 
\begin{pmatrix} \mathbf{0}_{T} \\ \yyag' \end{pmatrix}\ ,
\end{equation*}
where $I_{T}$ is the $T$-dimensional identity matrix and $\mathbf{0}_T$ the $T$-dimensional 0 vector. 

On the one hand, according to \cite[Proposition 16.6]{combettes2011monotone}, $\partial_1 \hf\esnu_\i(\xx, \yyag-\xx)\subset \{(I_T,I_T) \g :\g\in \partial f\esnu_\i(\xx, \yyag)\}$, where $\partial f\esnu_\i(\xx, \yyag)$ is the subdifferential of $f\esnu_\i$ at $(\xx, \yyag)$. On the other hand, according to \cite[Proposition 16.7]{combettes2011monotone}, 
\begin{equation*}
\partial f\esnu_\i(\xx, \yyag)\subset 
\left\{
\begin{pmatrix} \g_1 \\ \g_2 \end{pmatrix} : 
\g_1 \in\partial_1 f\esnu_\i(\xx, \yyag), \,
\g_2 \in \partial_2 f\esnu_\i(\xx, \yyag)\right\}\ .
\end{equation*}
Therefore, $\partial_1 \hf\esnu_\i(\xx, \yyag-\xx)\subset \{\g_1 + \g_2: \g_1 \in\partial_1 f\esnu_\i(\xx, \yyag), \,
\g_2 \in \partial_2 f\esnu_\i(\xx, \yyag)\}$. 

By the definition of $f\esnu_i$, 
at $(\xx,\yyag)\in \mu\esnu_\i \M \times \M$, the subdifferential of $f\esnu_\i$ with respect to the second variable is $\partial_2 f_\i\esnu(\xx,\yyag)=\mu\esnu_\i f_{ \bar{\cutp}_{i}\esnu }(\txt\frac{1}{\mu_i\esnu} \xx ,\yyag)$, hence each $\g_2 \in \partial_2 f\esnu_\i(\xx, \yyag)$ is bounded by $\mu\esnu_\i \Bdg$ in $l^2$-norm according to \Cref{asp:partials}. In consequence, for each $\h\in \partial_1 \hf\esnu_\i(\xx, \yyag-\xx)$, there is $\g_1 \in\partial_1 f\esnu_\i(\xx, \yyag)$ such that $\|\h-\g_1\|\leq \mu\esnu_\i \Bdg$. 
Finally, set $\ld\esnu_i \eqd \mu\esnu_\i \Bdg$,  and $\mld\esnu \eqd \max_i \ld\esnu_\i$ tends to 0.

iv) By definition, $D\esnu=0$.
\end{proof}

\paragraph{Case 2: Finite-dimensions Parameterized Characteristics -- Meshgrid Approximation}~~ 

Assume that nonatomic game $\Gna$ satisfy two conditions:

(i) The feasible pure-action sets are $\dimp$-dimensional polytopes: there exist a constant real-valued $\dimp \times T$ matrix $\A$, and a bounded mapping $\bb:\Theta\rightarrow\rit^\dimp$, such that for any $\th$, $\X_\th = \{\xx \in \rit^T  :  \A\xx \leq \bb_\th \}$, which is a nonempty, bounded, closed and convex polytope in $\rit^T$.

(ii) There is a bounded mapping $\ss:\Theta \rightarrow \rit^l$ 
such that for any $\th \in \Theta, f_\th(\cdot,\cdot)= f(\cdot,\cdot\,;\ss_\th)$.   Furthermore, for all $(\xx,\yyag)\in \M^2$, $\partial_1 f(\xx,\yyag;\cdot)$ is Lipschitz-continuous in $\ss$ and with a Lipschitz constant $L_3$, independent of $\xx$ and $\yyag$.

Denote $\ub_k=  \min_\th b_{\th,k} $, $\ob_k = \max_\th b_{\th,k}$ for $k \in \{1\dots \dimp \}$ and $\us_k= \min_\th s_{\th,k} $, $\os_k = \max_\th s_{\th,k}$ for $k\in \{1\dots l\}$.  The characteristics of player $\th$ are parameterized by point $(\bb_\th, \ss_\th)$ in $\prod_{k=1}^\dimp [\ub_k,\ob_k] \times \prod_{k=1}^l [\us_k,\os_k]$, a compact subset of $\rit^{\dimp+l}$.

Fix $\nu \in \nit^*$, consider a uniform partition of the compact set $\prod_{k=1}^\dimp [\ub_k,\ob_k] \times \prod_{k=1}^l [\us_k,\os_k]$, obtained by dividing each dimension of this compact set into $\snu$ equal parts. Hence, the partition is composed of $I\esnu \eqd \nu^{\dimp+l}$ equal-sized subsets of $\prod_{k=1}^\dimp [\ub_k,\ob_k] \times \prod_{k=1}^l [\us_k,\os_k]$. The  cutting points of the partition are  $\ub_{k,n_k}\eqd \ub_k+\frac{n_k}{\nu}(\ob_k-\ub_k) $ for $k\in \{1,\ldots, \dimp\}$, and $\us_{k,n_k}\eqd \us_k+\frac{n_k}{\nu}(\os_k-\us_k)$ for $k\in \{1,\ldots, l\}$, with $n_k\in \{0,\dots, \nu \}$. Let the set of \emph{vectorial} indices, indexing the partition, be denoted by:
\begin{equation*}
\Gamma\esnu \eqd \{\nn=(n_k)_{k=1}^{\dimp+l} \in \nit^{\dimp+l} \,|\, n_k\in \{1,\ldots, \nu\}\}\ .
\end{equation*}
Define the corresponding partition of the interval $\Theta$: $\Theta=\dot{\bigcup}_{\nn\in \Gamma\esnu} \Theta\esnu_{\nn}$, where:
\begin{align*}
\Theta\esnu_{\nn} \eqd  & \Big\{\th\in \Theta : b_{\th,k} \in [\ub_{k,n_k- 1},\ub_{k,n_k} [ \text{ for } 1\leq k \leq \dimp; \,   s_{\th, k}  \in [\us_{k,n_k-1},\us_{k,n_k}  [ \text{ for } 1\leq k \leq l \Big\}. 
\end{align*}
To be rigorous, when $\ub_{k,n_k}=\ob_{k}$ or $\us_{k,n_k} =\os_k$, the parameter interval is closed at the right. 

Finally, define the set of players $\I\esnu$ as the elements $\nn$ in $\Gamma\esnu$ such that $\mu(\Theta\esnu_{\nn}) >0$.

\begin{remark} \label{rm:approx_finitedim_infininitesimal}
If there is a set of players of strictly positive measure sharing the same $\bb$ and $\ss$, then condition $\max_{\i\in\I\esnu} \mu_\i \rightarrow 0$ is satisfied. In this case, adding another dimension in the partition by cutting $\Theta=[0,1]$ into $\nu$ uniform segments solves the problem.
\end{remark}

\begin{proposition}\label{prop:approx_seq_finitedim}
For $\snu \in \nit^*$, let finite-player game $\G\esnu(A\esnu)$ with an aggregative constraint be defined by $A\esnu\eqd A$, player set $\I\esnu\eqd\{ \nn\in \Gamma\esnu : \mu(\Theta\esnu_{\nn})>0\}$ and, for each finite player $\nn\in\I\esnu$, 
\begin{align*}
 \X_{\nn}\esnu &\eqd  \{\xx \in \rit^T  | \A\xx \leq \txt\int_{\Theta\esnu_{\nn}}\bb_{\th}\, \text{d}\th \}\ , \\
  f_{\nn}\esnu(\xx,\yyag) &\eqd  \mu_{\nn}\esnu f\big( \txt\frac{1}{\mu_{\nn}\esnu} \xx ,\yyag ;\txt\frac{1}{\mu_{\nn}\esnu}\txt\int_{\Theta_{\nn}\esnu} \ss_{\th} \dth \big)  , \quad \forall (\xx,\yyag)\in \mu_i\esnu\M\times \M.
\end{align*}
Then, under \Cref{ass_X_nonat,ass_ut_nonat,assp:tech,assp_lip,asp:partials},  $(\G\esnu(A))_{\snu}=\big( \I\esnu, \FX\esnu,A, (f\esnu_\i)_{\i\in \I\esnu} \big)_{\snu} $
 is an AAS of the  nonatomic game $\Gna(A)$ with an aggregative constraint. 
\end{proposition}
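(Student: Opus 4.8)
The plan is to verify the four conditions i)--iv) of \Cref{def:approx_seq}, exploiting that within each cell $\Theta\esnu_\nn$ the parameter pair $(\bb_\th,\ss_\th)$ varies by a vanishing amount. Concretely, for $\th\in\Theta\esnu_\nn$ each coordinate $b_{\th,k}$ lies in an interval of length $(\ob_k-\ub_k)/\nu$ and each $s_{\th,k}$ in one of length $(\os_k-\us_k)/\nu$, so the $\bb$- and $\ss$-diameters of a cell are bounded by $\tfrac1\nu(\sum_k(\ob_k-\ub_k)^2)^{1/2}$ and $\tfrac1\nu(\sum_k(\os_k-\us_k)^2)^{1/2}$ respectively, both tending to $0$. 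The averaged parameters $\bar\bb\esnu_\nn\eqd\tfrac{1}{\mu\esnu_\nn}\int_{\Theta\esnu_\nn}\bb_\th\dth$ and $\bar\ss\esnu_\nn\eqd\tfrac{1}{\mu\esnu_\nn}\int_{\Theta\esnu_\nn}\ss_\th\dth$ are convex averages of values attained in the same cell, so $\|\bb_\th-\bar\bb\esnu_\nn\|$ and $\|\ss_\th-\bar\ss\esnu_\nn\|$ are controlled by these diameters for every $\th\in\Theta\esnu_\nn$. Following \Cref{rm:approx_finitedim_infininitesimal}, I first add the uniform partition of $\Theta=[0,1]$ into $\nu$ segments to the construction, which guarantees $\mmu\esnu\leq\tfrac1\nu\to 0$.

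For condition i), note that by construction $\tfrac1{\mu\esnu_\nn}\X\esnu_\nn=\{\xx:\A\xx\leq\bar\bb\esnu_\nn\}$, a polytope defined by the same matrix $\A$ as $\X_\th=\{\xx:\A\xx\leq\bb_\th\}$ but with a perturbed right-hand side. The key ingredient is the Lipschitz stability of the solution set of a linear inequality system in its right-hand side (Hoffman's bound): there is a constant $\kappa_\A$, depending only on $\A$, with $d_H(\{\A\xx\leq\bb\},\{\A\xx\leq\bb'\})\leq\kappa_\A\|\bb-\bb'\|$ whenever both sets are nonempty. Both $\X_\th$ and $\tfrac1{\mu\esnu_\nn}\X\esnu_\nn$ are nonempty, the latter because $\bar\bb\esnu_\nn$ is a convex average of feasible right-hand sides, so $\tfrac1{\mu\esnu_\nn}\int_{\Theta\esnu_\nn}\xx_\th\dth$ for a measurable selection $\xx_\th\in\X_\th$ is feasible. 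Hence $\dset\esnu_\nn\leq\kappa_\A\sup_{\th\in\Theta\esnu_\nn}\|\bb_\th-\bar\bb\esnu_\nn\|\to 0$, giving $\mdset\esnu\to 0$. The requirement $\spa\X\esnu_\nn=\spa\X_\th$ is handled exactly as in the proof of \Cref{prop:approx_seq_continuous}: I subdivide each $\Theta\esnu_\nn$ into finitely many groups on which $\spa\X_\th$ is constant, which is possible since all $\X_\th\subset\rit^T$.

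Conditions ii) and iii) reduce to scaling identities for the constructed cost functions. Differentiating $f\esnu_\nn(\xx,\yyag)=\mu\esnu_\nn f(\tfrac1{\mu\esnu_\nn}\xx,\yyag;\bar\ss\esnu_\nn)$ via the chain rule yields $\partial_1 f\esnu_\nn(\mu\esnu_\nn\xx,\yyag)=\partial_1 f(\xx,\yyag;\bar\ss\esnu_\nn)$ and $\partial_2 f\esnu_\nn(\xx,\yyag)=\mu\esnu_\nn\,\partial_2 f(\tfrac1{\mu\esnu_\nn}\xx,\yyag;\bar\ss\esnu_\nn)$. For ii), I repeat the argument of \Cref{prop:approx_seq_continuous}: by \cite[Propositions 16.6 and 16.7]{combettes2011monotone}, every $\h\in\partial_1\hf\esnu_\nn(\xx,\yyag-\xx)$ is within $\|\g_2\|$ of $\partial_1 f\esnu_\nn(\xx,\yyag)$ for some $\g_2\in\partial_2 f\esnu_\nn(\xx,\yyag)$, and by \Cref{asp:partials} together with the scaling identity $\|\g_2\|\leq\mu\esnu_\nn\Bdg$, so $\mld\esnu\leq\mmu\esnu\Bdg\to 0$. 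For iii), the identities give $d_H(\partial_1 f\esnu_\nn(\mu\esnu_\nn\xx,\yyag),\partial_1 f_\th(\xx,\yyag))=d_H(\partial_1 f(\xx,\yyag;\bar\ss\esnu_\nn),\partial_1 f(\xx,\yyag;\ss_\th))$, which by the assumed $L_3$-Lipschitz continuity of $\partial_1 f(\xx,\yyag;\cdot)$ in $\ss$ is at most $L_3\|\bar\ss\esnu_\nn-\ss_\th\|$, so $\mduti\esnu\leq L_3\sup_{\th\in\Theta\esnu_\nn}\|\bar\ss\esnu_\nn-\ss_\th\|\to 0$. Condition iv) is immediate: $A\esnu\eqd A$ forces $D\esnu=d_H(A,A)=0$ and $\spa A\esnu=\spa A$.

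The main obstacle is condition i). Unlike Case 1, where set convergence followed directly from the assumed continuity of $\th\mapsto\X_\th$, here I must convert the elementary convergence of the finite-dimensional parameters $\bb_\th$ into Hausdorff convergence of the polytopes, which is precisely where the quantitative Hoffman stability of polyhedral solution sets in the right-hand side is needed, along with the nonemptiness of the averaged polytopes. The remaining three conditions are the parameterized analogue of Case 1 combined with the routine scaling computations above.
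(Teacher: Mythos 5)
Your proof is correct and takes essentially the same approach as the paper's: the paper likewise verifies the four AAS conditions by bounding $d_H\big(\X_{\th'},\tfrac{1}{\mu_{\nn}\esnu}\X_{\nn}\esnu\big)$ via Lipschitz stability of the polytope in its right-hand side (it cites a result generalized from Batson, which is exactly the role your Hoffman bound plays), by the $L_3$-Lipschitz dependence on $\ss$ for $\mduti\esnu$, by the chain-rule/scaling argument of \Cref{prop:approx_seq_continuous} for $\mld\esnu$, and by $D\esnu=0$. Your extra details --- nonemptiness of the averaged polytope, the span subdivision, and the added uniform cut of $[0,1]$ ensuring $\mmu\esnu\to 0$ (the fix of \Cref{rm:approx_finitedim_infininitesimal}, without which the $\mld\esnu$ bound could fail) --- only make explicit what the paper leaves implicit.
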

\begin{proof} 
Let us show the four properties required by \Cref{def:approx_seq} as follows.

%

i) For each $\nn\in\I\esnu$, $\txt\frac{1}{\mu_{\nn}\esnu}\X_{\nn}\esnu =\left\{\xx \in \rit^T  : \A\xx \leq \txt\frac{1}{\mu_{\nn}\esnu}\txt\int_{\Theta\esnu_{\nn}}\bb_{\th}\, \dth \right\}$. Then, by a result generalized from \cite[Thm. 4.1]{batson1987combinatorial}, 
 there is a constant $C_0$ such that, for each $\th'\in \Theta_{\nn}\esnu$: $d_H\left( \X_{\th'}, \txt\frac{1}{\mu_{\nn}\esnu}\X_{\nn}\esnu \right) \leq C_0 \norm{ \bb_{\th'}-  \frac{1}{\mu_{\nn}\esnu}\int_{\Theta\esnu_{\nn}}\bb_{\th}\, \text{d}\th }
\leq \frac{C_0}{\nu} \norm{ \bm{\ob} -\bm{\ub} }$. Hence, $\mdset\esnu$ tends to 0.

ii) For each  $\nn \in \I\esnu$ and each $\th'\in \Theta_{\nn}\esnu$,  for all $(\xx,\yyag)\in \M^2$, one has:
\begin{align*}
 d_H\Big( \partial_1 f\esnu_{\nn} & (\txt {\mu_{\nn}\esnu } \xx, \yyag) \ , \ \partial_1 f_{\th'}(\xx,\yyag) \Big) =    d_H \Big(  \partial_1 f\big(  \xx, \yyag ; \txt\frac{1}{\mu_{\nn}\esnu}\txt\int_{\Theta_{\nn}\esnu} \ss_{\th} \dth\big) \ ,\ \partial_1 f(\xx,\yyag;\ss_{\th'}) \Big) \\
 &\leq L_3 \|\frac{1}{\mu_{\nn}\esnu}\int_{\Theta_{\nn}\esnu} \ss_{\th} \dth - \ss_{\th'} \| \leq  \frac{L_3}{\snu} \norm{\bm{\os}-\bm{\us} } \ ,
\end{align*} 
by the Lipschitz continuity of $\partial_1 f(\xx,\yyag;\cdot)$. Hence, $\mduti\esnu$ tends to 0. 
 
iii) $\mld\esnu$ tends to 0 is proved as in  \Cref{prop:approx_seq_continuous};

iv) By definition,  $D\esnu=0$.
\end{proof}

\begin{remark}  
In \Cref{prop:approx_seq_finitedim}, instead of the average value of the characteristics of nonatomic players on $\Theta_{\nn}\esnu $, one can use the characteristic value of any nonatomic player in $\Theta_{\nn}\esnu$. 
\end{remark}

\begin{remark}
If the finite-player games constructed by these two constructions do not satisfy the condition that $\hf\esnu_i(\cdot,\yyag_{-i})$ is convex in $\xx_i$, then one can use the pseudo-VNE of $\G\esnu(A\esnu)$ (cf.  Appendix). The existence of pseudo-VNE is guaranteed by our construction of $f\esnu_i$'s. 

Recall that most of the aggregatively games have cost functions of form \Cref{eq:common_form_cost}. As long as the conditions in \Cref{lm:monotonemap} (monotone per-unit cost of public products and concave private utility) are satified, $\hf\esnu_i(\cdot,\yyag_{-i})$ is convex in $\xx_i$.
\end{remark}

\begin{remark}
In both constructions above, the number of finite players $I\esnu$ tends to infinite and their ``size'' $\mu\esnu_i$ tends to zero. These conditions are not required in the definition of AAS. The two constructions are made in this way for two reasons: firstly, nonatomic players' characteristics (the pure-action set and the subdifferential of their cost function with respect to the first variable) in a very small set $\Theta\esnu_i$ are ensured to be close enough by construction; secondly, the subdifferential of cost function $f_i$ of finite player $i$ with respect to the second variable $\xxag$ is vanishing so that she behaves almost like a nonatomic player. If we consider pseudo-VNE instead of NE in the approximating finite-player games, then the second reason  no longer exists. If the nonatomic players are fairly homogeneous or even there are finitely-many different types of them, then there is no need to divide $\Theta$ into smaller and smaller intervals to regroup sufficiently homogeneous nonatomic players, as imposed by the first reason. We shall bear in mind these two points when constructing an AAS for explicit games.   
\end{remark}

\section{Conclusion}\label{sec:conclusion}

\Cref{thm:converge_with_u} provides a theoretical basis for the use of finite-dimensional VNE/NE as an approximation of the VWE/WE in a strongly monotone or aggregatively strongly monotone nonatomic aggregative game with or without aggregative constraints. There are numerous research themes related to this result and our topic in general.

Firstly, one needs to find efficient algorithms for the computation of NE and VNE in finite-player aggregative games with or without aggregative constraints via the solution of their characteristic variational inequalities. An extensive literature exists in this regard but our particular case of aggregative game with aggregative constraints may lead to special methods or improvements on existing results \cite{gramm2017}.

Secondly, the extension of evolutionary dynamics for population games and related algorithms to nonatomic games with infinitely many classes of players can be non trivial. A recent work \cite{Hadikhanloo2017} proposes online learning methods for population games with heterogeneous convex pure-action sets. The presence of aggregate constraints adds two additional difficulties as analyzed in \Cref{rm:unilateral_deviation,rm:collective_deviation}. Evolutionary dynamics in population games are based on unilateral adaptations from players. On the one hand, in the presence of coupling constraints,  unilateral deviations by players may well lead to a pure-action profile violating the coupling constraint. On the other hand, a feasible deviation in the pure-action profile cannot always be decomposed into unilateral deviations of players. 

Thirdly, our results are limited to monotone games and the convergence result is limited to strongly monotone games. The study of nonatomic games that are not monotone needs probably other approaches. Indeed, even for population games where there are only finitely many types of players, there exist much fewer results for games that are not linear, potential or monotone.

Fourthly, other methods to deal with aggregate constraints exist, such as rationing in the case of capacity constraint of network or power grid.


\section*{Acknowledgments}
\noindent We thank Sylvain Sorin, St\'ephane Gaubert, Marco Mazzola, Olivier Beaude and Nadia Oudjane for their insightful comments.

\section*{Appendix: Nonatomic behavior of finite players}\label{app:pseudoVNE}
In the definition of AAS, we suppose that the cost functions of a finite player in a finite-player game is convex in her own action, i.e. $\hf\esnu_{i}(\cdot,\yyag_{-i})$ is convex on $\M$ for each fixed aggregate profile of her rivals $\xx_{-i}$. However, this condition can be replaced by the condition that $f\esnu_{i}(\cdot,\yyag)$ is convex on $\M$ for each fixed aggregate profile $\yyag\in \M$, which is naturally satisfied with the two constructions presented in \Cref{subsec:construction}. In this case, instead of considering VNE in the AAS, we consider another equilibrium notion called pseudo-VNE, where finite players take themselves as nonatomic ones by ignoring the impact of their action on the aggregate-action profile. 
\medskip
 
Consider a finite-player aggregative game $\GA$ with an aggregative constraint defined in \Cref{def:general-atomicGame}.
Define a correspondence $H':\FX\rightrightarrows \rit^{IT}$ as follows: for all $\xx\in \FX$, $H'(\xx) \eqd\{(\g_i)_{i\in \I}\in \rit^{IT}: \g_i \in \partial_1 f_\i(\xx_\i, \xxag) , \  \forall i\in \I\}$,
where $\xxag=\sum_{i\in \I}\xx_i$ is the aggregate-action profile induced by $\xx$.

\begin{definition}\label{def:pseudoVNE}
A \emph{pseudo-VNE} of a finite-player game $\GA$ with an aggregative constraint  is  a solution to the following GVI:
\begin{equation*}
 \text{Find } \hxx\in \FX(A) \text{ s.t. } \exists \g\in H'(\xx) \text{ s.t. } \langle \g, \xx-\hxx \rangle\geq 0,\; \forall \xx\in \FX(A)\ .
 \end{equation*}
\end{definition}

\begin{remark}[Pseudo-VNE is not an equilibirum]
A pseudo-NE/pseudo-VNE is \emph{not} an equilibrium from a game theoretical point of view. Defined as a solution to a certain GVI, it is an auxiliary notion for the approximation of NE/VNE.

The authors in \cite{gentile2017nash} refer to this equilibrium notion as  \textit{Wardrop equilibrium}. We avoid this name here in order to distinguish it from the WE in nonatomic games.
\end{remark}

\begin{assumption}[Convex pseudo-costs] \label{assp_convex_pseudocosts}
For each $\i\in\I$, the function $f_i(\xx_i,\yyag)$ is continuous in both variables, and is convex in $\xx_i$ for all $\yyag\in \Sxag$. 
\end{assumption}
Under \Cref{assp_convex_pseudocosts}, the existence of a pseudo-VNE can be proved as for VNE in \Cref{prop:exist_ve}.
\medskip

Define a \emph{pseudo-AAS} for a nonatomic aggregative game $\GnaA$ with an aggregative constraint be defined almost exactly like an AAS, except that condition (iv) is no longer required, and \Cref{assp_convex_costs} is replaced by \Cref{assp_convex_pseudocosts}. Then, in \Cref{thm:converge_with_u} one can replace the VNE by the pseudo-VNE.
\begin{theorem}[Convergence of pseudo-VNE to VWE] \label{thm:converge_pseudo}
Under \Cref{assp_convex_pseudocosts,assp_compactness,ass_X_nonat,ass_ut_nonat,assp:tech,assp_lip}, let $(\G\esnu(A\esnu))_\snu$ be a pseudo-AAS of  $\Gna(A)$. 
 Let $\sxx$ be the VWE of $\GnaA$, $\hxx\esnu \in \FX\esnu(A\esnu)$ a pseudo-VNE of $\G\esnu(A\esnu)$  for each $\snu\in\nit$, and $\sxxag$, $\hxxag\esnu$ their respective aggregate-action profiles. 
 Then, there exist constants $\rho>0$, $\bar{\rho}$ and $K_A\eqd \tfrac{M+1}{\min\{\rho, \bar{\rho}\}}$ for the following results to hold:

(1) If $\Gna$ is aggregatively strongly monotone with modulus $\beta$, then $(\hxxag\esnu)_\snu$ converges to  $\sxxag$:  for all $\snu\in \nit^*$ such that $2\max(\mdset\esnu, D\esnu) <\rho$, $\| \hxxag\esnu- \sxxag \|^2 \leq \frac{1}{\beta}  \big(  (3\Bdf + 1) K_A\max(D\esnu,\mdset\esnu) +(2M+1)\mduti\esnu \big)$.

(2) If $\Gna$ is strongly monotone with modulus $\alpha$, then 
 $(\psi\esnu(\hxx\esnu))_\snu$  converges to $\sxx$ in $L^2$-norm: for all $\snu \in \nit^*$ such that $2\max(\mdset\esnu, D\esnu) <\rho$, $\|\psi\esnu(\hxx\esnu)-\sxx\|^2_2\leq \frac{1}{ \stgccvut} \big( (3\Bdf +1) K_A\max(D\esnu,\mdset\esnu)  +(2M+1)\mduti\esnu \big)$.

If there are no aggregate constraint, replace $K_A$ and $D\esnu$ all by 0.
\end{theorem}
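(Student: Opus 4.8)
The plan is to transcribe the proof of \Cref{thm:converge_with_u} almost verbatim, the only structural difference being that a pseudo-VNE is built directly on the nonatomic-style subgradients, so the quantity $\mld\esnu$ never enters and may simply be set to $0$ throughout. Fix $\snu\in\nit^*$ and introduce, exactly as before, $\hyy\esnu=\psi\esnu(\hxx\esnu)$, $\hz\esnu=\Pi(\hyy\esnu)\in\FX(A)$, $\syy\esnu=\bpsi\esnu(\sxx)$, and $\sz\esnu=\Pi\esnu(\syy\esnu)\in\FX\esnu(A\esnu)$. From the VWE characterization \eqref{cond:ind_opt_ve_inf} there is $\g\in H(\sxx)$ with $\int_\Theta\langle\g_\th,\sxx_\th-\hz\esnu_\th\rangle\dth\le 0$, and \Cref{lem:dist_agg_genized_sets}.(1) still gives $\|\hyy\esnu-\hz\esnu\|_2\le 2K_A\max(D\esnu,\mdset\esnu)$. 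Since $\psi\esnu$ preserves the aggregate, $\hyyag\esnu=\hxxag\esnu$, so that $H$ is evaluated at the correct aggregate profile below.

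The single point of departure is the equilibrium condition. By \Cref{def:pseudoVNE} there exist selections $\h_i\in\partial_1 f\esnu_i(\hxx_i\esnu,\hxxag\esnu)$ — the subgradients of the cost in the player's \emph{own} action evaluated at the \emph{full} aggregate, i.e. the nonatomic-style subgradients — such that $\sum_{i\in\I\esnu}\langle\h_i,\hxx_i\esnu-\zz_i\esnu\rangle\le 0$ for every $\zz\esnu\in\FX\esnu(A\esnu)$. For $\th\in\Theta_i\esnu$ one has $\hxx_i\esnu=\mu_i\esnu\hyy_\th\esnu$, so the definition \eqref{eq:def_dut} of $\duti\esnu_i$ applies \emph{directly} to $\h_i$: $d\big(\h_i,\partial_1 f_\th(\hyy_\th\esnu,\hyyag\esnu)\big)\le\duti\esnu_i$. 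Invoking the compact-valued selection theorem as in \Cref{lm:bestreply}, there is $\rr\in H(\hyy\esnu)$ with $\rr_\th\in\partial_1 f_\th(\hyy_\th\esnu,\hyyag\esnu)$ and $\|\rr_\th-\h_i\|\le\duti\esnu_i$ for a.e. $\th\in\Theta_i\esnu$. This is precisely where $\mld\esnu$ disappears relative to \Cref{thm:converge_with_u}: we never pass through $\partial_1\hf\esnu_i$, so only $\mduti\esnu$ controls the gap to the nonatomic subgradients.

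With these ingredients the expansion of $\int_\Theta\langle\g_\th(\sxx_\th,\sxxag)-\rr_\th(\hyy_\th\esnu,\hyyag\esnu),\sxx_\th-\hyy_\th\esnu\rangle\dth$ into its four summands proceeds unchanged, the middle error term now being bounded by $(2M+\mdset\esnu)\mduti\esnu$ instead of $(2M+\mdset\esnu)(\mduti\esnu+\mld\esnu)$. For the residual $J\esnu\eqd\sum_{i\in\I\esnu}\langle\h_i,\hxx_i\esnu-\syy_i\esnu\rangle$, I split $\syy_i\esnu$ through $\sz\esnu=\Pi\esnu(\syy\esnu)$, use the pseudo-VNE inequality to annihilate the $\hxx_i\esnu-\sz_i\esnu$ part, and bound the remainder by \Cref{lem:dist_agg_genized_sets}.(2); because $\h_i$ is an honest subgradient of $f\esnu_i$ one has $\|\h_i\|\le\Bdf$ (for the two constructions of \Cref{subsec:construction}; in general $\|\h_i\|\le\Bdf+\mduti\esnu$, the excess being absorbed into the $+1$), whence $J\esnu\le\Bdf\sqrt{\mmu\esnu}K_A\max(D\esnu,\mdset\esnu)$ rather than the $(\Bdf+\mld\esnu)$-factor of \eqref{eq:unpperboundmono}. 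Combining yields $\int_\Theta\langle\g_\th-\rr_\th,\sxx_\th-\hyy_\th\esnu\rangle\dth\le(3\Bdf+1)K_A\max(D\esnu,\mdset\esnu)+(2M+1)\mduti\esnu$, and applying strong (resp. aggregative strong) monotonicity of $\Gna$ to the left-hand side — legitimate since both $\g\in H(\sxx)$ and $\rr\in H(\hyy\esnu)$ lie in the \emph{same} nonatomic correspondence $H$ — delivers the two claimed estimates. The only genuinely new verification, and the main thing to watch, is the uniform subgradient bound $\|\h_i\|\le\Bdf$ (up to an $o(1)$) for pseudo-VNE selections; the rest is a faithful copy of the proof of \Cref{thm:converge_with_u} with $\mld\esnu$ set to $0$.
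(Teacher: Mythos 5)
Your proof is correct and takes essentially the same route as the paper's: the paper itself proves \Cref{thm:converge_pseudo} by repeating the proof of \Cref{thm:converge_with_u} with the pseudo-VNE selection $\h_i\in\partial_1 f\esnu_\i(\hxx\esnu_\i,\hxxag\esnu)$ replacing the VNE selection $\h_i\in\partial_1\hf\esnu_\i(\hxx\esnu_\i,\hxxag\esnu_{-\i})$, which is exactly your observation that $\mld\esnu$ drops out and only $\mduti\esnu$ controls the subgradient gap. Your additional verification that $\|\h_\i\|\le\Bdf+\mduti\esnu$ (with the excess absorbed into the constant) is a sound refinement of a point the paper glosses over.
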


The proof is almost the same as for \Cref{thm:converge_with_u}. The only difference is that, instead of considering $\h_i \in \partial_1 \hf\esnu_\i(\hxx\esnu_i, \hxxag\esnu_{-i})$ for each $i\in\I\esnu$ such that $\sum_{\i\in\I\esnu}  \big\langle  \h_\i( \hxx\esnu_\i, \hxxag_{-i}\esnu), \hxx\esnu_\i - \zz\esnu_i\rangle \leq 0$ for all $\zz\esnu_\i \in \X_\i$, one has to consider $\h_i \in \partial_1 f\esnu_\i(\hxx\esnu_i, \hxxag\esnu)$ for each $i\in\I\esnu$ such that $\sum_{\i\in\I\esnu}  \big\langle  \h_\i( \hxx\esnu_\i, \hxxag\esnu), \hxx\esnu_\i - \zz\esnu_i\rangle \leq 0$ for all $\zz_\i \in \X_\i$.

\bibliographystyle{ims}


\bibliography{../../bib/longJournalNames,../../bib/biblio1,../../bib/biblio2,../../bib/biblio3,../../bib/biblioBooks}

\begin{thebibliography}{54}
\expandafter\ifx\csname natexlab\endcsname\relax\def\natexlab#1{#1}\fi
\expandafter\ifx\csname url\endcsname\relax
  \def\url#1{\texttt{#1}}\fi
\expandafter\ifx\csname urlprefix\endcsname\relax\def\urlprefix{URL }\fi
\providecommand{\eprint}[2][]{\url{#2}}

\bibitem[{Aliprantis and Border(2006)}]{aliprantis2006infinite}
\textsc{Aliprantis, C.~D.} and \textsc{Border, K.} (2006).
\newblock \textit{Infinite Dimensional Analysis -- A Hitchhiker's Guide}.
\newblock 3rd ed. Springer-Verlag Berlin Heidelberg.

\bibitem[{Altman et~al.(2002)Altman, Kameda and Hosokawa}]{altman2002nash}
\textsc{Altman, E.}, \textsc{Kameda, H.} and \textsc{Hosokawa, Y.} (2002).
\newblock Nash equilibria in load balancing in distributed computer systems.
\newblock \textit{International Game Theory Review}, \textbf{4} 91--100.

\bibitem[{Atzeni et~al.(2013)Atzeni, Ord{\'o}{\~n}ez, Scutari, Palomar and
  Fonollosa}]{atzeni2013demand}
\textsc{Atzeni, I.}, \textsc{Ord{\'o}{\~n}ez, L.~G.}, \textsc{Scutari, G.},
  \textsc{Palomar, D.~P.} and \textsc{Fonollosa, J.~R.} (2013).
\newblock Demand-side management via distributed energy generation and storage
  optimization.
\newblock \textit{IEEE Transactions on Smart Grid}, \textbf{4} 866--876.

\bibitem[{Aumann(1965)}]{aumann1965integral}
\textsc{Aumann, R.~J.} (1965).
\newblock Integrals of set-valued functions.
\newblock \textit{Journal of Mathematical Analysis and Applications},
  \textbf{12} 1--12.

\bibitem[{Aumann(1969)}]{aumann1969measure}
\textsc{Aumann, R.~J.} (1969).
\newblock Measurable utility and the measurable choice problem.
\newblock In \textit{La D\'ecision}. Paris. Colloque Internationaux du C. N. R.
  S., 15--26.

\bibitem[{Aumann(1976)}]{aumann1976integration}
\textsc{Aumann, R.~J.} (1976).
\newblock An elementary proof that integration preserves uppersemicontinuity.
\newblock \textit{Journal of Mathematical Economics}, \textbf{3} 15--18.

\bibitem[{Batson(1987)}]{batson1987combinatorial}
\textsc{Batson, R.~G.} (1987).
\newblock Combinatorial behavior of extreme points of perturbed polyhedra.
\newblock \textit{Journal of Mathematical Analysis and Applications},
  \textbf{127} 130--139.

\bibitem[{Bauschke and Combettes(2011)}]{combettes2011monotone}
\textsc{Bauschke, H.} and \textsc{Combettes, P.~L.} (2011).
\newblock \textit{{Convex Analysis and Monotone Operator Theory in Hilbert
  Spaces}}.
\newblock 2nd ed. Springer International Publishing.

\bibitem[{Beckmann et~al.(1956)Beckmann, McGuire and Winston}]{Bec56}
\textsc{Beckmann, M.}, \textsc{McGuire, C.~B.} and \textsc{Winston, C.~B.}
  (1956).
\newblock \textit{Studies in the {E}conomics of {T}ransportation}.
\newblock Yale University Press.

\bibitem[{Carmona and Podczeck(2009)}]{Carmona2009large}
\textsc{Carmona, G.} and \textsc{Podczeck, K.} (2009).
\newblock On the existence of pure-strategy equilibria in large games.
\newblock \textit{Journal of Economic Theory}, \textbf{144} 1300--1319.

\bibitem[{Cavazzuti et~al.(2002)Cavazzuti, Pappalardo and
  Passacantando}]{CPP2002}
\textsc{Cavazzuti, E.}, \textsc{Pappalardo, M.} and \textsc{Passacantando, M.}
  (2002).
\newblock Nash equilibria, variational inequalities, and dynamical systems.
\newblock \textit{Journal of Optimization Theory and Applications},
  \textbf{114} 491--506.

\bibitem[{Chan and Pang(1982)}]{chanpang1982gqvip}
\textsc{Chan, D.} and \textsc{Pang, J.~S.} (1982).
\newblock The generalized quasi-variational inequality problem.
\newblock \textit{Mathematics of Operations Research}, \textbf{7} 211--222.

\bibitem[{Corch\'on(1994)}]{Corchon1994}
\textsc{Corch\'on, L.~C.} (1994).
\newblock Comparative statics for aggregative games the strong concavity case.
\newblock \textit{Mathematical Social Sciences}, \textbf{28} 151--165.

\bibitem[{Correa et~al.(2004)Correa, Schulz and
  Stier-Moses}]{correa2004capacitated}
\textsc{Correa, J.~R.}, \textsc{Schulz, A.~S.} and \textsc{Stier-Moses, N.~E.}
  (2004).
\newblock Selfish routing in capacitated networks.
\newblock \textit{Mathematics of Operations Research}, \textbf{29} 961--976.

\bibitem[{Debreu(1952)}]{Debreu1952}
\textsc{Debreu, G.} (1952).
\newblock A social equilibrium existence theorem.
\newblock \textit{Proceedings of the National Academy of Sciences of the
  U.S.A.}, \textbf{38} 866--893.

\bibitem[{Ding and Tarafdar(1996)}]{ding1996monotoneGVI}
\textsc{Ding, X.~P.} and \textsc{Tarafdar, E.} (1996).
\newblock Monotone generalized variational inequalities and generalized
  complementarity problems.
\newblock \textit{Journal of Optimization Theory and Applications}, \textbf{88}
  107--122.

\bibitem[{Facchinei and Pang(2007)}]{facchinei2007finite}
\textsc{Facchinei, F.} and \textsc{Pang, J.-S.} (2007).
\newblock \textit{Finite-dimensional variational inequalities and
  complementarity problems}.
\newblock Springer Science \& Business Media.

\bibitem[{Fan(1952)}]{Fan1952}
\textsc{Fan, K.} (1952).
\newblock Fixed-point and minimax theorems in locally convex topological linear
  spaces.
\newblock \textit{Proceedings of the National Academy of Sciences of the
  U.S.A.}, \textbf{38} 121--126.

\bibitem[{Fang and Peterson(1982)}]{FangPeterson1982gvi}
\textsc{Fang, S.~C.} and \textsc{Peterson, E.~L.} (1982).
\newblock Generalized variational inequalities.
\newblock \textit{Journal of Optimization Theory and Applications}, \textbf{38}
  363--383.

\bibitem[{Friesz et~al.(1994)Friesz, Bernstein, Mehta, Tobin and
  Ganjalizadeh}]{Friesz1994daytoday}
\textsc{Friesz, T.~L.}, \textsc{Bernstein, D.}, \textsc{Mehta, N.~J.},
  \textsc{Tobin, R.~L.} and \textsc{Ganjalizadeh, S.} (1994).
\newblock Day-to-day dynamic network disequilibria and idealized traveler
  information systems.
\newblock \textit{Operations Research}, \textbf{42} 1120--1136.

\bibitem[{Gentile et~al.(2017)Gentile, Parise, Paccagnan, Kamgarpour and
  Lygeros}]{gentile2017nash}
\textsc{Gentile, B.}, \textsc{Parise, F.}, \textsc{Paccagnan, D.},
  \textsc{Kamgarpour, M.} and \textsc{Lygeros, J.} (2017).
\newblock Nash and {W}ardrop equilibria in aggregative games with coupling
  constraints.
\newblock \textit{arXiv:1702.08789}.

\bibitem[{Glicksberg(1952)}]{Glicksberg1952}
\textsc{Glicksberg, I.~L.} (1952).
\newblock A further generalization of the {K}akutani fixed point theorem, with
  application to {N}ash equilibrium points.
\newblock \textit{Proceedings of the American Mathematical Society}, \textbf{3}
  170--174.

\bibitem[{Grammatico(2017)}]{gramm2017}
\textsc{Grammatico, S.} (2017).
\newblock Dynamic control of agents playing aggregative games with coupling
  constraints.
\newblock \textit{IEEE Transactions on Automatic Control}, \textbf{62}
  4537--4548.

\bibitem[{Hadikhanloo(2017)}]{Hadikhanloo2017}
\textsc{Hadikhanloo, S.} (2017).
\newblock Learning in anonymous nonatomic games with applications to
  first-order mean field games.
\newblock \textit{arXiv:1704.00378}.

\bibitem[{Harker(1991)}]{harker1991gne}
\textsc{Harker, P.~T.} (1991).
\newblock Generalized {N}ash games and quasi-variational inequalities.
\newblock \textit{European Journal of Operational Research}, \textbf{54}
  81--94.

\bibitem[{Hartman and Stampacchia(1966)}]{hartman1966}
\textsc{Hartman, P.} and \textsc{Stampacchia, G.} (1966).
\newblock On some non-linear elliptic differential-functional equations.
\newblock \textit{Acta Mathematica}, \textbf{115} 271--310.

\bibitem[{Haurie and Marcotte(1985)}]{haurie1985relationship}
\textsc{Haurie, A.} and \textsc{Marcotte, P.} (1985).
\newblock On the relationship between {N}ash-{C}ournot and {W}ardrop
  equilibria.
\newblock \textit{Networks}, \textbf{15} 295--308.

\bibitem[{Hofbauer and Sandholm(2009)}]{HofSan09}
\textsc{Hofbauer, J.} and \textsc{Sandholm, W.~H.} (2009).
\newblock Stable games and their dynamics.
\newblock \textit{Journal of Economic Theory}, \textbf{144} 1665--1693,
  1693.e4.

\bibitem[{Hofbauer and Sigmund(1998)}]{HofSig98}
\textsc{Hofbauer, J.} and \textsc{Sigmund, K.} (1998).
\newblock \textit{Evolutionary {G}ames and {P}opulation {D}ynamics}.
\newblock Cambridge University Press.

\bibitem[{Khan et~al.(1997)Khan, Rath and Sun}]{khan1997continuum}
\textsc{Khan, M.}, \textsc{Rath, K.~P.} and \textsc{Sun, Y.} (1997).
\newblock On the existence of pure strategy equilibria in games with a
  continuum of players.
\newblock \textit{Journal of Economic Theory}, \textbf{76} 13--46.

\bibitem[{Kulkarni and Shanbhag(2012)}]{Kulkarni2012vne}
\textsc{Kulkarni, A.~A.} and \textsc{Shanbhag, U.~V.} (2012).
\newblock On the variational equilibrium as a refinement of the generalized
  nash equilibrium.
\newblock \textit{Automatica}, \textbf{48} 45--55.

\bibitem[{Larsson and Patriksson(1999)}]{larsson1999side}
\textsc{Larsson, T.} and \textsc{Patriksson, M.} (1999).
\newblock Side constrained traffic equilibrium models -- analysis, computation
  and applications.
\newblock \textit{Transportation Research Part B: Methodological}, \textbf{33}
  233--264.

\bibitem[{Marcotte et~al.(2004)Marcotte, Nguyen and
  Schoeb}]{marcotte2004capacitated}
\textsc{Marcotte, P.}, \textsc{Nguyen, S.} and \textsc{Schoeb, A.} (2004).
\newblock A strategic flow model of traffic assignment in static capacitated
  networks.
\newblock \textit{Operations Research}, \textbf{52} 191--212.

\bibitem[{Marcotte and Patriksson(2007)}]{MarPat2007}
\textsc{Marcotte, P.} and \textsc{Patriksson, M.} (2007).
\newblock Traffic equilibrium.
\newblock In \textit{{Transportation}} (C.~Barnhart and G.~Laporte, eds.),
  vol.~14 of \textit{{Handbooks in Operations Research and Management
  Science}}. Elsevier, 623--713.

\bibitem[{Marcotte and Zhu(1997)}]{marcotte1997equilibria}
\textsc{Marcotte, P.} and \textsc{Zhu, D.-L.} (1997).
\newblock Equilibria with infinitely many differentiated classes of customers.
\newblock In \textit{Complementarity and Variational Problems, State of Art}.
  SIAM, 234--258.

\bibitem[{Mas-Colell(1984)}]{MASCOLELL1984}
\textsc{Mas-Colell, A.} (1984).
\newblock On a theorem of {S}chmeidler.
\newblock \textit{Journal of Mathematical Economics}, \textbf{13} 201--206.

\bibitem[{Maynard~Smith(1982)}]{MaynardSmith82}
\textsc{Maynard~Smith, J.} (1982).
\newblock \textit{{Evolution and the Theory of Games}}.
\newblock Cambridge University Press.

\bibitem[{Milchtaich(2000)}]{milchtaich2000generic}
\textsc{Milchtaich, I.} (2000).
\newblock Generic uniqueness of equilibrium in large crowding games.
\newblock \textit{Mathematics of Operations Research}, \textbf{25} 349--364.

\bibitem[{Nash(1950)}]{nash1950equilibrium}
\textsc{Nash, J.~F.} (1950).
\newblock Equilibrium points in $n$-person games.
\newblock \textit{Proceedings of the National Academy of Sciences of the
  U.S.A.}, \textbf{36} 48--49.

\bibitem[{Orda et~al.(1993)Orda, Rom and Shimkin}]{orda1993competitive}
\textsc{Orda, A.}, \textsc{Rom, R.} and \textsc{Shimkin, N.} (1993).
\newblock Competitive routing in multiuser communication networks.
\newblock \textit{IEEE/ACM Transactions on Networking}, \textbf{1} 510--521.

\bibitem[{Rath(1992)}]{rath1992direct}
\textsc{Rath, K.~P.} (1992).
\newblock A direct proof of the existence of pure strategy equilibria in games
  with a continuum of players.
\newblock \textit{Economic Theory}, \textbf{2} 427--433.

\bibitem[{Rosenthal(1973)}]{rosenthal1973network}
\textsc{Rosenthal, R.~W.} (1973).
\newblock The network equilibrium problem in integers.
\newblock \textit{Networks}, \textbf{3} 53--59.

\bibitem[{Sandholm(2001)}]{San00}
\textsc{Sandholm, W.~H.} (2001).
\newblock Potential games with continuous player sets.
\newblock \textit{Journal of Economic Theory}, \textbf{97} 81--108.

\bibitem[{Sandholm(2011)}]{San11}
\textsc{Sandholm, W.~H.} (2011).
\newblock \textit{Population {G}ames and {E}volutionary {D}ynamics}.
\newblock MIT Press.

\bibitem[{Schmeidler(1973)}]{schmeidler1973equilibrium}
\textsc{Schmeidler, D.} (1973).
\newblock Equilibrium points of nonatomic games.
\newblock \textit{Journal of statistical Physics}, \textbf{7} 295--300.

\bibitem[{Smith(1984{\natexlab{a}})}]{Smith1984a}
\textsc{Smith, M.} (1984{\natexlab{a}}).
\newblock A descent algorithm for solving monotone variational inequalities and
  monotone complementarity problems.
\newblock \textit{Journal of Optimization Theory and Applications}, \textbf{44}
  485--496.

\bibitem[{Smith(1984{\natexlab{b}})}]{smith84stability}
\textsc{Smith, M.~J.} (1984{\natexlab{b}}).
\newblock The stability of a dynamic model of traffic assignment---an
  application of a method of {L}yapunov.
\newblock \textit{Transportation Science}, \textbf{18} 245--252.

\bibitem[{Sorin and Wan(2016)}]{sorin2015finite}
\textsc{Sorin, S.} and \textsc{Wan, C.} (2016).
\newblock Finite composite games: Equilibria and dynamics.
\newblock \textit{Journal of Dynamics \& Games}, \textbf{3} 101--120.

\bibitem[{Taylor and Jonker(1978)}]{TayJon78}
\textsc{Taylor, P.~D.} and \textsc{Jonker, L.~B.} (1978).
\newblock Evolutionary stable strategies and game dynamics.
\newblock \textit{Mathematical Biosciences}, \textbf{40} 145--156.

\bibitem[{Wan(2012)}]{wan2012coalition}
\textsc{Wan, C.} (2012).
\newblock Coalitions in nonatomic network congestion games.
\newblock \textit{Mathematics of Operations Research}, \textbf{37} 654--669.

\bibitem[{Wardrop(1952)}]{wardrop1952some}
\textsc{Wardrop, J.~G.} (1952).
\newblock Some theoretical aspects of road traffic research.
\newblock In \textit{Proceedings of the Institute of Civil Engineers, Part II,
  1}. 325--378.

\bibitem[{Zhang and Nagurney(1997)}]{ZhangNa1997}
\textsc{Zhang, D.} and \textsc{Nagurney, A.} (1997).
\newblock Formulation, stability, and computation of traffic network equilibria
  as projected dynamical systems.
\newblock \textit{Journal of Optimization Theory and Applications}, \textbf{93}
  417--444.

\bibitem[{Zhong et~al.(2011)Zhong, Sumalee, Friesz and Lam}]{zhongal2011}
\textsc{Zhong, R.}, \textsc{Sumalee, A.}, \textsc{Friesz, T.} and \textsc{Lam,
  W.~H.} (2011).
\newblock Dynamic user equilibrium with side constraints for a traffic network:
  {T}heoretical development and numerical solution algorithm.
\newblock \textit{Transportation Research Part B: Methodological}, \textbf{45}
  1035--1061.

\bibitem[{Zhu and Marcotte(1994)}]{ZhuMarcotte1994}
\textsc{Zhu, D.} and \textsc{Marcotte, P.} (1994).
\newblock An extended descent framework for variational inequalities.
\newblock \textit{Journal of Optimization Theory and Applications}, \textbf{80}
  349--366.

\end{thebibliography}

\end{document}